
\documentclass[journal]{IEEEtran}
\usepackage[utf8]{inputenc}

\usepackage{booktabs, multirow, caption}
\usepackage{siunitx}
% -- global siunitx settings for "mean ± std" --------------------
\sisetup{
  detect-all,
  separate-uncertainty,
  table-number-alignment = center,
  table-figures-integer  = 3,
  table-figures-decimal  = 3,
  table-figures-uncertainty = 2,
  table-space-text-post = *,
  uncertainty-separator = {\,\text{±}\,}
}

\usepackage{lettrine}
\usepackage{hhline}
\usepackage{adjustbox}
\usepackage{graphicx}
\usepackage{xcolor}
\usepackage{float}
\graphicspath{images/}
\usepackage{mathrsfs}
\usepackage{caption}
\usepackage{cite}
\usepackage{subcaption}
\usepackage{comment}
\usepackage[linesnumbered,ruled,vlined]{algorithm2e}
\usepackage{adjustbox}    
\usepackage{pdfpages}
\usepackage{tabularx}
\usepackage{tikz}
\usetikzlibrary{shapes.arrows, positioning}
\usepackage{prooftrees}
\usepackage{amsthm}
\newtheorem{claim}{Claim}
\usepackage{algpseudocode}
\usepackage{array}
\renewcommand{\arraystretch}{1.2}

\setlength{\arrayrulewidth}{0.3mm}
\setlength{\tabcolsep}{2pt}
\renewcommand{\arraystretch}{1.0}

\usepackage[shortlabels]{enumitem}
\usepackage{booktabs}
\usepackage{hyperref}

\usepackage{amsmath}
\usepackage{amssymb}
\usepackage{stackrel}
\usepackage{mathtools}
\usepackage{pgfplots}
\pgfplotsset{compat=1.17}
\usepackage{subcaption}
\usepackage{caption}
\usepackage{amsmath}
\usepackage{lipsum}
 
\usepackage{amsthm}
\newtheorem{theorem}{Theorem}

\newtheorem{lemma}{Lemma}
\theoremstyle{definition}
\newtheorem{definition}{Definition}[section]

\newtheoremstyle{game}%
  {3pt}% Space above
  {3pt}% Space below
  {\itshape}% Body font
  {}% Indent amount
  {\bfseries}% Theorem head font
  {.}% Punctuation after theorem head
  { }% Space after theorem head
  {}% Theorem head spec
\theoremstyle{game}
\newtheorem{game}{Game}[section]
\usepackage{setspace}
\usepackage[a4paper, total={7.3in, 10.2in}]{geometry}
\renewcommand{\arraystretch}{1.5}

\usepackage[T1]{fontenc}
\usepackage{microtype}
\usepackage{booktabs, multirow, tabularx}
\usepackage{subcaption}
\usepackage[capitalise]{cleveref}
%\onehalfspacing
\hypersetup{
    colorlinks=true,
    linkcolor=blue,
    urlcolor=blue,
    citecolor=blue
}

% Custom command for consistent technical term formatting

\ifCLASSINFOpdf
  % \usepackage[pdftex]{graphicx}
  % declare the path(s) where your graphic files are
  % \graphicspath{{../pdf/}{../jpeg/}}
  % and their extensions so you won't have to specify these with
  % every instance of \includegraphics
  % \DeclareGraphicsExtensions{.pdf,.jpeg,.png}
\else
  % or other class option (dvipsone, dvipdf, if not using dvips). graphicx
  % will default to the driver specified in the system graphics.cfg if no
  % driver is specified.
  % \usepackage[dvips]{graphicx}
  % declare the path(s) where your graphic files are
  % \graphicspath{{../eps/}}
  % and their extensions so you won't have to specify these with
  % every instance of \includegraphics
  % \DeclareGraphicsExtensions{.eps}
\fi
% graphicx was written by David http://www.ctan.org/pkg/algorithmicx

\hyphenation{op-tical net-works semi-conduc-tor}

\begin{document}
%
% paper title
% Titles are generally capitalized except for words such as a, an, and, as,
% at, but, by, for, in, nor, of, on, or, the, to and up, which are usually
% not capitalized unless they are the first or last word of the title.
% Linebreaks \\ can be used within to get better formatting as desired.
% Do not put math or special symbols in the title.
\title{Multichannel Steganography: A Provably Secure Hybrid Steganographic Model for Secure Communication}

\author{Obinna Omego, %~\IEEEmembership{Member,~IEEE,}
        Michal Bosy%~\IEEEmembership{Fellow,~IEEE}% <-this % stops a space
\thanks{Obinna Omego and Micha\l \ Bosy are with the School of Computer Science and Mathematics, Faculty of Engineering, Computing and the Environment, Kingston Universit London.}% <-this % stops a space
\thanks{Obinna Omego e-mail: a.omego@Kingston.ac.uk; omegoobinna@gmail.com}
\thanks{Michal Bosy  e-mail: m.bosy@kingston.ac.uk.}% <-this % stops a space
\thanks{This manuscript is a preprint uploaded to arXiv.}}

% The paper headers
\markboth{\parbox{\textwidth}{This manuscript is a preprint uploaded to arXiv.}}{\parbox{\textwidth}{This manuscript is a preprint uploaded to arXiv.}}%

% make the title area
\maketitle

% As a general rule, do not put math, special symbols or citations
% in the abstract or keywords.

\begin{abstract}
Secure covert communication in hostile environments requires simultaneously achieving invisibility, provable security guarantees, and robustness against informed adversaries.  This paper presents a novel hybrid steganographic framework that unites cover synthesis and cover modification within a unified multichannel protocol.  A secret‐seeded PRNG drives a lightweight Markov‐chain generator to produce contextually plausible cover parameters, which are then masked with the payload and dispersed across independent channels.  The masked bit‐vector is imperceptibly embedded into conventional media via a variance‐aware least‐significant‐bit algorithm, ensuring that statistical properties remain within natural bounds.  We formalize a multichannel adversary model (MC-ATTACK) and prove that, under standard security assumptions, the adversary’s distinguishing advantage is negligible, thereby guaranteeing both confidentiality and integrity.  Empirical results corroborate these claims: local‐variance‐guided embedding yields near-lossless extraction (mean BER $<5\times10^{-3}$, correlation $>0.99$) with minimal perceptual distortion (PSNR $\approx100\,$dB, SSIM $>0.99$), while key‐based masking drives extraction success to zero (BER $\approx0.5$) for a fully informed adversary.  Comparative analysis demonstrates that purely distortion-free or invertible schemes fail under the same threat model, underscoring the necessity of hybrid designs.  The proposed approach advances high-assurance steganography by delivering an efficient, provably secure covert channel suitable for deployment in high-surveillance networks.
\end{abstract}

\begin{IEEEkeywords}
Hybrid Steganography, Multichannel security, Cover Modification, Cover Synthesis, Provable Security,  Key‐based Masking
\end{IEEEkeywords}

% For peer review papers, you can put extra information on the cover
% page as needed:
% \ifCLASSOPTIONpeerreview
% \begin{center} \bfseries EDICS Category: 3-BBND \end{center}
% \fi
%
% For peerreview papers, this IEEEtran command inserts a page break and
% creates the second title. It will be ignored for other modes.
\IEEEpeerreviewmaketitle

% The very first letter is a 2 line initial drop letter followed
% by the rest of the first word in caps.
% 
% form to use if the first word consists of a single letter:
% \IEEEPARstart{A}{demo} file is ....
% 
% form to use if you need the single drop letter followed by
% normal text (unknown if ever used by the IEEE):
% \IEEEPARstart{A}{}demo file is ....
% 
% Some journals put the first two words in caps:
% \IEEEPARstart{T}{his demo} file is ....
% 
% Here we have the typical use of a "T" for an initial drop letter
% and "HIS" in caps to complete the first word.
%\IEEEPARstart{T}{his} demo file is intended to serve as a ``starter file''
%for IEEE journal papers produced under \LaTeX\ using
%IEEEtran.cls version 1.8b and later.
% You must have at least 2 lines in the paragraph with the drop letter
% (should never be an issue)
%I wish you the best of success.

%\hfill mds
 
%\hfill August 26, 2015

\section{Introduction}
\label{sec:introduction}

\IEEEPARstart{E}{n}suring the security and undetectability of transmitted data has become increasingly critical in modern digital communication, where adversaries possess sophisticated steganalysis capabilities \cite{yu2024cross,Abdulmaged2023A,Wu2023Reversible,kombrink2024image}. Early steganographic methods often relied on cover modification (CMO) or cover synthesis (CSY) alone, yet both approaches face notable challenges in real-world scenarios. On the one hand, CMO techniques inherently alter an existing cover medium, risking detection if the changes deviate from typical cover statistics \cite{sedighi2015content, Qiao2021AdaptiveSB, Kodovsky2012Ensemble}. On the other hand, while purely synthesis-based schemes avoid modifying an existing cover, they may incur limited payloads or demand extensive neural-network training to generate sufficiently natural content \cite{fridrich2009steganography, Krätzer2018Steganography, Zhuo2020A, zhang2020generative}.

Meanwhile, steganalysis has advanced markedly, leveraging adaptive CNNs \cite{Qiao2021AdaptiveSB,Hu2023ImageSA,Zhao2023CalibrationbasedSF}, calibration-based feature extraction \cite{Zhao2023CalibrationbasedSF}, and other machine-learning-driven techniques \cite{Peng2023CNNbasedSD,Eid2022DigitalIS} to identify even subtle embedding artifacts. Consequently, neither simple cover modification nor direct cover synthesis alone can guarantee robust security in adversarial environments—particularly when an attacker intercepts communications across multiple channels.

Many contemporary protocols assume only a single communication medium. However, adversaries commonly monitor multiple channels—e.g., separate text and image streams in social media \cite{@Ion2013, Beato2014}—heightening the need for strategies that distribute secrets across independent conduits. Our approach advances beyond single-channel systems, enabling a layered defence even if one channel is compromised. As our security analysis will illustrate, \emph{multichannel replay} and \emph{multichannel man-in-the-middle} attempts remain ineffective.

%\subsection*{Motivation and Hybrid Approach}
In light of these developments, this work proposes a \emph{hybrid steganographic model} and a corresponding \emph{multichannel communication protocol} to mitigate threats posed by adversaries. By uniting the strengths of both cover modification (CMO) and cover synthesis (CSY), our hybrid model addresses the shortcomings of single-method approaches.

\begin{comment}
    In light of these developments, this work proposes a \emph{hybrid steganographic model} and a corresponding \emph{multichannel communication protocol} to mitigate threats posed by advanced adversaries. Our protocol achieves higher security by:
\begin{enumerate}
    \item \textbf{Synthesizing innocuous cover parameters}—keyed by a secret—to minimize the reliance on unaltered cover databases,
    \item \textbf{Embedding in a traditional cover} with minimal distortion, thus reducing conspicuous modifications, and
    \item \textbf{Distributing the message across multiple channels} to withstand replay and man-in-the-middle attacks.
\end{enumerate}
By uniting the strengths of both cover modification (CMO) and cover synthesis (CSY), our hybrid system \emph{(i)} \textcolor{red}{counters recent deep-learning-based steganalysis methods \cite{Qiao2021AdaptiveSB,Hu2023ImageSA,Zhao2023CalibrationbasedSF},} and \emph{(ii)} addresses the shortcomings of single-method approaches—namely, the detectability of pure CMO and the often limited capacity or domain restrictions of pure CSY.
\end{comment}

\noindent\textit{Research Contributions.} The contributions of this paper are as follows:
\begin{enumerate}
    \item \emph{Hybrid Steganographic System Model}: A formal definition and construction of a hybrid steganographic model that integrates cover synthesis and cover modification paradigms, accommodating larger payloads while preserving natural cover distributions.
    \item \emph{Multichannel Steganographic Protocol}: A protocol that disperses secret‐message fragments across three independent channels; by combining dynamic cover parameters with minimal pixel‐level modifications, the scheme achieves resilience against single‐channel interception and aligns with practical deployment scenarios.
    \item \emph{Rigorous Security Analysis under MC‐ATTACK}: The development of a novel multichannel adversary model capturing multi-channel replay, multi-channel man‐in‐the‐middle, and other attacks across multiple conduits. Security proofs (Claims~\ref{Claim1}–\ref{Claim4}) establish that both confidentiality and integrity are maintained with overwhelming probability under standard security assumptions.
\end{enumerate}

\noindent\textbf{Paper Organization.} 
Following this introduction, Section~\ref{Hybrid_stego_model} presents the \emph{Proposed Steganographic System Model}, motivating the synergy between cover modification and synthesis in mitigating advanced steganalysis. Section~\ref{Hybrid_Entropy_Steganographic_Communication_Protocol} outlines the \emph{Hybrid Entropy-Steganographic Communication Protocol}, detailing how keyed cover parameters, channel splitting, and integrity checks culminate in a robust, covert communication scheme. In Section~\ref{MMTM-Security Analysis}, we rigorously prove security against multichannel attacks, demonstrating security against confidentiality and Integrity attacks. The protocol evaluation metrics are presented in Section \ref{Evaluation of Protocol Metrics: Methodology and Results}, while the Section \ref{sec:comparison_evaluation} presents a comparison evaluation of Steganographic Models discussed in Sections \ref{sec:introduction} and \ref{sec:related-works}. The applications Section \ref{sec:applocations} discusses its practical uses. Finally, Section \ref{sec:conclusion} presents the conclusion.

\section{Related Works}
\label{sec:related-works}

To further situate our contribution among existing research, this section provides a concise examination of prior work in steganography and steganalysis, complementing the core motivations introduced earlier. We focus on the three principles of constructing steganographic systems: cover modification, cover selection, and cover synthesis—along with recent advances in multichannel protocols, highlighting how these approaches prefigure or contrast with our hybrid design.

%\subsection{Cover Modification, Selection, and Synthesis}
Steganography by \textbf{Cover Modification (CMO)} has been studied extensively for embedding data via subtle alterations to an existing cover \cite{@Christian1998, Westfeld1999F5, sedighi2015content, yu2024cross, kombrink2024image, Fridrich2007SPAM, Kodovsky2012Ensemble}. While these methods offer simplicity and potentially high payloads, they inevitably introduce embedding artifacts that can be detected by advanced steganalysis. In contrast, \textbf{Cover Selection (CSE)} avoids direct modifications by selecting a cover image that already encodes the secret pattern, albeit with lower capacity \cite{Zhang2011Coverless, Hu2017Coverless, wang2020cross}.  

\textbf{Cover Synthesis (CSY)}—which generates entirely new stego-objects from scratch where the secret message is inherently used to create the stego-object—was once chiefly theoretical but has grown more practical with the advent of generative adversarial networks and sophisticated synthesis techniques \cite{SteganoGAN2018, zhang2019invisible, zhang2020generative}.  Recent advances in CSY adopt coverless strategies that bypass traditional embedding. For example, Almuayqil et al. \cite{almuayqil2024stego} employ a variational autoencoder integrated with a GAN to map secret data into a continuous latent space and directly synthesize stego images with minimal distortion. Similarly, Wen et al. \cite{wen2024joint} combine coverless steganography with image transformation to produce camouflage images that resist tampering, while Li et al. \cite{li2023highcap} propose a high-capacity scheme that generates stego images capable of carrying full-size secret images without modifying any existing cover. Despite these benefits, such synthesis-based methods often face challenges including extensive training overhead, domain constraints, and imperfect reconstruction—limitations that motivate our hybrid approach, which combines cover synthesis with cover modification to leverage both undetectability and robustness.

%\subsection{Steganalysis and Emerging Challenges}
Recent works underscore how deep neural networks and carefully tuned feature extraction can reveal hidden data with increasing precision \cite{Qiao2021AdaptiveSB, Hu2023ImageSA, Zhao2023CalibrationbasedSF}. Such progress in steganalysis magnifies the need for hybrid or more adaptive strategies. Approaches employing invertible networks, minimum-entropy coupling, or robust embedding often tackle specific threats such as JPEG recompression or colour-space distortions \cite{Eid2022DigitalIS, Peng2023CNNbasedSD}. However, each approach typically focuses on either a narrow domain (e.g., wavelet-based embedding) or presumes a single-channel environment, overlooking the possibility of distributing data across multiple conduits.

%\subsection{Multichannel Protocols and Limitations}
Beyond single-image or single-channel embedding, prior works have proposed distributing a secret among diverse protocols or multiple parties \cite{Beato2014, @Ion2013, xue2017new, wang2020multichannel, Xu2022AudioVideo, Ma2021MultiCarrier}. While such strategies can enhance resilience—since compromising one channel alone does not fully reveal the message—they do not always address advanced replay or man-in-the-middle adversaries, especially if the protocol lacks robust integrity checks. Conversely, purely multichannel approaches with minimal steganographic rigour risk failing under strong adversarial models that perform synchronised analysis on all channels.

Moreover, practical scenarios such as covert financial transactions \cite{Dumitrescu2023}, restricted communications in censorship-heavy regions \cite{Figueira2022}, and critical data sharing in high-surveillance environments \cite{Kaspersky2017} each benefit from a dual emphasis on \emph{stealth} and \emph{integrity}. By combining the stealth advantage of minimal modifications with the flexibility of channel-splitting, this paper’s hybrid approach meets these demands more effectively than single-principle or single-channel solutions—offering a robust foundation for secure data hiding in the face of evolving adversarial capabilities.
%\subsection{Gap and Rationale for a Hybrid Entropy-Steganographic Protocol}
In summary, despite the variety of steganographic paradigms and some emerging multichannel solutions, a clear gap remains in \emph{fully uniting} cover modification and cover synthesis in a single method that also distributes data across multiple channels. Such an approach must ensure that neither channel nor stego-object alone can reveal the secret. This paper addresses that gap by introducing a novel hybrid model and communication protocol, detailed in Sections~\ref{Hybrid_stego_model}, \ref{Hybrid_Entropy_Steganographic_Communication_Protocol}, and \ref{MMTM-Security Analysis}, unifying flexible embedding and rigorous security analyses under a multichannel adversarial model.

\begin{figure*}[ht]
    \centering
    \includegraphics[width=0.85\linewidth]{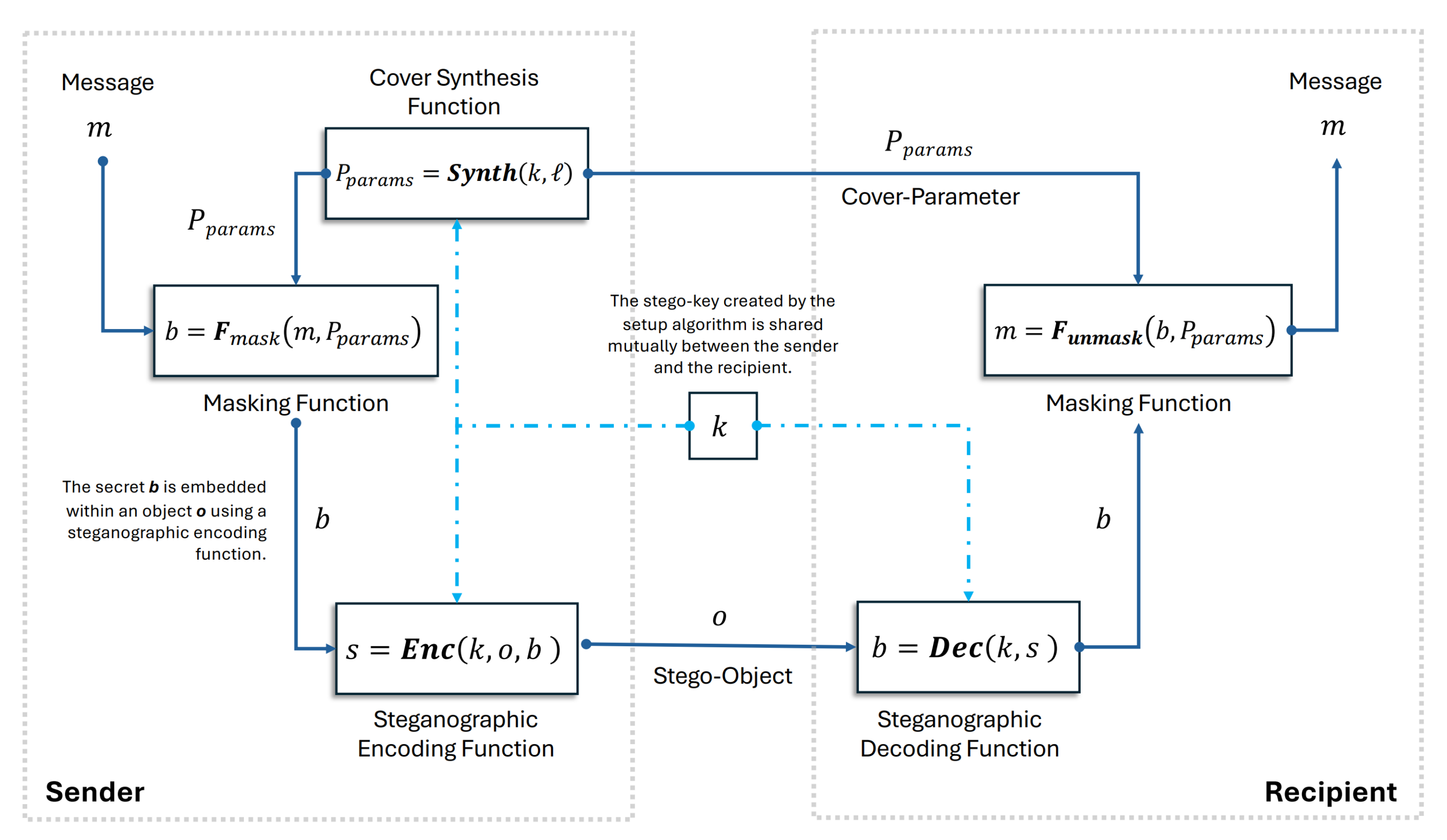}
    \caption{Illustration of the Hybrid Steganographic Model. The figure outlines (i) the generation of an innocuous cover parameter $P_{\mathsf{params}}$ from a shared key $k$, (ii) the masking of secret $m$ to produce $b$, (iii) the embedding of $b$ into a cover object $o$, and (iv) the extraction and unmasking process at the receiver end.}
    \label{fig:hybrid-entropy stego model}
\end{figure*}

\section{The Proposed Steganographic System Model}
\label{Hybrid_stego_model}

This section presents a novel steganographic framework that integrates two core principles for constructing steganographic systems: \emph{Steganography by Cover Modification (CMO)} and \emph{Steganography by Cover Synthesis (CSY)}. While CMO typically embeds a secret message by altering an existing cover object (e.g., an image or text), CSY involves generating stego-objects based on the secret message from scratch in a manner suggestive of natural content. Our approach merges these methods to address the limitations of each, thereby strengthening message confidentiality and reducing detectability. Specifically, the system first produces a \emph{cover parameter} (or a small, innocuous cover text) that is \emph{independent} of the secret message, and subsequently uses this parameter to mask the message before finally embedding it into a larger cover medium. Algorithm~\ref{alg:HybridStegoModel} outlines the operational flow of $\mathcal{S}_{\mathsf{Hyb}}$.

\subsection{Overview and Rationale}
The motivation for a hybrid approach arises from the tension between payload capacity and imperceptibility. On one hand, CMO typically offers a larger capacity but can leave detectable statistical artifacts in the modified cover. On the other hand, CSY is known for high undetectability but often struggles with capacity and practicality. By synthesizing a short \emph{parameter} through a key-driven process that appears fully natural yet contains \emph{no direct trace} of the secret, we ensure a plausible cover that draws minimal suspicion. Once this parameter is formed, a lightweight modification of an existing cover medium is performed to embed the final masked payload. Figure~\ref{fig:hybrid-entropy stego model} illustrates the overall workflow, highlighting the separation between the \emph{cover parameter} (synthesized purely from a shared key) and the \emph{original cover} used for the final embedding.

Section~\ref{Hybrid_Entropy_Steganographic_Communication_Protocol} will later illustrate how these steps fit into a communication protocol and how they seamlessly interact with integrity checks and multichannel architectures.

\subsection{Formal Definition of the Hybrid Model}

We formally define our system as follows:

\begin{definition}[Hybrid Steganographic Model]
\label{definition_Hybrid_stego}
\emph{A Hybrid Steganographic Model, denoted as $\mathcal{S}_{\mathsf{Hyb}}$, is a composition of two steganographic principles, \emph{cover synthesis} and \emph{cover modification}. Specifically,}
\begin{equation*}
\mathcal{S}_{\mathsf{Hyb}}
    \;=\;
    \mathcal{S}_{\mathsf{cs}}
    \,\mathbin{\stackrel{\circ}{\cup}}\,
    \mathcal{S}_{\mathsf{cm}},
\end{equation*}

\emph{and consists of the following six efficient algorithms, each operating in polynomial time with respect to a security parameter~$\lambda$:}

\begin{enumerate}[label=(\Alph*)]
    \item 
    $\boldsymbol{\mathsf{Setup}}(\lambda)$: 
    A probabilistic algorithm that takes as input a security parameter $\lambda$ and outputs a \emph{stego-key} $k\in \mathcal{K}$.

    \item 
    $\boldsymbol{\mathsf{Synth}}(k, \ell)$:
    A \emph{cover-generation} algorithm that takes a stego-key $k$ and a message length~$\ell$. It produces a \emph{cover parameter} $P_{\mathsf{params}} \in \{0,1\}^\ell$ by invoking a secure pseudorandom process. Formally,
    \begin{equation}\label{eq:Pparams}
       P_{\mathsf{params}} \;=\;
       F_{\mathsf{PRNG}}(k,\ell)
       \quad\text{with}\quad
       |P_{\mathsf{params}}|= \ell.
    \end{equation}
    
    \item 
    $\boldsymbol{F_{\mathsf{mask}}}(m, P_{\mathsf{params}})$:
    A deterministic masking algorithm that combines the secret message $m\in \mathcal{M}$ of length $\ell$ with the parameter $P_{\mathsf{params}}$ to yield an \emph{intermediary value} $b$. For instance, using bitwise~XOR,
    \begin{equation}\label{mask_a}
        b
        \;=\;
        F_{\mathsf{mask}}(m,\,P_{\mathsf{params}})
        \;=\;
        m
        \,\oplus\,
        P_{\mathsf{params}}.
    \end{equation}

    \item 
    $\boldsymbol{\mathsf{Enc}}(k,\,o,\,b)$:
    A probabilistic \emph{embedding} function that takes the stego-key $k\in\mathcal{K}$, a cover object $o\in\mathcal{O}$ (e.g., an image or text), and the masked message $b\in\{0,1\}^\ell$. It outputs a \emph{stego-object} $s \in \mathcal{S}$. Depending on the application, this step may alter selected bits of $o$ (cover modification) or embed $b$ into artificially generated or partially synthesized content.

    \item 
    $\boldsymbol{\mathsf{Dec}}(k,\,s)$:
    A deterministic \emph{decoding} function that takes the stego-key $k$ and a stego-object $s$, and returns the intermediary $b$. Symbolically, $b \leftarrow \text{Dec}(k,s)$.

    \item 
    $\boldsymbol{F_{\mathsf{unmask}}}(b, P_{\mathsf{params}})$:
    An unmasking algorithm which recovers the original secret $m$ from $b$ using the same parameter $P_{\mathsf{params}}$ employed in $F_{\mathsf{mask}}$. Thus,
    \begin{equation}\label{unmask}
    m\;=\;F_{\mathsf{unmask}}\bigl(b,\,P_{\mathsf{params}}\bigr)\;=\; b \;\oplus\; P_{\mathsf{params}} .
\end{equation}
    
\end{enumerate}

\emph{We require that for all $m$ with $|m|=\ell<\mathsf{Pol}(|o|)$ and for every key~$k$, the following correctness property holds:}
\begin{equation*}
m
=\; 
F_{\mathsf{unmask}}\!\Bigl(
    \text{\rm Dec}\bigl(k,\,
    \text{\rm Enc}\!\bigl(k,\,
    o,\,
    F_{\mathsf{mask}}\!(m,P_{\mathsf{params}})
    \bigr)\bigr),\,
    P_{\mathsf{params}}
\Bigr),
\end{equation*}
\emph{where $P_{\mathsf{params}}=\mathsf{Synth}(k,\ell)$ and $\mathsf{Pol}(\cdot)$ is a polynomial function indicating permissible payload sizes.}
\end{definition}

\begin{algorithm}[t]
\small
\caption{Hybrid Steganographic Model}
\label{alg:HybridStegoModel}
\KwIn{$\lambda$, $m$, $O$}
\KwOut{$S$, $m'$}

\BlankLine
\noindent\textbf{1. Setup:}\;
\quad $k \gets \mathsf{KeyGen}(\lambda)$\;
\quad $\ell \gets |m|$\;
\quad $P_{\mathsf{params}} \gets \mathsf{PRNG}(k,\ell)$\;

\BlankLine
\noindent\textbf{2. Mask the Secret Message:}\;
\quad $b \gets F_{\mathsf{mask}}(m,\ P_{\mathsf{params}})$\;

\BlankLine
\noindent\textbf{3. Encoding (Embedding):}\;
\quad $O_{\mathsf{bits}} \gets \mathsf{ConvertToBits}(O)$\;
\For(\tcp*[f]{Embed each bit of $b$}){$i = 1$ to $N$}{
    $O_{\mathsf{bits}}[i] \gets \mathsf{Embed}\bigl(b[i],\,O_{\mathsf{bits}}[i]\bigr)$\;
}
\quad $S \gets \mathsf{ReconstructObject}(O_{\mathsf{bits}})$\;

\BlankLine
\noindent\textbf{4. Transmit the Stego-Object:}\;
\quad \textbf{send} $S$ over a channel\;

\BlankLine
\noindent\textbf{5. Decoding (Extraction):}\;
\quad $O'_{\mathsf{bits}} \gets \mathsf{ConvertToBits}(S)$\;
\For{$i = 1$ to $N$}{
    $b'[i] \gets \mathsf{Extract}\bigl(O'_{\mathsf{bits}}[i]\bigr)$\;
}

\BlankLine
\noindent\textbf{6. Unmask the Secret:}\;
\quad $m' \gets F_{\mathsf{unmask}}(b',\ P_{\mathsf{params}})$\;

\BlankLine
\noindent\textbf{7. Output the Results:}\;
\If{$m = m'$}{
    \textsf{Display}(\text{``Message successfully recovered."})\;
}
\Else{
    \textsf{Display}(\text{``Error: $m \neq m'$"})\;
}
\end{algorithm}

The model thus ensures that, once the shared key $k$ is agreed upon and a cover $o$ is selected, the \emph{masked} secret can be embedded into $o$ while the \emph{final} stego-object $s$ remains indistinguishable from an innocent object under typical steganalysis (a property to be examined in subsequent sections).  For a visual representation of the model, please see Figure \ref{fig:hybrid-entropy stego model}.

A central aspect of $\mathcal{S}_{\mathsf{Hyb}}$ is the generation of $P_{\mathsf{params}}$ (cover-parameter) that does \emph{not} disclose any information about the secret $m$. 
By design, $P_{\mathsf{params}}$ comes entirely from a pseudo-random process $F_{\mathsf{PRNG}}(k,\ell)$ keyed by $k$. One pragmatic instantiation is to leverage a \textbf{Markov chain} seeded by a secure random generator - see Section \ref{Cover Message Analysis} for implementation details and results analysis, where equation ~\eqref{Markov} defines the process of generating cover messages.

For instance, if $P_{\mathsf{params}}$ is textual, the Markov model transitions from one token (word or character) to another based on a transition probability matrix. By seeding the model with $k$, both parties (sender and receiver) can deterministically regenerate identical $P_{\mathsf{params}}$ sequences without requiring additional transmissions. This is key to preserving secrecy and \emph{coherence} in the generated cover parameter.
Given $m \in \{0,1\}^\ell$ and $P_{\mathsf{params}} \in \{0,1\}^\ell$, several approaches that could be employed for computing the masking process in equation ~\eqref{mask_a}. However, one common and suitable choice is the bitwise XOR, $b = m \oplus P_{\mathsf{params}}$. This transformation yields $b$ that appears random under standard steganographic \cite{fridrich2009steganography} assumptions about $P_{\mathsf{params}}$. The unmasking function $F_{\mathsf{unmask}}$ simply re-applies XOR with $P_{\mathsf{params}}$ to recover $m$.

\begin{comment}
    \textcolor{blue}{Once $b$ is generated, we embed it in an existing cover object $o$ using minimal modifications. For example, in an image-based scenario, one may rely on a standard \emph{least-significant-bit (LSB)} embedding or a more advanced technique to avoid detection by steganalysis.  At the decoder side, the function $\mathsf{Dec}(k,\ s)$
recovers the masked secret. Finally, the unmasking function reconstructs the original message $m$.}
\end{comment}

\begin{theorem}[Correctness of the Hybrid Model]
\label{thm:CorrectnessHybrid}
Let $\mathcal{S}_{\mathsf{Hyb}}$ be the system in Definition~\ref{definition_Hybrid_stego}, with functions
$\{\mathsf{Setup},\;\mathsf{Synth},\;F_{\mathsf{mask}},\;\mathsf{Enc},\;\mathsf{Dec},\;F_{\mathsf{unmask}}\}$.
For any valid key~$k \leftarrow \mathsf{Setup}(\lambda)$, any message $m$ of length~$\ell$, and any cover $o$ where $|m|\le\mathsf{Pol}(|o|)$, the system correctly recovers $m$ provided that $s=\mathsf{Enc}(k,o,b)$ is received without adversarial alteration. Formally, $F_{\mathsf{unmask}}(\mathsf{Dec}(k,s),\,P_{\mathsf{params}})=m$ with probability~1.
\end{theorem}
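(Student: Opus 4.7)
The plan is to establish correctness by tracing the data-flow chain from $m$ on the sender's side through to $m$ on the receiver's side, verifying that each stage is a deterministic inverse of its counterpart. Since no randomness enters $F_{\mathsf{mask}}$, $F_{\mathsf{unmask}}$, $\mathsf{Dec}$, or the rederivation of $P_{\mathsf{params}}$ from $k$, the only probabilistic component is $\mathsf{Enc}$; yet its built-in correctness requirement guarantees a matching deterministic extraction under $\mathsf{Dec}$, so the overall success probability will be $1$.

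First I would note that $\mathsf{Synth}(k,\ell) = F_{\mathsf{PRNG}}(k,\ell)$ in equation~\eqref{eq:Pparams} is a deterministic function of its arguments, so sender and receiver, sharing the same stego-key $k$ and agreeing on $\ell = |m|$, regenerate the identical cover parameter $P_{\mathsf{params}}$ without any additional communication. Next I would invoke the involution property of bitwise XOR used in equations~\eqref{mask_a} and~\eqref{unmask}: for any $m, P_{\mathsf{params}} \in \{0,1\}^\ell$, $(m \oplus P_{\mathsf{params}}) \oplus P_{\mathsf{params}} = m$, so that $F_{\mathsf{unmask}}\bigl(F_{\mathsf{mask}}(m, P_{\mathsf{params}}),\, P_{\mathsf{params}}\bigr) = m$ as a purely algebraic identity.

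Finally, I would apply the correctness of the underlying embedding/extraction pair $(\mathsf{Enc}, \mathsf{Dec})$ from Definition~\ref{definition_Hybrid_stego}: provided $|b|=\ell \le \mathsf{Pol}(|o|)$ and $s$ arrives unaltered, the CMO specification guarantees $\mathsf{Dec}(k, \mathsf{Enc}(k, o, b)) = b$. Composing these three equalities yields
\begin{equation*}
F_{\mathsf{unmask}}\bigl(\mathsf{Dec}(k,\,s),\, P_{\mathsf{params}}\bigr) \;=\; F_{\mathsf{unmask}}(b,\,P_{\mathsf{params}}) \;=\; m,
\end{equation*}
which is precisely the claim.

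The main obstacle is not any of the algebraic steps, which are routine, but justifying that the embedding/extraction primitive is genuinely bit-for-bit lossless rather than merely near-lossless (the abstract reports an empirical BER of roughly $5\times 10^{-3}$, which would be incompatible with probability~$1$). For the variance-aware LSB routine described later, one must verify that the embedding locations selected by the local-variance map are reproducible at the decoder from $k$ and the received $s$ alone, that every payload bit has a unique site, and that the quantisation performed on cover samples is invertible on those sites. Under this invertibility assumption on the CMO component, the chain is exact and the recovery probability is indeed $1$; without it, the statement would have to be relaxed to a negligible-error form.
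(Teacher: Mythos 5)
Your proof is correct and follows essentially the same route as the paper's: both arguments rest on the XOR involution making $F_{\mathsf{mask}}$ and $F_{\mathsf{unmask}}$ mutual inverses with respect to $P_{\mathsf{params}}$, together with the assumed property that $\mathsf{Dec}(k,\cdot)$ exactly inverts $\mathsf{Enc}(k,\cdot,\cdot)$. Your closing caveat—that probability~$1$ recovery hinges on the embedding primitive being genuinely lossless, which the empirical BER figures only approximate—is a fair observation the paper's one-line proof glosses over, but it does not change the argument itself.
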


\begin{proof}
The result follows directly from the definition of $F_{\mathsf{mask}}$ and $F_{\mathsf{unmask}}$, which are mutual inverses with respect to $P_{\mathsf{params}}$. Since $\mathsf{Dec}(k,\cdot)$ precisely inverts $\mathsf{Enc}(k,\cdot,\cdot)$ for any valid key $k$, the bitstring $b$ is recovered faithfully. The final step un-applies the XOR (or analogous masking) to retrieve $m$. 
\end{proof}

\subsection{Practical Benefits and Synergy of the Hybrid Approach}
\paragraph*{Enhanced Undetectability} By synthesizing an \emph{innocuous} parameter $P_{\mathsf{params}}$ that is uncorrelated with the secret $m$, the system can embed only a short masked payload $b$ into the cover $o$. This approach inherently reduces the statistical footprint compared to classical CMO, thereby diminishing detection risks in high-surveillance contexts.
\paragraph*{Adaptive Payload Capacity}
While CSY alone can be limiting if the entire cover must be generated from scratch, the hybrid model permits adjusting how large $P_{\mathsf{params}}$ is, thus controlling how much data is \emph{masked} prior to embedding. At the same time, the actual \emph{embedding} step can be tuned by selecting more or fewer bits in $o$ for modification. This flexible design broadens the range of payload sizes and mediums possible.
\paragraph*{Robustness to Adaptive Attacks} Section~\ref{MMTM-Security Analysis} will illustrate that mixing the two principles complicates an attacker’s steganalysis. Even if an adversary suspects bitwise changes in the primary cover (CMO), they still face the challenge of unmasking the secret, which depends on a separate, seemingly unrelated parameter generated via a secure PRNG (CSY).

Overall, the Hybrid Steganographic System Model $\mathcal{S}_{\mathsf{Hyb}}$ provides a robust foundation for secure, multi-stage hiding of data. It is \emph{correct} by Theorem~\ref{thm:CorrectnessHybrid} and \emph{resilient} in practice, as later sections demonstrate via security analyses against multichannel adversaries and potential for real-world applications (see Sections~\ref{Hybrid_Entropy_Steganographic_Communication_Protocol}--\ref{sec:applocations}).

\begin{table}[ht]
\centering
\caption{Description of Notations Used in this Paper}
\label{tab:notation_description}

% Reduce the font size of the table
{\scriptsize
\begin{tabular}{|c|p{4.9cm}|}
\hline
\textbf{Symbol} & \textbf{Definition} \\ \hline
$m \in \mathcal{M}$ & Secret message, with $\mathcal{M}$ as the set of all possible messages. \\ \hline
$\ell = |m|$ & Length of secret message $M$. \\ \hline
$k \in \mathcal{K}$ & Secret key shared between parties, $\mathcal{K}$ as keyspace. \\ \hline
$P_{\mathsf{params}} \in \{0,1\}^\ell$ & Pseudo-random string of length $\ell$, generated by PRNG seeded with $k$. \\ \hline
$b \in \{0,1\}^\ell$ & Masked version of $M$. \\ \hline
$o \in \mathcal{O}$ & Original cover text; $\mathcal{O}$ is the cover set. \\ \hline
$\gamma_{i} \in \mathcal{O}'(1,2)$ & The first and second generated cover message element of the space $\mathcal{O}'$ of possible cover messages. \\ \hline
$s \in \mathcal{S}$ & Modified cover text with hidden message. \\ \hline
$c \in \mathcal{C}$ & Communication channel, $\mathcal{C}$ being the channel set. \\ \hline
$\mathcal{A}$ & Probabilistic polynomial-time (PPT) adversary. \\ \hline
$\text{PRNG}: \mathcal{K} \times \mathbb{N} \rightarrow \{0,1\}^\ell$ & Function generating pseudo-random strings. \\ \hline
\footnotesize{$F_{\text{mask}}: \mathcal{M} \cdot \{0,1\}^N \rightarrow \{0,1\}^\ell$} & Masking function combining $M$ and $P_{\mathsf{params}}$. \\ \hline
$\mathsf{Enc}: \{0,1\}^\ell \times \mathcal{O} \rightarrow \mathcal{S}$ & Embedding function for secret $b$ in cover $o$. \\ \hline
$\mathsf{Dec}: \mathcal{S} \rightarrow \{0,1\}^\ell$ & Extraction function retrieving $b$ from $s$. \\ \hline
{\scriptsize $F_{\text{unmask}}:\{0,1\}^\ell \cdot \{0,1\}^\ell \rightarrow \mathcal{M}$} & Unmasking function to recover $M$. \\ \hline
$\oplus$ & Exclusive OR. \\ \hline
$\mathbin{\stackrel{\circ}{\cup}}$ & Composition of systems. \\ \hline
$\mathsf{H}()$ & Hash function. \\ \hline
\( \mathcal{P}^{\mathsf{cs,\, cm}}_{\mathsf{hyb-stego}} \) & Hybrid steganographic protocol. \\ \hline
$\mathcal{S}_{\mathsf{Hyb}}$ & Hybrid model with cover synthesise and cover modification steganography. \\ \hline
\end{tabular}
}

\end{table}

\section{Hybrid Entropy-Steganographic Communication Protocol}
\label{Hybrid_Entropy_Steganographic_Communication_Protocol}

Building on the hybrid model presented in Section~\ref{Hybrid_stego_model}, we introduce a structured protocol for secure transmission of a secret message $m$ using multiple channels. The protocol, denoted by \(\mathcal{P}^{\mathsf{cs,\,cm}}_{\mathsf{hyb\text{-}stego}} = \mathcal{P}\bigl(\mathcal{S}_{\mathsf{Hyb}}\bigr)\), leverages \emph{cover synthesis} and \emph{cover modification} in tandem to reduce detectability while preserving robust security properties. This section outlines the protocol's design, clarifying its multi-phase workflow and the rationale behind each phase.

\subsection{Overview and Motivation}
The Hybrid Entropy-Steganographic Protocol addresses scenarios where conventional encryption alone might attract adversarial attention or prove insufficient against sophisticated attacks. By embedding secrets into multiple, seemingly innocuous cover messages, the protocol conceals both the \emph{existence} of sensitive data and the \emph{content} of the communication. It assumes that two parties, \emph{Amara} (the sender) and \emph{Ebere} (the recipient), share a master secret $V_{\mathsf{pri}}$ stored on secure devices called \emph{Autonomous Secure Transaction Modules (ASTMs)}. These devices carry out key derivation and masking (XOR) operations, ensuring minimal exposure of sensitive processes to adversarial inspection.

Figure~\ref{fig:ProtocolOverviewFig} provides a schematic illustration of the communication flow. Three distinct channels $C_1, C_2$, and $C_3$ allow for the simultaneous transmission of separate pieces of data, limiting the risk that a single compromised channel yields full knowledge of the secret message.
  \begin{figure*}[ht]
    \centering
    \includegraphics[width=0.74\linewidth]{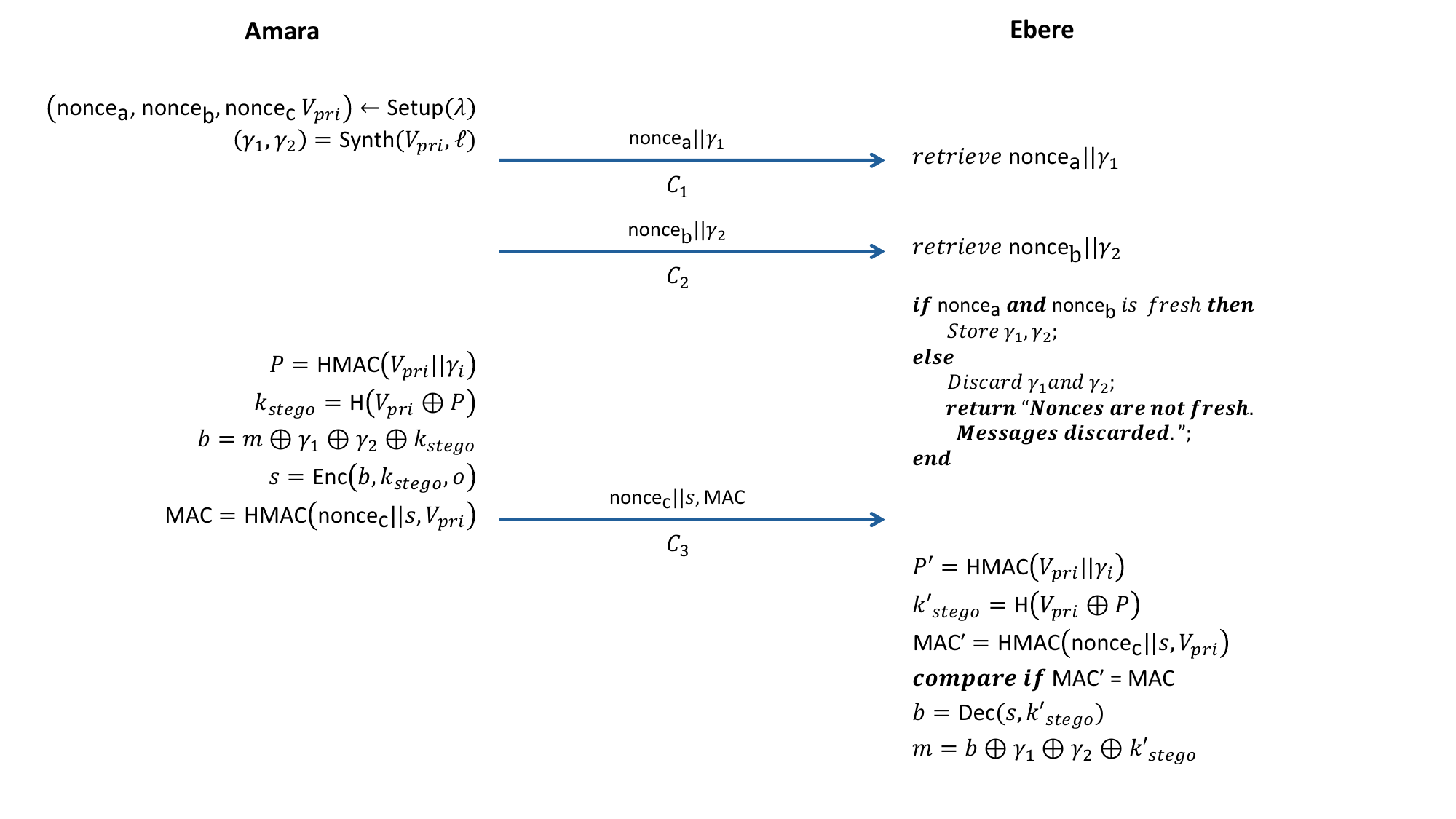}
    \caption{\small{Overview of the Hybrid Entropy-Steganographic Communication Protocol. The sender (Amara) uses a master secret $V_{\mathsf{pri}}$ and a cover-generation approach to produce two cover messages $(\gamma_{1}, \gamma_{2})$ and a masked secret $b$. These are transmitted over multiple channels, each accompanied by nonces and HMAC-based integrity checks. The recipient (Ebere) reconstructs $m$ by verifying authenticity and extracting hidden data.}}
    \label{fig:ProtocolOverviewFig}
\end{figure*}

\subsection{Protocol Description}

\subsubsection{Setup and Shared Parameters}
\label{sec:ProtocolSetup}
Both Amara and Ebere are equipped with ASTMs initialized with a master secret $V_{\mathsf{pri}}$. This secret is generated via a secure pseudorandom process, ensuring high entropy. They also agree on the format of cover messages (for instance, text-based or image-based). The communication channels $(C_1, C_2, C_3)$ are considered \emph{unsecured individually}, but the protocol’s resilience lies in their \emph{combined} use.

\subsubsection{Phases of the Protocol}
\label{ProtocolPhases}

\paragraph{ Setup Phase}
Amara establishes three channels $C_1$, $C_2$, and $C_3$ for transmission. To mitigate replay or replay-like attacks, she immediately generates two fresh nonces, $\mathsf{nonce}_a$ and $\mathsf{nonce}_b$, which will bind the cover messages to a specific session. These are essential for ensuring that repeated messages cannot be trivially reused by an adversary on the same or different channels.

\paragraph{Message Generation and Transmission}
Amara prepares three messages:
\begin{itemize}
  \item The secret message $m$ of length~$\ell$ that she wishes to send.
  \item Two \emph{cover messages} $(\gamma_{1}, \gamma_{2})$ of equal length~$\ell$, each produced by
    \[
      (\gamma_{1}, \gamma_{2})
      \;\gets\;
      \mathsf{Synth}\bigl(V_{\mathsf{pri}},\,\ell\bigr),
    \]
    using the parameters generated in Equation~\eqref{eq:Pparams}.  These cover messages appear statistically benign and do not hint at any embedded secret.
\end{itemize}

She concatenates $\mathsf{nonce}_a$ with $\gamma_{1}$ and $\mathsf{nonce}_b$ with $\gamma_{2}$, sending the pairs $(\mathsf{nonce}_a \| \gamma_{1})$ and $(\mathsf{nonce}_b \| \gamma_{2})$ over channels $C_1$ and $C_2$, respectively. Both messages can appear as ordinary short texts or data blocks, each with its distinct nonce appended at the start.

\paragraph{Message Masking and Encoding}
\label{SubsecMaskingEncoding}
To mask $m$, Amara derives an auxiliary secret $k_{\mathsf{stego}}$, which she will use for steganographic embedding. Algorithm~\ref{alg:MessageEncoding} demonstrates the main steps:

\begin{algorithm}[t]
\small
\caption{Message Encoding and MAC Generation}
\label{alg:MessageEncoding}
\KwIn{$\gamma_{1},\; \gamma_{2},\; m,\; V_{\mathsf{pri}},\; o$}
\KwOut{$(\mathsf{nonce}_c,\; s,\; \text{MAC})$}

\BlankLine
\noindent\textbf{1. Compute an intermediate HMAC}\;
$P \gets \mathrm{HMAC}(V_{\mathsf{pri}} \| \gamma_{i})$ 
%\tcp*[r]{Pick either $\gamma_{1}$ or $\gamma_{2}$}
\BlankLine

\noindent\textbf{2. Derive the stego-key}\;
$k_{\text{stego}} \gets \mathsf{H}(\,V_{\mathsf{pri}} \,\oplus\, P)\;
$\BlankLine

\noindent\textbf{3. Mask the secret }\;
$b \gets m \;\oplus\; \gamma_{1} \;\oplus\; \gamma_{2} \;\oplus\; k_{\text{stego}}\;
$\BlankLine

\noindent\textbf{4. Steganographically encode into cover object}\;
$s \gets \mathsf{Enc}\bigl(b,\; k_{\text{stego}},\; o \bigr)\;
$\BlankLine

\noindent\textbf{5. Generate a fresh nonce and a MAC}\;
$\mathsf{nonce}_c \gets \text{GenerateNonce}()\;
$\BlankLine
$\mathrm{MAC} \gets \mathrm{HMAC}(\,\mathsf{nonce}_c \,\|\, s,\; V_{\mathsf{pri}} )\;
$\BlankLine

\Return $(\mathsf{nonce}_c,\; s,\; \mathrm{MAC})$\;
\end{algorithm}

\textbf{Key steps in Algorithm~\ref{alg:MessageEncoding}:}
\begin{enumerate}
    \item \textbf{Auxiliary HMAC:} 
    She computes $P = \mathrm{HMAC}(V_{\mathsf{pri}} \| \gamma_{i})$ with one of the cover messages $\gamma_{i} \in \{\gamma_{1},\gamma_{2}\}$. This leverages the shared secret $V_{\mathsf{pri}}$ to produce an unguessable value $P$.
    
    \item \textbf{Stego-Key Derivation:} 
    She hashes the XOR of $V_{\mathsf{pri}}$ and $P$ to create $k_{\text{stego}}$. This key is critical for embedding operations and serves as an additional protective layer since an adversary would need both $V_{\mathsf{pri}}$ and knowledge of $\gamma_{i}$ to reconstruct $k_{\text{stego}}$.

    \item \textbf{Message Masking:}
Following Equation~\eqref{eq:Pparams}, the following is set
    \begin{equation}\label{mask_a_hybrid}
      P'_{\mathsf{params}} :=\; \gamma_{1} \;\oplus\; \gamma_{2} \;\oplus\; k_{\mathsf{stego}}.
    \end{equation}
    \noindent Substituting \eqref{mask_a_hybrid} into \eqref{eq:Pparams} gives
    \begin{multline}\label{mask}
  b \;=\; F_{\mathsf{mask}}\bigl(m,\,P'_{\mathsf{params}}\bigr)
    \;=\; m \;\oplus\; P'_{\mathsf{params}}
    \;=\; m \;\oplus\; \bigl(\gamma_{1} \\ \;\oplus\; \gamma_{2} \;\oplus\; k_{\mathsf{stego}}\bigr),
    \end{multline}

    which yields immediately

    \begin{equation}
  b \;=\; m \;\oplus\; \gamma_{1} \;\oplus\; \gamma_{2} \;\oplus\; k_{\mathsf{stego}}.
\end{equation}
    Here, $P'_{\mathsf{params}}$ is defined as the XOR of both cover messages and the stego‐key. The resulting $b$ is indistinguishable from random to any party not knowing $k_{\mathsf{stego}}$.
    
    \item \textbf{Stego-Object Generation:} 
    The function $\mathsf{Enc}(\cdot)$ embeds $b$ into an innocuous cover object $o$, such as an image or textual data, producing $s$:
    \begin{equation}\label{embed}
    s =\mathsf{Enc}\bigl(b,\,k_{\mathsf{stego}},\,o\bigr).
    \end{equation}
    This final stego-object $s$ is transmitted on channel $C_3$.

    \item \textbf{Integrity Code:}
    A fresh nonce $\mathsf{nonce}_c$ and a MAC computed  ensuring that modifications to $s$ or replays of old data are detectable:
    \begin{equation}\label{MAC}
        \mathrm{HMAC}(\mathsf{nonce}_c \,\|\, s,\; V_{\mathsf{pri}})
    \end{equation}
\end{enumerate}

\paragraph{Message Transmission}
Amara then sends $(\mathsf{nonce}_c, s, \mathrm{MAC})$ over channel $C_3$. The final transmitted components over $C_1$, $C_2$ $C_3$ respectively are:
\begin{multline*}
    \bigg\{\bigl(\mathsf{nonce}_a \| \gamma_{1}\bigr),
\bigl(\mathsf{nonce}_b \| \gamma_{2}\bigr) \bigl(\mathsf{nonce}_c, s, \mathrm{MAC}\bigr) \bigg\}.
\end{multline*}

Even if channels $C_1$ or $C_2$ (carrying cover messages) are compromised, the adversary would still be missing $k_{\text{stego}}$. Likewise, knowledge of $s$ on $C_3$ without $\gamma_{1}$ and $\gamma_{2}$ is insufficient to recover $m$.

\paragraph{Message Unmasking, Decoding, and Verification}
\label{protocol:decodingverification}
Upon receiving $\bigl(\gamma_{1},\gamma_{2},s,\mathsf{nonce}_c,\mathrm{MAC}\bigr)$ and the nonces $\mathsf{nonce}_a$, $\mathsf{nonce}_b$ across the three channels, Ebere performs:

\begin{enumerate}
    \item \textbf{Nonce Checks:} 
    Ebere verifies that $\mathsf{nonce}_a$, $\mathsf{nonce}_b$, and $\mathsf{nonce}_c$ are fresh. If any nonce fails (i.e., replay is suspected), she discards the session.

    \item \textbf{MAC Verification:} 
    She recomputes 
    \[
       \mathrm{MAC}' 
       \;=\; 
       \mathrm{HMAC}(\mathsf{nonce}_c \,\|\, s,\; V_{\mathsf{pri}}),
    \]
    checking $\mathrm{MAC}' \stackrel{?}{=} \mathrm{MAC}$. A mismatch indicates tampering, in which case Ebere aborts.

    \item \textbf{Stego-Key Reconstruction:}
    She reconstructs $k_{\text{stego}}$ via
    \[
       P' 
       = 
       \mathrm{HMAC}(V_{\mathsf{pri}} \,\|\, \gamma_{i}),
       \quad
       k'_{\text{stego}} 
       =
       \mathsf{H}(V_{\mathsf{pri}} \,\oplus\, P').
    \]
    Since $\gamma_{i}$ is known only to the legitimate participants, an adversary cannot replicate $k_{\text{stego}}$ without this information.
\item \textbf{Extraction and unmasking.}
      The receiver first recovers the masked string $b'=\mathsf{Dec}\bigl(k'_{\mathsf{stego}},\,s\bigr)$, and then applies the unmasking function.  By substituting \eqref{mask_a_hybrid} into \eqref{unmask}, we get
      \begin{multline}\label{unmasked_proto}
        m'= F_{\mathsf{unmask}}\bigl(b',\,P_{\mathsf{params}}\bigr)=b'\oplus P_{\mathsf{params}} =  
      b'\oplus \bigl(\gamma_{1} \\ \oplus \gamma_{2} \oplus k'_{\mathsf{stego}}\bigr).  
      \end{multline}
    Hence
    \begin{equation*}\label{unmask_a}
      b'=\mathsf{Dec}\bigl(k'_{\mathsf{stego}},\,s\bigr),
      \,
      m'=b'\oplus \gamma_{1} \oplus \gamma_{2}\oplus k'_{\mathsf{stego}}.
    \end{equation*}

\end{enumerate}

\begin{lemma}[Protocol Correctness]
Assume that all messages transmitted over channels $C_1$, $C_2$, and $C_3$ remain unaltered. Then, the recipient Ebere correctly recovers the secret message $m$.
\end{lemma}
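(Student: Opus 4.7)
The plan is to establish correctness by a direct step-by-step verification that every receiver operation deterministically inverts the corresponding sender operation, exploiting the fact that unaltered channels deliver $(\mathsf{nonce}_a\|\gamma_{1})$, $(\mathsf{nonce}_b\|\gamma_{2})$, and $(\mathsf{nonce}_c, s, \mathrm{MAC})$ exactly as produced by Amara. Because both parties share $V_{\mathsf{pri}}$, every keyed primitive (HMAC, $\mathsf{H}$, $\mathsf{Synth}$, PRNG) yields identical outputs on both sides, so the whole chain of derivations can be matched term by term.

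First I would dispatch the authenticity checks: the three nonces are fresh by hypothesis of a first, untampered session, and determinism of HMAC gives $\mathrm{MAC}' = \mathrm{HMAC}(\mathsf{nonce}_c\|s, V_{\mathsf{pri}}) = \mathrm{MAC}$, so Ebere proceeds past step~2 of Section~\ref{protocol:decodingverification}. Next, because $\gamma_{i}$ arrives unchanged, Ebere reproduces $P' = \mathrm{HMAC}(V_{\mathsf{pri}}\|\gamma_{i}) = P$ and hence $k'_{\mathsf{stego}} = \mathsf{H}(V_{\mathsf{pri}}\oplus P') = k_{\mathsf{stego}}$, establishing equality of the stego-keys on both ends.

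Then I would invoke Theorem~\ref{thm:CorrectnessHybrid} for the hybrid steganographic primitive to conclude $\mathsf{Dec}(k'_{\mathsf{stego}}, s) = b$, where $s = \mathsf{Enc}(b, k_{\mathsf{stego}}, o)$ was produced in Algorithm~\ref{alg:MessageEncoding}. Substituting this $b$ together with $k'_{\mathsf{stego}} = k_{\mathsf{stego}}$ into the unmasking formula~\eqref{unmasked_proto} yields $m' = (m\oplus \gamma_{1}\oplus \gamma_{2}\oplus k_{\mathsf{stego}})\oplus \gamma_{1}\oplus \gamma_{2}\oplus k'_{\mathsf{stego}} = m$ by the involution of XOR, which is the desired conclusion.

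The main obstacle is essentially bookkeeping rather than mathematics: I must fix a convention for the index $i \in \{1,2\}$ used when Amara forms $P = \mathrm{HMAC}(V_{\mathsf{pri}}\|\gamma_{i})$, so that Ebere selects the same $\gamma_{i}$ and derives the matching stego-key; without this convention two distinct keys could arise and decoding would silently fail. Beyond this subtlety the argument is routine, since HMAC, $\mathsf{H}$, $\mathsf{Synth}$, and $\oplus$ are deterministic, and the only non-trivial inversion step, namely $\mathsf{Dec}\circ\mathsf{Enc} = \mathrm{id}$ on the masked bitstring, is already guaranteed by Theorem~\ref{thm:CorrectnessHybrid}.
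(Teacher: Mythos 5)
Your proposal is correct and follows essentially the same route as the paper's own proof: regenerate $k_{\mathsf{stego}}$ from the intact $\gamma_{i}$ and $V_{\mathsf{pri}}$, use the invertibility of $\mathsf{Enc}/\mathsf{Dec}$ to recover $b$, and cancel the mask by the XOR involution via Equation~\eqref{unmasked_proto}. Your added remarks—explicitly invoking Theorem~\ref{thm:CorrectnessHybrid} for the $\mathsf{Dec}\circ\mathsf{Enc}$ step and fixing a convention for which $\gamma_{i}$ seeds the HMAC—are small refinements the paper leaves implicit, but the argument is the same.
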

\begin{proof}
Since the cover messages $\gamma_{1}$ and $\gamma_{2}$ are received intact along with the corresponding nonces, Ebere accurately regenerates the stego-key $k_{\mathsf{stego}}$ using $V_{\mathsf{pri}}$ and one of the cover messages. The encoding function $\mathsf{Enc}$ embeds the masked secret $b = m \oplus \gamma_{1} \oplus \gamma_{2} \oplus k_{\mathsf{stego}}$ into the cover object $o$; its inverse $\mathsf{Dec}$ reliably extracts $b$. Finally, unmasking by via Equation \eqref{unmasked_proto} recovers $m$ exactly. Thus, under error-free transmission, the protocol is correct.
\end{proof}

The design of the protocol is underpinned by several deliberate choices that together enhance security while maintaining operational efficiency. First, transmitting $(\gamma_{1},\gamma_{2})$ over $C_1,C_2$ independently from $s$ on $C_3$ ensures that no single channel’s compromise immediately reveals $m$. An adversary would need to intercept and analyze all three channels, then defeat the masking - see Section \ref{MMTM-Security Analysis} for more details.

\subsection{Key Management Justification}
\label{sec:key-management}

A pivotal design choice in this protocol is the reliance on a symmetric stego‐key rather than an asymmetric key‐exchange mechanism. Transmitting public key material, even if ostensibly benign, may risk revealing the presence of a covert channel to an adversary monitoring network traffic.  In contrast, symmetric stego‐keys can be independently derived by both communicating parties without explicit on‐channel transmission.  Two well‐studied mechanisms are especially suitable for this purpose.

First, \emph{Physical Unclonable Functions} (PUFs) leverage uncontrollable manufacturing variations to generate device‐unique “fingerprints” that serve as entropy sources for key derivation \cite{Cho2025ConcealablePUF,Ren2024PTSymmPUF}.  For instance, power‐up states of SRAM cells exhibit high unpredictability and have been demonstrated as robust key sources in resource‐constrained IoT devices \cite{holcomb2010power}.  By applying error‐correcting codes to the noisy PUF responses, both ends can agree on an identical symmetric key $k_{\mathsf{stego}}$ without exchanging any key material over the network.  Such ‘‘helper data’’ schemes correct bit‐errors in the raw PUF output while revealing no information about the final key  \cite{delvaux2014helper}.  Consequently, a PUF‐driven symmetric key remains hidden from passive observers and cannot be intercepted or replayed.

Second, \emph{Reciprocity‐based key generation} exploits the inherent randomness of wireless fading channels\cite{Yuan2024MulticarrierSKG,Hussain2024ReciprocitySKG}.  When two devices (Alice and Bob) measure the channel impulse response in rapid succession, they observe highly correlated but unique channel state information (CSI), denoted by
\[
h_{AB}(t) \;\approx\; h_{BA}(t+\tau), \quad \text{for small } \tau,
\]
owing to channel reciprocity in TDD systems \cite{ye2009infotheoretic}. Quantizing these measurements into bit‐streams and applying information reconciliation and privacy amplification yields a shared symmetric key $k_{\mathsf{stego}}$ with high entropy \cite{xu2021key}.  This process imposes no additional communication overhead, as measurement packets already traverse the physical layer; nor does it reveal key information to an eavesdropper, since rapid spatial decorrelation in multipath environments ensures that adversaries at different locations observe uncorrelated CSIs \cite{ye2009infotheoretic}.

Both approaches ensure the symmetric key is not transmitted directly, but reconstructed in situ from either device-intrinsic PUF responses or channel observations, preventing key fragments from crossing adversary-monitored networks. This significantly lowers computational overhead compared to public-key methods and reduces the exposure of sensitive materials, thereby maintaining the communication's covert nature. The protocol relies on a robust symmetric secret generated via PUF helper-data schemes or wireless channel reciprocity, which is crucial for the confidentiality and integrity of steganographic processes.

\begin{comment}
\subsection{Discussion of Design Choices and Rationale}
\label{sec:DesignRationale}

The design of the protocol is underpinned by several deliberate choices that together enhance security while maintaining operational efficiency. First, transmitting $(\gamma_{1},\gamma_{2})$ over $C_1,C_2$ independently from $s$ on $C_3$ ensures that no single channel’s compromise immediately reveals $m$. An adversary would need to intercept and analyze all three channels, then defeat the masking.

Furthermore, replay attacks and tampering pose a significant threat in multi-channel scenarios. Nonce inclusion $(\mathsf{nonce}_a, \mathsf{nonce}_b, \mathsf{nonce}_c)$ binds each transmission to a unique session. The MAC $\mathrm{HMAC}(\mathsf{nonce}_c \,\|\, s,\; V_{\mathsf{pri}})$ certifies authenticity and integrity. Any adversary lacking $V_{\mathsf{pri}}$ cannot forge this MAC for an altered $s$.

Finally, by combining the cover messages $(\gamma_{1},\gamma_{2})$ with $m$ under $k_{\text{stego}}$, the entropy of the final masked output $b$ is high. Even partial leakage of $\gamma_{1}$ or $\gamma_{2}$ by an eavesdropper does not suffice to unveil $m$; the attacker still faces the cryptographic hardness of recovering $k_{\text{stego}}$ and the residual cover component. Hence, the protocol employs both \emph{cryptographic} and \emph{steganographic} protections to ensure confidentiality.
\end{comment}

\subsection{Security Assumptions} \label{Assumptions}
The security and correctness of the protocol rest on the following assumptions:
\begin{itemize}
    \item \emph{Random Oracle Assumption:} The hash function $\mathsf{H}(\cdot)$ is assumed to behave as a random oracle, i.e., it produces uniformly random outputs for every distinct input. This assumption is a standard tool in security proofs and is well-documented in the literature \cite{Bellare1993, Goldreich2001}.
    \item \emph{Secrecy of $V_{\mathsf{pri}}$:} The master secret $V_{\mathsf{pri}}$, used to derive keys and auxiliary parameters, is assumed to remain confidential and is resistant to brute-force attacks or side-channel leakage. This assumption aligns with established practices in key management and is discussed extensively in \cite{katz2014introduction}.
    \item \emph{Existential Unforgeability of HMAC:} We assume that the HMAC construction is existentially unforgeable under chosen-message attacks. This property ensures that, without knowledge of the secret key, an adversary cannot generate a valid MAC for any new message. Rigorous proofs supporting this claim are provided in \cite{bellare1996keying}.
    \item \emph{Steganographic Security under Chosen-Hiddentext Attacks (SS-CHA):} The steganographic scheme is assumed to be secure even when the adversary can request the embedding of chosen messages. In other words, the outputs (stego-objects) do not reveal any useful information about the embedded secret beyond what is inherent in a random process. This assumption is grounded in the information-theoretic framework for steganography presented in \cite{@Christian1998} and further refined in \cite{Hopper2009}.
    \item \emph{Secure Multichannel Assumption:} We assume an adversary can intercept or compromise channels $C_1$, $C_2$, and $C_3$, but lacks unbounded computational resources. Despite having access to the data in transit, the adversary cannot break the steganographic constructions in polynomial time due to standard complexity assumptions. This is consistent with universally composable security paradigms that rely on established complexity-theoretic postulates \cite{canetti2001universally}.
\end{itemize}

These assumptions facilitate the rigorous security guarantees detailed in Section ~\ref{MMTM-Security Analysis}, where the multi‐channel framework’s resistance to multi-channel attacks is demonstrated, underscoring its practicality and robust security.

\begin{comment}
    \begin{figure*}
    \centering
    \includegraphics[width=0.9\linewidth]{Figure_2.pdf}
    \caption{A Hybrid Entropy-Steganographic Protocol Overview}
    \label{fig:enter-label}
\end{figure*}
\end{comment}

\section{Multichannel Attacks Security Analysis}
\label{MMTM-Security Analysis}

In the analysis of cryptographic and steganographic protocols, rigorous security proofs and well-defined adversary models are essential for demonstrating robustness  \cite{Hopper2009, Smart2016, RainerBohme2010, hosseinzadeh2020, Mittelbach2021, GeChunpengandGuo2022}. In this section, we analyze the security of the proposed hybrid protocol $\mathcal{P}^{\mathsf{cs,\,cm}}_{\mathsf{hyb-stego}}$ against a specific multichannel attack, namely the $\mathsf{MC-MitM}$ (Multichannel Man-in-the-Middle) attack. Our analysis is supported by the following elements:
\begin{enumerate}
    \item The security assumptions outlined in Section~\ref{Assumptions}.
    \item A novel adversary model, defined in Section~\ref{MMIattack}, which characterizes the capabilities and objectives of a bounded probabilistic polynomial-time (PPT) adversary.
\end{enumerate}

\subsection{Adversary Model}
\label{MMIattack}

To evaluate the security of $\mathcal{P}^{\mathsf{cs,\,cm}}_{\mathsf{hyb-stego}}$, we consider a multichannel adversary $\mathcal{A}$, modeled as a PPT machine. This adversary actively intercepts communications transmitted over a set of distinct channels, and its objective is to compromise both the confidentiality and integrity of the transmitted secret message $m$. We formally define the multichannel attack as follows:

\begin{definition}[Multichannel Attack]
\label{definition7}
A \emph{Multichannel Attack} (denoted $\mathsf{MC-ATTACKS}$) is one in which a PPT adversary $\mathcal{A}$ intercepts, reconstructs, and potentially alters messages exchanged between honest parties over multiple communication channels $C_1, C_2, \ldots, C_n \in \mathcal{C}$. In such an attack, $\mathcal{A}$ attempts to recover or modify the secret message $m$, which is transmitted via the protocol $\mathcal{P}^{\mathsf{cs,\,cm}}_{\mathsf{hyb-stego}}$, by simultaneously intercepting all involved channels.
\end{definition}

To capture the interaction between the honest parties and the adversary, we define the following game-based experiment:

\begin{game}[Multichannel Attack Game]
\label{game1}
The game models the scenario in which a challenger $\mathsf{Chal}$ interacts with the adversary $\mathcal{A}$ under the multichannel attack model. The game proceeds as follows:
\begin{enumerate}[label=\Alph*.]
    \item \textbf{Initialisation Phase:} $\mathsf{Chal}$ runs $\mathsf{Setup}(\lambda)$ and generates all necessary steganographic parameters, including the stego-key (drawn from $\mathcal{K}$) and cover parameters using $\mathsf{Synth}(V_{\mathsf{pri}})$; it then establishes the channels $C_1$, $C_2$, and $C_3$.
    
    \item \textbf{Transmission Phase:} $\mathsf{Chal}$ composes a secret message $m \in \mathcal{M}$, two cover messages $(\gamma_{1}, \gamma_{2}) \in \mathcal{M}'$, and a stego-object $s \in \mathcal{S}$ (which conceals $m$). These messages are sent over the channels $C_1$, $C_2$, and $C_3$, respectively.
    
    \item \textbf{Adversary Phase:} The adversary $\mathcal{A}$, having full oracle access, intercepts all messages transmitted over $C_1$, $C_2$, and $C_3$. In this phase, $\mathcal{A}$ assumes a man-in-the-middle posture, thereby modifying, replaying, or simply recording the transmissions.
    
    \item \textbf{Analysis Phase:} Based on the intercepted messages, $\mathcal{A}$ attempts to extract the masked secret or directly reconstruct the secret message $m$.
    
    \item \textbf{Reconstruction Phase:} Finally, $\mathcal{A}$ outputs a guess $m'$. The adversary wins if $m' = m$, indicating a breach in the confidentiality or integrity of the protocol.
\end{enumerate}
\end{game}

We define the adversary's advantage as the absolute difference between the probability that $\mathcal{A}$ successfully reconstructs $m$ and the baseline probability of random guessing. Formally, the advantage is given by:
\begin{comment}
  {\footnotesize
\begin{equation}
\label{adv_5}
\mathsf{Adv}^{\mathsf{MC-ATTACKS}}_{\mathcal{A}, \mathcal{P}^{\mathsf{cs,\,cm}}_{\mathsf{hyb-stego}}}(\lambda) 
=\left|\Pr\bigl[\mathsf{MC-ATTACKS}_{\mathcal{A}}(\lambda)=1\bigr]-\frac{1}{|\mathcal{M}|}\right|.
\end{equation}
}  
\end{comment}
\begin{equation}
\label{adv_5}
\resizebox{0.9\hsize}{!}{$
\mathsf{Adv}^{\mathsf{MC-ATTACKS}}_{\mathcal{A}, \mathcal{P}^{\mathsf{cs,\,cm}}_{\mathsf{hyb-stego}}}(\lambda) 
=\left|\Pr\bigl[\mathsf{MC-ATTACKS}_{\mathcal{A}}(\lambda)=1\bigr]-\frac{1}{|\mathcal{M}|}\right|
$}
\end{equation}
We say that the protocol is \emph{$\mathsf{MC-ATTACKS}$-secure} if this advantage is negligible in the security parameter $\lambda$. Moreover, under the assumption that $|\mathcal{K}|$ and $|\mathcal{M}|$ are exponentially large in $\lambda$, the advantage diminishes by approximately 50\% with each incremental increase in $\lambda$, i.e.,
\begin{equation}
\label{eq:adv_decay}
\mathsf{Adv}^{\mathsf{MC-ATTACKS}}_{\mathcal{A}, \mathcal{P}^{\mathsf{cs,\,cm}}_{\mathsf{hyb-stego}}}(\lambda+1)
\approx \frac{1}{2}\,\mathsf{Adv}^{\mathsf{MC-ATTACKS}}_{\mathcal{A}, \mathcal{P}^{\mathsf{cs,\,cm}}_{\mathsf{hyb-stego}}}(\lambda).
\end{equation}

The adversary model in Definition~\ref{definition7} and Game~\ref{game1} captures two critical security objectives:
\begin{enumerate}
    \item \textbf{Confidentiality:} The adversary should not be able to extract the secret message $m$, even when intercepting all messages $(\gamma_{1}, \gamma_{2}, s)$ across the channels.
    \item \textbf{Integrity:} The adversary should not be able to modify or forge messages (e.g., via replay, alteration, or forgery) without detection.
\end{enumerate}
By ensuring both objectives hold under the multichannel attack model, the protocol exhibits resilience against sophisticated attacks, including man-in-the-middle modifications. Sections~\ref{Confidentiality Analysis} and~\ref{Integrity Analysis} provide detailed probability bounds for an adversary's success in recovering or altering the secret message, based on the assumptions in Section~\ref{Assumptions}. These assumptions and the adversary model allow us to demonstrate that any PPT adversary's advantage, as defined in Equation~\ref{adv_5}, remains negligible.

\subsection{Confidentiality Analysis}
\label{Confidentiality Analysis}

Having formalized the multichannel adversary model in Section~\ref{MMIattack}, we now assess the confidentiality guarantees offered by the protocol $\mathcal{P}^{\mathsf{cs,\,cm}}_{\mathsf{hyb-stego}}$. Specifically, we focus on the secrecy of the stego-key $k_{\mathsf{stego}}$, as its protection is pivotal for preserving the confidentiality of the hidden message $m$ under the $\mathsf{MC-ATTACK}$ threat model.

\subsubsection{Confidentiality of the Stego-Key}
\label{Confidentiality of Secret Stego-key}

We begin by establishing a claim regarding the hardness of recovering $k_{\mathsf{stego}}$ when confronted by a PPT adversary $\mathcal{A}$ that can intercept data over multiple channels but operates within the random oracle assumption.

\begin{claim}
\label{Claim1}
\textbf{(Hardness of Stego-Key Extraction)}  
Let $\mathcal{A}$ be an adversary under the multichannel attack model $(\mathsf{MC-ATTACK})$ with access to a hash function $\mathsf{H}$ modeled as a random oracle. Then $\mathcal{A}$ is unable to computationally obtain $k_{\mathsf{stego}}$ except with negligible probability in the security parameter~$\lambda$. Formally, the adversary's advantage is bounded by
\begin{equation}
\label{adv_2}
\mathsf{Adv}^{\mathsf{KeyExtract}}_{\mathcal{A},\,\mathcal{P}^{\mathsf{cs,\,cm}}_{\mathsf{hyb-stego}}}(\lambda) 
\;=\; 
\biggl|\,
\Pr\bigl[\mathcal{A}\,\mathrm{obtains}\,k_{\mathsf{stego}}\bigr]
\;-\;
\frac{1}{|\mathcal{Y}|}
\biggr|,
\end{equation}
where $|\mathcal{Y}|$ is the cardinality of the output space of $\mathsf{H}$ (e.g., $2^{256}$ for a 256-bit hash function).
\end{claim}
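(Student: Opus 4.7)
The plan is to prove Claim \ref{Claim1} by a sequence of indistinguishable games that reduces the key-extraction advantage to three standard hardness assumptions from Section \ref{Assumptions}: the pseudorandomness of HMAC under a secret key, the secrecy of the master secret $V_{\mathsf{pri}}$, and the random-oracle behaviour of $\mathsf{H}$. Recall that $k_{\mathsf{stego}} = \mathsf{H}(V_{\mathsf{pri}} \oplus P)$ with $P = \mathrm{HMAC}(V_{\mathsf{pri}} \| \gamma_i)$; in the random-oracle model, the adversary can only learn this specific output of $\mathsf{H}$ either by actually querying the oracle at the matching input, or by blindly guessing a value in $\mathcal{Y}$. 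Hence the proof reduces to bounding the probability of the first event, since the second contributes exactly the $1/|\mathcal{Y}|$ baseline in Equation \eqref{adv_2}.

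First I would define Game $G_0$ as the real experiment of Game \ref{game1}, in which $\mathcal{A}$ observes the full transcript on channels $C_1, C_2, C_3$ and eventually outputs a candidate key. In Game $G_1$ the HMAC output $P$ is replaced by a uniformly random string $\tilde{P}$ of the same length; the PRF-security of HMAC keyed by $V_{\mathsf{pri}}$ bounds the transition by the HMAC distinguishing advantage. In Game $G_2$, I analogously replace the outer MAC transmitted on $C_3$ and the cover pair $(\gamma_1,\gamma_2)$ produced by $\mathsf{Synth}(V_{\mathsf{pri}},\cdot)$ with independently sampled uniform values, invoking PRF-security of HMAC and the pseudorandomness of the PRNG driving $\mathsf{Synth}$ respectively. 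After these hops the transcript is simulatable without any direct reference to $V_{\mathsf{pri}}$, so $V_{\mathsf{pri}} \oplus \tilde{P}$ is uniformly distributed from $\mathcal{A}$'s viewpoint. Consequently, the probability that $\mathcal{A}$ ever queries $\mathsf{H}$ at exactly this hidden point is bounded by $q_H \cdot 2^{-|V_{\mathsf{pri}}|}$, where $q_H$ is the number of random-oracle queries, and any remaining chance of outputting $k_{\mathsf{stego}}$ without such a query is exactly $1/|\mathcal{Y}|$.

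Combining these bounds yields an overall advantage that is negligible in $\lambda$, matching the form of Equation \eqref{adv_2} and thereby establishing the claim. The main obstacle I anticipate is the intermediate hop: the adversary's transcript mixes several values that all depend on $V_{\mathsf{pri}}$ simultaneously (the cover pair, the stego-object containing $b = m \oplus \gamma_1 \oplus \gamma_2 \oplus k_{\mathsf{stego}}$, and the outer MAC), so a careless reduction could double-count or leak partial information about $V_{\mathsf{pri}}$ through one component while attempting to simulate another. Handling this cleanly requires a careful hybrid argument that successively swaps each $V_{\mathsf{pri}}$-dependent component for a simulated value, each time invoking the appropriate PRF or PRNG assumption on a disjoint key-input pair, and then arguing that the resulting masked payload behaves as a one-time pad over $k_{\mathsf{stego}}$ so no residual information about the stego-key leaks into the extractor's view.
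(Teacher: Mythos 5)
Your proposal is correct, but it takes a genuinely different and more careful route than the paper. The paper's proof works directly in the random oracle model with a single union bound: it treats each of the adversary's $q(\lambda)$ oracle queries as an independent $1/|\mathcal{Y}|$ chance of producing $k_{\mathsf{stego}}$ and computes $\Pr[\mathcal{A} \text{ obtains } k_{\mathsf{stego}}] = 1 - (1 - 1/|\mathcal{Y}|)^{q(\lambda)} \approx q(\lambda)/|\mathcal{Y}|$, which is negligible for polynomial $q(\lambda)$. Notably, the paper simply \emph{asserts} that the probability a query $x_i$ equals the hidden input $V_{\mathsf{pri}} \oplus P$ is negligible ``without additional information,'' and then measures the success probability against the output-space size $|\mathcal{Y}|$ rather than the input space; it never addresses the fact that the adversary's transcript $(\gamma_1, \gamma_2, s, \mathrm{MAC})$ is itself a function of $V_{\mathsf{pri}}$. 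Your game-hopping argument fills exactly this gap: by successively replacing $P$, the outer MAC, and the $\mathsf{Synth}$ outputs with uniform values under the PRF-security of HMAC and the pseudorandomness of the PRNG, you justify that the transcript is simulatable independently of $V_{\mathsf{pri}}$, and you then correctly separate the two winning events — querying $\mathsf{H}$ at the hidden point (bounded by $q_H \cdot 2^{-|V_{\mathsf{pri}}|}$) versus blindly guessing an element of $\mathcal{Y}$ (the $1/|\mathcal{Y}|$ baseline that appears in Equation~\eqref{adv_2}). What your approach buys is a reduction that actually uses the stated assumptions of Section~\ref{Assumptions} (HMAC unforgeability/pseudorandomness, secrecy of $V_{\mathsf{pri}}$) rather than leaning entirely on the random-oracle idealization; what the paper's approach buys is brevity. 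Both yield a negligible advantage, so the claim stands either way.
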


\begin{proof}[Proof of Claim~\ref{Claim1}]
Consider a security game between a challenger, $\mathsf{Chal}$, and the adversary $\mathcal{A}$:
\begin{enumerate}[label=(\roman*)]
\item \emph{Adversary Phase:} 
$\mathcal{A}$ may query the random oracle $\mathsf{H}$ on arbitrary inputs $x_1, x_2, \dots, x_{q(\lambda)}$ a polynomial number of times, where $q(\lambda)$ is a polynomial in $\lambda$. Because $\mathsf{H}$ behaves like a perfect random oracle, each query’s output is uniformly and independently distributed in $\{0,1\}^{|\mathcal{Y}|}$.

\item \emph{Challenge Phase:} 
The challenger $\mathsf{Chal}$ selects a secret value $P$ at random and computes 
\begin{equation}
k_{\mathsf{stego}} 
\;=\; 
\mathsf{H}\bigl(V_{\mathsf{pri}} \oplus P\bigr), 
\end{equation}

where $V_{\mathsf{pri}}$ is the master secret unknown to $\mathcal{A}$. The adversary wins if any of its queries to the random oracle matches $k_{\mathsf{stego}}$.
\end{enumerate}

\noindent\textbf{Security Argument.}  
Since $\mathsf{H}$ is a random oracle, each output $\mathsf{H}(x_i)$ is a uniform element of the set $\{0,1\}^{|\mathcal{Y}|}$. The probability that a single query $x_i$ equals $V_{\mathsf{pri}} \oplus P$ is negligible without additional information. Thus, each query independently has probability $1/|\mathcal{Y}|$ of matching $k_{\mathsf{stego}}$.

For $q(\lambda)$ such queries, the probability that at least one query coincides with $k_{\mathsf{stego}}$ is:
\begin{equation}
\Pr\bigl[\mathcal{A} \mathrm{\ obtains\ }k_{\mathsf{stego}}\bigr] 
\;=\; 
1 - \Bigl(1 - \tfrac{1}{|\mathcal{Y}|}\Bigr)^{q(\lambda)}.
\end{equation}

Since $|\mathcal{Y}|$ is large (e.g., $2^{256}$), this probability remains negligible for polynomial $q(\lambda)$. In particular, expanding via
 \begin{equation}
            1-\Big( 1-\frac{1}{|\mathcal{Y}|}\Big)^{q(\lambda)} \approx1-e^{-\frac{q(\lambda)}{|\mathcal{Y}|}} \approx \frac{q(\lambda)}{|\mathcal{Y}|}
        \end{equation}
the difference from $1/|\mathcal{Y}|$ is negligible whenever $q(\lambda)$ is polynomial in $\lambda$. Hence, 
\begin{equation} \label{adv_3}
                         \mathsf{Adv}^{\mathsf{KeyExtract}}_{\mathcal{A}, \mathcal{P}^{\mathsf{cs, \,cm}}_{\mathsf{hyb-stego}}}(\lambda)= \Bigg| \frac{q(\lambda)}{|\mathcal{Y}|}- \frac{1}{\mathcal{Y}} \Bigg| \approx \Bigg| \frac{q(\lambda)}{|\mathcal{Y}|} \Bigg|,
                    \end{equation}
which vanishes for polynomial $q(\lambda)$ due to the exponential size of $|\mathcal{Y}|$. Therefore, an adversary’s chance of guessing $k_{\mathsf{stego}}$ is at most $q(\lambda)/|\mathcal{Y}| \ll 1$, proving the claim.
\end{proof}

\subsubsection{Message Confidentiality}
\label{Message Confidentiality}

Having established in Section~\ref{Confidentiality of Secret Stego-key} that $k_{\mathsf{stego}}$ remains infeasible for an adversary $\mathcal{A}$ to recover, we now analyze how this key protection ensures the confidentiality of the hidden message $m$. Even if $\mathcal{A}$ intercepts all transmitted cover messages $(\gamma_{1},\gamma_{2})$ and the stego-object $s$, message reconstruction should be no better than random guessing unless $k_{\mathsf{stego}}$ is known.

\begin{claim}
\label{Claim2}
\textbf{(Infeasibility of Message Reconstruction)}\;
Under the $\mathsf{MC-ATTACK}$ adversarial model and the \emph{Steganographic Security under Chosen-Hiding Attacks} (SS-CHA) assumption, an adversary $\mathcal{A}$ who obtains $(\gamma_{1}, \gamma_{2}, s)$ but lacks $k_{\mathsf{stego}}$ cannot recover $m$ except with negligible probability.
\end{claim}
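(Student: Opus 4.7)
The plan is to argue by a standard reduction combined with a short sequence of game hops, showing that any adversary who recovers $m$ with non-negligible probability can be converted either into an attacker against the SS-CHA security of $\mathsf{Enc}$ or into an attacker against the secrecy of $k_{\mathsf{stego}}$ guaranteed by Claim~\ref{Claim1}. Concretely, I will exploit the two independent layers of protection in \eqref{mask}: the steganographic embedding of $b$ inside $s$, and the XOR masking of $m$ by the pseudorandom value $\gamma_{1}\oplus\gamma_{2}\oplus k_{\mathsf{stego}}$. The key observation is that $m$ is hidden by a one-time-pad-style XOR with $k_{\mathsf{stego}}$, so even full recovery of $b$ leaves $\mathcal{A}$ facing the problem of guessing $k_{\mathsf{stego}}$.

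First I would set up Game~0 as the real MC-ATTACK experiment in which $\mathcal{A}$ is given $(\gamma_{1},\gamma_{2},s)$ and asked to output $m'$. In Game~1, the challenger replaces $s = \mathsf{Enc}(b,k_{\mathsf{stego}},o)$ with $\tilde{s} = \mathsf{Enc}(r,k_{\mathsf{stego}},o)$, where $r$ is a fresh uniform bitstring of length $\ell$. By the SS-CHA assumption stated in Section~\ref{Assumptions}, any distinguisher between Game~0 and Game~1 yields a chosen-hiddentext attacker against the embedding scheme, so the transition cost is negligible in $\lambda$. In Game~2, the view of $\mathcal{A}$ is now independent of $b$ (hence of $m$) conditional on $(\gamma_{1},\gamma_{2})$, because $\tilde s$ carries no information about $b$. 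At this point the adversary's best chance of outputting $m$ is to guess $m$ directly from $(\gamma_{1},\gamma_{2})$, which, by the pseudorandomness of $P_{\mathsf{params}}'=\gamma_{1}\oplus\gamma_{2}\oplus k_{\mathsf{stego}}$ and the independence of $m$ from $(\gamma_{1},\gamma_{2})$, succeeds with probability at most $1/|\mathcal{M}|$.

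Next, to make the argument tight I would also include the branch in which $\mathcal{A}$ tries to bypass SS-CHA by first recovering $k_{\mathsf{stego}}$ and then running $\mathsf{Dec}$. Any such strategy yields, by Claim~\ref{Claim1}, an extractor for $k_{\mathsf{stego}}$ with advantage bounded by $q(\lambda)/|\mathcal{Y}|$, which is again negligible. Combining the two branches via a union bound, I would conclude
\begin{equation*}
\mathsf{Adv}^{\mathsf{MsgRec}}_{\mathcal{A},\,\mathcal{P}^{\mathsf{cs,\,cm}}_{\mathsf{hyb\text{-}stego}}}(\lambda) \;\le\; \mathsf{Adv}^{\mathsf{SS\text{-}CHA}}_{\mathcal{A}}(\lambda) + \mathsf{Adv}^{\mathsf{KeyExtract}}_{\mathcal{A}}(\lambda) + \tfrac{1}{|\mathcal{M}|},
\end{equation*}
so that the distinguishing term $\mathsf{Adv}^{\mathsf{MsgRec}}-1/|\mathcal{M}|$ is negligible in $\lambda$.

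The main obstacle I anticipate is the careful handling of the double appearance of $k_{\mathsf{stego}}$: it is used both as the embedding key for $\mathsf{Enc}$ and as a factor inside the XOR mask, so the reduction must keep the simulator consistent when $k_{\mathsf{stego}}$ is either replaced by a random value (in the SS-CHA hop) or extracted (in the reduction to Claim~\ref{Claim1}). I would address this by sampling a \emph{single} fresh $k_{\mathsf{stego}}$ in the simulator and using the SS-CHA oracle as a black box for the embedding layer, so that the XOR-mask argument reduces cleanly to the one-time-pad property conditional on $k_{\mathsf{stego}}$ being unknown to $\mathcal{A}$.
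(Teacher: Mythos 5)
Your proposal is correct, but it takes a genuinely different route from the paper's proof. The paper argues directly: it fixes the adversary's strategy to ``guess a candidate key $k''_{\mathsf{stego}}$ and unmask,'' computes $\Pr[\mathcal{A}\text{ reconstructs }m]=1/|\mathcal{K}|$ by the law of total probability over uniformly guessed keys (the indicator $\Pr[m''=m\mid k''_{\mathsf{stego}}]$ being $1$ only for the unique correct key), and bounds the advantage by $\bigl|1/|\mathcal{K}|-1/|\mathcal{M}|\bigr|$. Notably, the paper's proof invokes Claim~\ref{Claim1} to rule out key recovery but never formally uses the SS-CHA assumption beyond citing it in the claim statement, and it implicitly assumes the adversary follows the ``extract-then-unmask'' template. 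Your game-hopping argument is more general and, in a sense, more faithful to the claim as stated: by hopping to $\tilde{s}=\mathsf{Enc}(r,k_{\mathsf{stego}},o)$ you actually exploit SS-CHA to sever the dependence of the adversary's view on $b$ for \emph{arbitrary} adversary strategies, then finish information-theoretically with the one-time-pad property of the mask, and you separately account for the key-extraction branch via Claim~\ref{Claim1} and a union bound. You also correctly distinguish the stego-object $s$ from the embedded bitstring $b$ (the paper's equation \eqref{mask-att} XORs $s$ itself with the covers and key, which only makes literal sense if $s$ is conflated with $\mathsf{Dec}(k,s)$), and you flag the double role of $k_{\mathsf{stego}}$ as both embedding key and mask component, a subtlety the paper does not address. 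What the paper's approach buys is brevity and an explicit closed-form bound $\bigl|1/|\mathcal{K}|-1/|\mathcal{M}|\bigr|$; what yours buys is a reduction that covers all PPT strategies and makes every stated assumption do actual work. Both reach the same negligible-advantage conclusion.
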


\begin{proof}[Proof of Claim~\ref{Claim2}]
Consider a security experiment in which $\mathcal{A}$ aims to decode $m$ without $k_{\mathsf{stego}}$:

\begin{enumerate}[label=(\roman*)]
\item
\emph{Setup and Challenge Phase:}  
A challenger $\mathsf{Chal}$ generates $\bigl(\gamma_{1}, \gamma_{2}, m, k_{\mathsf{stego}}\bigr)$ in accordance with the protocol $\mathcal{P}^{\mathsf{cs,\,cm}}_{\mathsf{hyb-stego}}$. The challenger constructs the stego-object via computing the process in equations ~\eqref{mask} and ~\eqref{embed}.
\begin{comment}
    s \;=\; \mathsf{Enc}\bigl(b,\,k_{\mathsf{stego}},\,o\bigr), 
\quad
b \;=\; m \;\oplus\; \gamma_{1} \;\oplus\; \gamma_{2} \;\oplus\; k_{\mathsf{stego}}.
\end{comment}

The tuple $(\gamma_{1}, \gamma_{2}, s)$ is then provided to $\mathcal{A}$, whose goal is to reconstruct $m$.

\item
\emph{Guess Phase:}
Since $\mathcal{A}$ lacks $k_{\mathsf{stego}}$, it may guess a candidate key $k''_{\mathsf{stego}} \in \mathcal{K}$ and form a guess 
\begin{equation}\label{mask-att}
   m'' \;=\; s \;\oplus\; \gamma_{1} \;\oplus\; \gamma_{2} \;\oplus\; k''_{\mathsf{stego}}. 
\end{equation}

The adversary wins if $m'' = m$.

\end{enumerate}

\noindent
\textbf{Security Argument.}
As previously established in Section~\ref{Confidentiality of Secret Stego-key}, $\mathcal{A}$’s ability to obtain $k_{\mathsf{stego}}$ under the random oracle assumption and the $\mathsf{MC-ATTACK}$ model is negligible. Consequently, $\mathcal{A}$ can only guess a candidate key $k''_{\mathsf{stego}}$. If $\mathcal{A}$ attempts to produce a guess $m''$, it computes the process in equation ~\eqref{mask-att}.

To evaluate $\mathcal{A}$’s probability of reconstructing $m$ correctly, we consider the total law of probability across all possible values of $k''_{\mathsf{stego}} \in \mathcal{K}$:
\begin{align*}
\Pr[\mathcal{A} \text{ reconstructs } m]
&=\sum_{k''_{\mathsf{stego}} \in \mathcal{K}}
\Pr\bigl[m'' = m \mid k''_{\mathsf{stego}}\bigr]
\cdot 
\Pr\bigl[k''_{\mathsf{stego}}\bigr].
\end{align*}
Since $\mathcal{A}$ does not know $k_{\mathsf{stego}}$, it must guess $k''_{\mathsf{stego}}$ uniformly at random:
\[
\Pr[k''_{\mathsf{stego}}] 
=\; 
\tfrac{1}{|\mathcal{K}|}.
\]
Furthermore,
\[
\Pr\bigl[m'' = m \,\mid\, k''_{\mathsf{stego}}\bigr]
=\;
\begin{cases}
1, & \text{if } k''_{\mathsf{stego}} = k_{\mathsf{stego}},\\[6pt]
0, & \text{otherwise}.
\end{cases}
\]
Hence, the only way $m'' = m$ is if $k''_{\mathsf{stego}}$ matches $k_{\mathsf{stego}}$. Thus,
\begin{align*}
\Pr[\mathcal{A} \text{ reconstructs } m]
&=\; 
\sum_{k''_{\mathsf{stego}} \,\in\, \mathcal{K}}
1 \cdot \tfrac{1}{|\mathcal{K}|} \quad\text{(\scriptsize{for the unique correct key only})}\\
&=\; 
\tfrac{1}{|\mathcal{K}|}.
\end{align*}

Given that random guessing in the message space $\mathcal{M}$ alone has success probability $\tfrac{1}{|\mathcal{M}|}$, we define the adversary’s advantage as
\begin{align*}
\mathsf{Adv}^{\mathsf{MsgRecon}}_{\mathcal{A},\,\mathcal{P}^{\mathsf{cs,\,cm}}_{\mathsf{hyb-stego}}}(\lambda)
&=\;
\bigl|
  \Pr[\mathcal{A}\,\mathrm{reconstructs}\,m]
  \;-\;
  \tfrac{1}{|\mathcal{M}|}
\bigr|\\[4pt]
&=\;
\Bigl|
  \tfrac{1}{|\mathcal{K}|}
  \;-\;
  \tfrac{1}{|\mathcal{M}|}
\Bigr|.
\end{align*}
For sufficiently large $|\mathcal{K}|$, this advantage remains negligible. Therefore, under the $\mathsf{MC-ATTACK}$ model and the assumptions of random oracle security and SS-CHA, any adversary lacking $k_{\mathsf{stego}}$ cannot reconstruct $m$ beyond random chance. This confirms that $\mathcal{P}^{\mathsf{cs,\,cm}}_{\mathsf{hyb-stego}}$ preserves message confidentiality even if $(\gamma_{1},\gamma_{2},s)$ are fully intercepted.
\end{proof}

\noindent\textbf{Implications.}
Claims~\ref{Claim1} and \ref{Claim2} jointly establish that (1) recovering $k_{\mathsf{stego}}$ under the random oracle assumption is infeasible, and (2) reconstructing $m$ without $k_{\mathsf{stego}}$ is no better than random guessing. Since the protocol’s confidentiality depends on the masking of $m$ with $k_{\mathsf{stego}}$ (in conjunction with $\gamma_{1}$ and $\gamma_{2}$), mere interception of all channels $(C_1,C_2,C_3)$ is insufficient to break confidentiality. As long as $|\mathcal{Y}|$ is large (e.g., a 256-bit hash space), the probability of an adversary inverting the random oracle or guessing $k_{\mathsf{stego}}$ remains negligible. Consequently, under the $\mathsf{MC-ATTACK}$ adversarial model, the protocol $\mathcal{P}^{\mathsf{cs,\,cm}}_{\mathsf{hyb-stego}}$ preserves the confidentiality of $m$. 

In Section~\ref{Integrity Analysis}, we will extend this analysis to the protocol’s integrity properties, demonstrating that the same design choices thwart attempts at tampering or replaying messages with non-negligible probability.

\subsection{Integrity Analysis}
\label{Integrity Analysis}

Having demonstrated in Section~\ref{Confidentiality Analysis} that the protocol $\mathcal{P}^{\mathsf{cs,\,cm}}_{\mathsf{hyb-stego}}$ preserves message confidentiality, we now examine its resilience against attacks aimed at \emph{violating integrity}. Specifically, we assess whether a multichannel adversary $\mathcal{A}$ can manipulate, replay, or forge transmissions without detection. Two canonical threats are considered: the \emph{multichannel replay attack} and the \emph{multichannel man-in-the-middle attack}. Both focus on undermining the \emph{authenticity} and \emph{freshness} of the communicated data.

\subsubsection{Multichannel Replay Attack}
\label{sec:multichannel_replay_attack}

In a multichannel replay attack, the adversary $\mathcal{A}$ intercepts legitimate messages $(\gamma_{1}, \gamma_{2}, s)$ and subsequently attempts to resend the same data—potentially on the same or different channels—to achieve unauthorized effects. By evaluating the protocol under this scenario, we verify that replay attempts are detected and rejected.

\begin{claim}
\label{Claim3}
\textbf{(Security Against Multichannel Replay)}\;
Under the Perfect Secrecy of MACs assumption and the $\mathsf{MC-ATTACK}$ adversarial model, the protocol $\mathcal{P}^{\mathsf{cs,\,cm}}_{\mathsf{hyb-stego}}$ prevents successful replay of intercepted messages across the channels $(C_1, C_2, C_3)$ with non-negligible probability.
\end{claim}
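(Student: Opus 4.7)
The plan is to reduce the success of a multichannel replay adversary $\mathcal{A}$ to either (i) breaking the nonce‐freshness mechanism that Ebere enforces during the verification phase described in Section~\ref{protocol:decodingverification}, or (ii) forging a valid HMAC tag under the secret $V_{\mathsf{pri}}$. Since both events are individually negligible under the assumptions stated in Section~\ref{Assumptions}, a union bound will yield the desired bound on $\mathsf{Adv}^{\mathsf{MC\text{-}Replay}}_{\mathcal{A},\mathcal{P}^{\mathsf{cs,\,cm}}_{\mathsf{hyb\text{-}stego}}}(\lambda)$.

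First, I would formalise the replay game: the challenger runs the honest protocol for a session~$i$, producing the tuple $\bigl((\mathsf{nonce}_a\|\gamma_1),(\mathsf{nonce}_b\|\gamma_2),(\mathsf{nonce}_c,s,\mathrm{MAC})\bigr)$ which $\mathcal{A}$ observes on $C_1,C_2,C_3$. The adversary then delivers to the receiver a transcript $\bigl((\mathsf{nonce}_a^{\ast}\|\gamma_1^{\ast}),(\mathsf{nonce}_b^{\ast}\|\gamma_2^{\ast}),(\mathsf{nonce}_c^{\ast},s^{\ast},\mathrm{MAC}^{\ast})\bigr)$ in session $j\neq i$, and wins if Ebere accepts it and recovers some message $m'\neq\bot$. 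I would then split the analysis into two cases.

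\textbf{Case 1 (pure replay):} $\mathcal{A}$ forwards the same triple to a later session. Because Ebere maintains a table of already‐used nonces (Step~1 of Section~\ref{protocol:decodingverification}), at least one of $\mathsf{nonce}_a,\mathsf{nonce}_b,\mathsf{nonce}_c$ will be rejected as stale, so the protocol aborts. The probability that three freshly sampled nonces of length $\lambda$ collide with ones observed in any of $q(\lambda)$ prior sessions is at most $3q(\lambda)/2^{\lambda}$, which is negligible.

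\textbf{Case 2 (replay with substitution):} $\mathcal{A}$ reuses the intercepted $s$ (and possibly $\gamma_1,\gamma_2$) but injects a fresh $\mathsf{nonce}_c^{\ast}$ together with a tag $\mathrm{MAC}^{\ast}$ that must verify as $\mathrm{HMAC}(\mathsf{nonce}_c^{\ast}\|s^{\ast},V_{\mathsf{pri}})$. Since $\mathcal{A}$ has never queried this new $(\mathsf{nonce}_c^{\ast},s^{\ast})$ pair to a legitimate oracle, acceptance would constitute an existential forgery against HMAC. Invoking the existential unforgeability assumption of Section~\ref{Assumptions}, I would bound this probability by the HMAC‐forgery advantage $\mathsf{Adv}^{\mathsf{EUF\text{-}CMA}}_{\mathcal{A}}(\lambda)$, which is negligible in~$\lambda$. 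Combining the two cases via the union bound gives
\begin{equation*}
\mathsf{Adv}^{\mathsf{MC\text{-}Replay}}_{\mathcal{A},\mathcal{P}^{\mathsf{cs,\,cm}}_{\mathsf{hyb\text{-}stego}}}(\lambda)\;\le\;\frac{3q(\lambda)}{2^{\lambda}}\;+\;\mathsf{Adv}^{\mathsf{EUF\text{-}CMA}}_{\mathcal{A}}(\lambda),
\end{equation*}
which is negligible, establishing the claim.

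The main obstacle I anticipate is handling the fact that $\mathcal{A}$ controls which cover messages arrive on $C_1,C_2$ independently of what arrives on $C_3$: the adversary could, for instance, splice cover parts from different sessions. I would therefore need to argue carefully that the MAC is computed over $\mathsf{nonce}_c\|s$ and binds only the $C_3$ payload, so that mismatching $\gamma_1,\gamma_2$ from other sessions does not help—either the receiver's decoded $m'$ is meaningless (by Claim~\ref{Claim2}, which already ensures confidentiality is not broken without $k_{\mathsf{stego}}$), or Ebere's nonce bookkeeping catches the reuse on $C_1$ or $C_2$. Making this cross‐channel argument airtight, rather than the HMAC reduction itself, will be the delicate step.
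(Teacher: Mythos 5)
Your proposal is correct, but it organizes the argument differently from the paper. The paper decomposes the replay event \emph{by channel}: it defines $\mathsf{E}_1,\mathsf{E}_2,\mathsf{E}_3$ as successful replays on $C_1,C_2,C_3$ respectively, writes $\Pr[\mathsf{E}_1\cap\mathsf{E}_2\cap\mathsf{E}_3]=\Pr[\mathsf{E}_1]\cdot\Pr[\mathsf{E}_2\mid\mathsf{E}_1]\cdot\Pr[\mathsf{E}_3\mid\mathsf{E}_1\cap\mathsf{E}_2]$, and argues each factor is negligible (the first two via nonce freshness on the cover messages, the third via the MAC binding $\mathsf{nonce}_c$ to $s$), concluding that the product is negligible. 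You instead decompose \emph{by attack type} — pure replay versus replay-with-substitution — and bound each branch by a concrete quantity: a nonce-collision term of the form $3q(\lambda)/2^{\lambda}$ and an explicit reduction to the EUF-CMA security of HMAC, combined by a union bound. Your route buys two things the paper's proof does not make explicit: a quantitative bound that names the adversary's query budget $q(\lambda)$, and a genuine reduction (acceptance of a fresh $(\mathsf{nonce}_c^{\ast},s^{\ast})$ pair would constitute an HMAC forgery) rather than the paper's somewhat informal assertion that $\Pr[\mathsf{E}_1]$ is negligible ``because $\gamma_1$ is a cover-message with a nonce.'' The paper's channel-wise product, conversely, emphasizes that \emph{all three} channels must be subverted simultaneously, which is the multichannel design point it wants to highlight; note, though, that a product of three negligible factors is stronger than needed, whereas your union bound needs each branch to be individually negligible, which you verify. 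Your closing remark about cross-channel splicing identifies a real gap that the paper's proof also leaves untreated — the MAC only binds the $C_3$ payload — and your resolution (either the nonce bookkeeping on $C_1,C_2$ catches the reuse, or the spliced transcript decodes to garbage by Claim~\ref{Claim2}) is the right way to close it.
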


\begin{proof}[Proof of Claim~\ref{Claim3}]
\noindent\textbf{Transmission and Adversary Phases.}
In a typical session, the sender (Amara) generates fresh nonces $(\mathsf{nonce}_a, \mathsf{nonce}_b, \mathsf{nonce}_c)$, builds messages $(\mathsf{nonce}_a \| \gamma_{1})$ on $C_1$, $(\mathsf{nonce}_b \| \gamma_{2})$ on $C_2$, and $(\mathsf{nonce}_c,\,s,\,\mathrm{MAC})$ on $C_3$, where $\mathrm{MAC}=\mathrm{HMAC}\bigl(\mathsf{nonce}_c \,\|\, s,\, V_{\mathsf{pri}}\bigr)$. An adversary $\mathcal{A}$ intercepting these transmissions may attempt to replay them. However, because the nonces are chosen uniformly from $\{0,1\}^\ell$, the probability that a replayed nonce is mistakenly accepted as fresh is extremely low. In fact, the probability is bounded by:
\[
\Pr\bigl[\text{replayed nonce accepted}\bigr] \approx \frac{1}{2^\ell}.
\]
For instance, if $\ell = 128$, then 
\[
\Pr\bigl[\text{nonce not detected as replay}\bigr] \leq 2^{-128},
\]
which is negligible even for highly resourced adversaries.

\noindent\textbf{Security Argument.}
Considering an attack on all three channels  $(C_1, C_2, C_3)$,  the probability of a successful attack given an attempt by \(\mathcal{A}\) is denoted as \(\mathsf{Pr}[\mathsf{MC-R}_{\mathcal{A}} (\lambda)=1 | \mathcal{A} ]\), and the probability of an unsuccessful attack is \(\mathsf{Pr}[\mathsf{MC-R}_{\mathcal{A}} (\lambda)=0 | \mathcal{A}]\). Analysing $\mathsf{Pr}[{\scriptstyle \mathsf{MC-R}_{\mathcal{A}}} \normalsize (\normalsize \lambda)=1 | \mathcal{A}]$, three independent events are considered: $\mathsf{E}_1$ is the successful replay attack on $C_1$, $\mathsf{E}_2$ is the successful replay attack on $C_2$, and $\mathsf{E}_3$ is the successful replay attack on $C_3$. Here, the interest lies in the simultaneous occurrence of the events $\mathsf{E}_1, \mathsf{E}_2$, and $\mathsf{E}_3$, which can be denoted as $(\mathsf{E}_1 \cap \mathsf{E}_2 \cap \mathsf{E}_3)$. Since these events are regarded as dependent, the probability of these occurrences:
    \begin{align}\label{occurrences}
        \mathsf{Pr}[\mathsf{E}_1 \cap \mathsf{E}_2 \cap \mathsf{E}_3] =  \mathsf{Pr}[\mathsf{E}_1]\cdot \mathsf{Pr}[\mathsf{E}_2| \mathsf{E}_1]\cdot \mathsf{Pr}[\mathsf{E}_3|\mathsf{E}_1 \cap \mathsf{E}_2]
    \end{align}
Where $\mathsf{Pr}[\mathsf{E_2|E_1}]$ is the conditional probability of replaying $\gamma_{2}$ after $\gamma_{1}$ is replayed. The notation the $\mathsf{E}_1]\cdot \mathsf{Pr}[\mathsf{E}_3|\mathsf{E}_1 \cap \mathsf{E}_2]$ be the conditional probability of replaying $s$ successfully after $\gamma_{2}$ after $\gamma_{1}$ are replayed.
 
Assuming the protocol functions as intended, $\mathsf{Pr}[\mathsf{E}_1]$ is negligible because  $\gamma_{1}$ is a cover-message designed to be innocuously genuine with a nonce $(\mathsf{nonce}_{a}\|\gamma_{1})$, and does not leak any information about $m$. Secondly, $\mathsf{Pr}[\mathsf{E}_2| \mathsf{E}_1]$ is negligible for the same reasons as $\mathsf{Pr}[\mathsf{E}_1]$. 

Under the assumption of Perfect Secrecy of the MAC, the probability \(\mathsf{Pr}[\mathsf{E}_3\,|\, \mathsf{E}_1 \cap \mathsf{E}]\) is negligible. Acceptance of the stego-object \( s \) as authentic hinges on fresh and consistent nonces \(\mathsf{nonce}_c\) that pass verification checks enforced by the MAC's security properties. By including the freshly generated nonce \( \mathsf{nonce}_c \) within the MAC, any replayed stego-object \( s' \) over channel \( C_3 \) with a reused nonce will be detected and rejected.  Therefore, the probability of $\mathcal{A}$ attacking successfully given an attack attempt is as follows:
\begin{align*}
     \mathsf{Pr}[{\scriptstyle \mathsf{MC-R}_{\mathcal{A}}} \normalsize (\normalsize \lambda)=1 | \mathcal{A}] & =  \mathsf{Pr}[\mathsf{E}_1]\cdot \mathsf{Pr}[\mathsf{E}_2| \mathsf{E}_1]\cdot \mathsf{Pr}[\mathsf{E}_3|\mathsf{E}_1 \cap \mathsf{E}_2] \\
        &  \leq \mathsf{negl}(\lambda) \cdot \mathsf{negl}(\lambda)  \cdot \mathsf{negl}(\lambda) \\
         & = \mathsf{negl}(\lambda)
\end{align*}
 Therefore, the advantage of $\mathcal{A}$ attacking  $C_3$ is expressed as:
     \begin{equation*} \label{adv_4}
                         \mathsf{Adv}^{\mathsf{MC-REPLAY}}_{\mathcal{A}, \mathcal{P}^{\mathsf{cs, \,cm}}_{\mathsf{hyb-stego}}}(\lambda)= \Big| \mathsf{Pr}[\mathsf{MC-R}_{\mathcal{A},} (\lambda)=1]\Big| \leq \mathsf{negl}(\lambda),
                    \end{equation*}

Hence, replay attempts are effectively thwarted by nonce freshness checks and robust MAC verification. Even though the adversary may intercept the entire set of transmissions, simply resending the data does not allow $\mathcal{A}$ to bypass the protocol’s integrity safeguards.
\end{proof}

\subsubsection{Multichannel Man-in-the-Middle Attack}
\label{Multichannel Man-in-the-Middle Attack}
With the protocol’s resilience to multichannel replay attacks now established (Section~\ref{sec:multichannel_replay_attack}), we proceed to examine its security in the context of \emph{Multichannel Man-in-the-Middle (MC-MitM) Attacks}. In such an attack, the adversary $\mathcal{A}$ intercepts messages across all three channels $(C_1, C_2, C_3)$ and attempts to modify the secret message $m$ into a new message $m'$ before it reaches the recipient.

\begin{claim}
\label{Claim4}
\textbf{(Security Against Multichannel MitM)}\;
Under the Perfect Secrecy of MACs and the $\mathsf{MC-ATTACK}$ adversarial model, an adversary $\mathcal{A}$ mounting an MC-MitM attack across $C_1, C_2,$ and $C_3$ has a negligible advantage in altering $m$ to $m'$ without knowledge of $k_{\mathsf{stego}}$.
\end{claim}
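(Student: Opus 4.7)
The plan is to mirror the structure of Claim~\ref{Claim3}, decomposing the MC-MitM attack into three dependent events corresponding to tampering on channels $C_1$, $C_2$, and $C_3$, and then bounding each factor by a negligible quantity. Concretely, let $\mathsf{E}_1'$ denote the event that $\mathcal{A}$ replaces $(\mathsf{nonce}_a \| \gamma_{1})$ on $C_1$ with a modified $(\mathsf{nonce}_a' \| \gamma_{1}')$; similarly let $\mathsf{E}_2'$ be the event of tampering on $C_2$, and $\mathsf{E}_3'$ the event of producing a triple $(\mathsf{nonce}_c',\, s',\, \mathrm{MAC}')$ that Ebere still accepts. The adversary's success requires that Ebere ultimately recover some chosen $m' \neq m$ without triggering any verification failure, so the advantage is dominated by $\Pr[\mathsf{E}_1' \cap \mathsf{E}_2' \cap \mathsf{E}_3']$.

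Next, I would analyse each factor in turn. For $\mathsf{E}_3'$, the tag of Equation~\eqref{MAC} binds $\mathsf{nonce}_c$ and $s$ under $V_{\mathsf{pri}}$; by the existential unforgeability of HMAC together with the secrecy of $V_{\mathsf{pri}}$ (Section~\ref{Assumptions}), the probability that $\mathcal{A}$ produces a valid $\mathrm{MAC}'$ for any $(s',\mathsf{nonce}_c') \neq (s,\mathsf{nonce}_c)$ is at most $\mathsf{negl}(\lambda)$. For $\mathsf{E}_1'$ and $\mathsf{E}_2'$, altering $\gamma_{1}$ or $\gamma_{2}$ not only perturbs the cover inputs used in the unmasking step of Equation~\eqref{unmasked_proto}, but also forces Ebere to re-derive a different stego-key $k'_{\mathsf{stego}} = \mathsf{H}(V_{\mathsf{pri}} \oplus \mathrm{HMAC}(V_{\mathsf{pri}} \| \gamma_i'))$. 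By Claim~\ref{Claim1} and the random oracle assumption on $\mathsf{H}$, $\mathcal{A}$ cannot predict this new key without $V_{\mathsf{pri}}$, and hence cannot steer the reconstruction to any adversary-chosen $m'$ except with negligible probability.

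Combining these bounds via the dependent product identity used in Equation~\eqref{occurrences} gives
\begin{equation*}
\Pr[\mathsf{E}_1' \cap \mathsf{E}_2' \cap \mathsf{E}_3'] \;\leq\; \mathsf{negl}(\lambda)\cdot\mathsf{negl}(\lambda)\cdot\mathsf{negl}(\lambda) \;=\; \mathsf{negl}(\lambda),
\end{equation*}
from which $\mathsf{Adv}^{\mathsf{MC-MitM}}_{\mathcal{A},\,\mathcal{P}^{\mathsf{cs,\,cm}}_{\mathsf{hyb-stego}}}(\lambda) \leq \mathsf{negl}(\lambda)$ would follow directly. The main obstacle I anticipate is not the isolated MAC-forgery case on $C_3$, which is standard, but rather the possibility that $\mathcal{A}$ executes a \emph{coordinated} modification of $\gamma_{1}$, $\gamma_{2}$, and $s$ simultaneously, hoping that the induced XOR changes cancel out at Ebere's end to yield a specific target $m'$. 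A careful reduction is needed to argue that any such coordinated tampering either (i) requires forging the HMAC that binds $s$ on $C_3$, or (ii) requires predicting the output of the random oracle $\mathsf{H}$ on an input containing the unknown $V_{\mathsf{pri}}$, both of which are precluded by the assumptions of Section~\ref{Assumptions} and by Claim~\ref{Claim1}.
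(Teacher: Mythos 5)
Your proposal is correct in substance and rests on the same two pillars as the paper's proof --- Claim~\ref{Claim1} for the infeasibility of obtaining $k_{\mathsf{stego}}$, and the unforgeability of $\mathrm{HMAC}$ under the hidden $V_{\mathsf{pri}}$ for the integrity of $(\mathsf{nonce}_c, s, \mathrm{MAC})$ --- but it takes a different route to the bound. You decompose the attack \emph{per channel}, reusing the dependent-product structure of Equation~\eqref{occurrences} from Claim~\ref{Claim3}, and multiply three negligible factors. The paper instead decomposes \emph{per attack resource}: the adversary must either guess the stego-key or forge a fresh MAC, giving the additive bound $\tfrac{1}{|\mathcal{K}|} + \tfrac{1}{|\mathsf{MAC_{space}}|}$ and then the halving relation in $\lambda$. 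Your route has the advantage of making explicit that tampering with $\gamma_{1}$ or $\gamma_{2}$ alone is already futile because it perturbs the derived key $k'_{\mathsf{stego}} = \mathsf{H}(V_{\mathsf{pri}} \oplus \mathrm{HMAC}(V_{\mathsf{pri}} \| \gamma_i'))$ unpredictably --- a point the paper leaves implicit --- and your closing paragraph on coordinated cancellation addresses a case the paper does not discuss at all. One step you should tighten: writing the adversary's advantage as dominated by $\Pr[\mathsf{E}_1' \cap \mathsf{E}_2' \cap \mathsf{E}_3']$ is not right as stated, since a MitM attack need not tamper with all three channels (e.g., modifying only $s$ on $C_3$ lies outside that intersection yet is a legitimate attack path). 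The correct statement is a union bound over the possible tampering patterns, each of which you have already shown reduces to MAC forgery or random-oracle prediction and is therefore negligible; with that repair the conclusion follows, and the overall argument is at least as complete as the paper's.
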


\begin{proof}
Consider the events in equation~\eqref{occurrences} that $\mathcal{A}$ intercepts $(\mathsf{nonce}_a \| \gamma_{1})$ from $C_1$, $(\mathsf{nonce}_b \| \gamma_{2})$ from $C_2$, and $s$ from $C_3$, aiming to replace the original message $m$ with $m'$. 

Recall equations ~\eqref{mask}, ~\eqref{embed} and ~\eqref{MAC}: therefore, for $\mathcal{A}$ to successfully inject a new $m'$, it must construct a modified masked value $m' \oplus \gamma_{1} \oplus \gamma_{2} \oplus k_{\mathsf{stego}}$ that remains valid under the protocol’s steganographic encoding and MAC-checking procedures.

\noindent\textbf{Security Argument.}
As analyzed in Section~\ref{Confidentiality of Secret Stego-key} (Claim~\ref{Claim1}), the probability that $\mathcal{A}$ can derive or guess the correct $k_{\mathsf{stego}}$ is negligible. Without $k_{\mathsf{stego}}$, the adversary cannot correctly embed its chosen $m'$ into a valid stego-object $\widehat{s}$ nor generate a correct MAC for it.

Even if $\mathcal{A}$ intercepts the original MAC, it does not reveal information about the secret key $V_{\mathsf{pri}}$. Moreover, forging a MAC for the new message $(m',\,\widehat{s})$ requires knowledge of $V_{\mathsf{pri}}$, which remains undisclosed. Hence, any attempt to modify $(\mathsf{nonce}_c,\,s,\,\mathrm{MAC})$ to $(\mathsf{nonce}_c,\,\widehat{s},\,\widehat{\mathrm{MAC}})$ and have it accepted is computationally infeasible.

To quantify $\mathcal{A}$’s advantage in altering $m$:
\[
\mathsf{Adv}^\mathsf{MC-MitM}_{\mathcal{A},\,\mathcal{P}^{\mathsf{cs,\,cm}}_{\mathsf{hyb-stego}}}
=\;
\Pr\bigl[\mathcal{A}\;\text{modifies }m\to m'\bigr],
\]
which can be decomposed into (a) guessing $k_{\mathsf{stego}}$ and (b) forging the new MAC. Let $|\mathcal{K}|$ be the key space and $|\mathsf{MAC_{space}}|$ the space of all possible MAC outputs. The combined probability of success is dominated by 
\[
\frac{1}{|\mathcal{K}|} + \frac{1}{|\mathsf{MAC_{space}}|}.
\]
As the security parameter $\lambda$ increases, typically $|\mathcal{K}|,\,|\mathsf{MAC_{space}}|\!=\!2^\lambda$, driving these probabilities to negligible levels. Formally,
\begin{align*}
        \mathsf{Adv}^\mathsf{MC-MitM}_{\mathcal{A}, \mathcal{P}^{\mathsf{cs, \,cm}}_{\mathsf{hyb-stego}}}(\lambda + 1)= \frac{1}{2|\mathcal{K}|}+ \frac{1}{2|\mathsf{MAC_{space}}|} \\ \approx \frac{1}{2}\Bigg(  \frac{1}{|\mathcal{K}|}+ \frac{1}{|\mathsf{MAC_{space}}|}\Bigg) \\ =\frac{1}{2}\mathsf{Adv}^\mathsf{MC-MitM}_{\mathcal{A}, \mathcal{P}^{\mathsf{cs, \,cm}}_{\mathsf{hyb-stego}}}
    \end{align*}

\noindent
Hence, the adversary’s success in altering $m$ to $m'$ across all three channels simultaneously is negligible when it lacks $k_{\mathsf{stego}}$. This confirms that $\mathcal{P}^{\mathsf{cs,\,cm}}_{\mathsf{hyb-stego}}$ resists multichannel man-in-the-middle attacks under standard security assumptions.
\end{proof}

\subsubsection{Forgery Analysis}With the protocol's resilience against MC-MitM, we evaluate the protocol's security against message forgery attempts. This evaluation assumes Claims \ref{Claim2} and \ref{Claim4}. Under these assumptions, $\mathcal{A}$ has a negligible advantage in forging a message that matches $m$:

\begin{proof}
    Consider a scenario in which $\mathcal{A}$ seeks to generate a valid MAC for a forged message $m'$ without access to $k_{\mathsf{stego}}$, leveraging other intercepted information. Since $\mathcal{A}$ does not possess $k_{\mathsf{stego}}$ and the protocol's design prevents leakage of sensitive information through MACs, the likelihood of successfully forging a message is limited to the probability of correctly guessing a valid nonce and MAC combination. This probability is represented as:
    \begin{align*}
        \mathsf{Adv}^\mathsf{FORGE}_{\mathcal{A}, \mathcal{P}^{\mathsf{cs, \,cm}}_{\mathsf{hyb-stego}}}= \frac{1}{|\mathcal{N}|} + \frac{1}{|\mathcal{M}|} 
    \end{align*}
    Any incorrect guess of the nonce associated with the MAC renders the attempt futile, further diminishing $\mathcal{A}$'s likelihood of success.

    Hence, under the Perfect Secrecy of MACs assumption, the advantage of the adversary $\mathcal{A}$ in compromising $\mathcal{P}^{\mathsf{cs, \,cm}}_{\mathsf{hyb-stego}}$ is negligible within the $\mathsf{MC-ATTACK}$ adversarial model of Section \ref{MMIattack}.
\end{proof}

\subsection{Security Against Multichannel Attacks}
\label{sec:CombinedMultichannelSecurity}

Having rigorously analysed both \emph{confidentiality} (Section~\ref{Confidentiality Analysis}) and \emph{integrity} (Section~\ref{Integrity Analysis}) under the $\mathsf{MC-ATTACK}$ adversarial model, we now consolidate our findings into a single, overarching security theorem. Specifically, we reference the following results:

\begin{itemize}
    \item \textbf{Claim~\ref{Claim1}} (Stego-Key Confidentiality): 
    An adversary $\mathcal{A}$ cannot recover the secret key $k_{\mathsf{stego}}$ with non-negligible probability.
    \item \textbf{Claim~\ref{Claim2}} (Message Confidentiality): 
    Even with $(\gamma_{1}, \gamma_{2}, s)$ fully intercepted, reconstructing $m$ remains equivalent to random guessing unless $k_{\mathsf{stego}}$ is known.
    \item \textbf{Claim~\ref{Claim3}} (Protection Against Replay): 
    Nonce freshness and MAC verification effectively thwart attempts to replay older transmissions across channels $C_1$, $C_2$, and $C_3$.
    \item \textbf{Claim~\ref{Claim4}} (MitM Security): 
    Modifying $m$ to $m'$ in a man-in-the-middle setting is infeasible without the correct stego-key, and forging the MAC also remains computationally infeasible.
\end{itemize}

These four claims collectively address the main avenues of attack described in Sections \ref{Confidentiality of Secret Stego-key}, \ref{Message Confidentiality}, \ref{sec:multichannel_replay_attack}, and \ref{Multichannel Man-in-the-Middle Attack}
We show that each vector of attack fails with overwhelming probability, thus ensuring the protocol remains robust against a multichannel adversary.

\begin{theorem}[Security Against Multichannel Attacks]
\label{theoremMTM}
Let $\mathcal{P}^{\mathsf{cs,\,cm}}_{\mathsf{hyb-stego}}$ be the hybrid steganographic protocol from Section~\ref{Hybrid_Entropy_Steganographic_Communication_Protocol}, operating under the assumptions in  Section ~\ref{Assumptions} (random oracle, MAC unforgeability, SS-CHA, etc.). Then for any PPT adversary $\mathcal{A}$ in the $\mathsf{MC-ATTACK}$ model, the advantage of compromising either the confidentiality or the integrity of $m$ is negligible in the security parameter $\lambda$. Formally,
\begin{multline}
\mathsf{Adv}^{\mathsf{MC-ATTACK}}_{\mathcal{A},\,\mathcal{P}^{\mathsf{cs,\,cm}}_{\mathsf{hyb-stego}}}(\lambda)
=\;
\max\Bigl\{\,
\mathsf{Adv}^{\mathsf{KeyExtract}},\,
\mathsf{Adv}^{\mathsf{MsgRecon}},\\ \,
\mathsf{Adv}^{\mathsf{MC-Replay}},\,
\mathsf{Adv}^{\mathsf{MC-MitM}}
\Bigr\}
\;\leq\;
\mathsf{negl}(\lambda).
\end{multline}
\end{theorem}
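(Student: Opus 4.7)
The plan is to proceed by a reduction-style case analysis that exhausts every avenue through which a PPT adversary $\mathcal{A}$ could win the $\mathsf{MC-ATTACK}$ game of Game~\ref{game1}, and then to bound the overall advantage by the maximum of the four advantages already established in Claims~\ref{Claim1}--\ref{Claim4}. First, I would unfold the winning condition $\mathsf{MC-ATTACKS}_{\mathcal{A}}(\lambda)=1$: by Definition~\ref{definition7}, this requires either that $\mathcal{A}$ outputs a message $m'=m$ (a confidentiality break) or that $\mathcal{A}$ induces the receiver to accept a different message $\widetilde{m}\neq m$ (an integrity break). This dichotomy lets me decompose the winning event into disjoint sub-events and apply a union bound.

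Next, I would refine each branch of the dichotomy into the sub-events handled by the four claims. On the confidentiality side, any strategy that recovers $m$ either (i) first recovers $k_{\mathsf{stego}}$ and then inverts the masking (bounded by $\mathsf{Adv}^{\mathsf{KeyExtract}}$ via Claim~\ref{Claim1}), or (ii) produces $m$ directly from $(\gamma_{1},\gamma_{2},s)$ without ever computing $k_{\mathsf{stego}}$ (bounded by $\mathsf{Adv}^{\mathsf{MsgRecon}}$ via Claim~\ref{Claim2}). On the integrity side, any accepted modification of the transcript either (iii) consists of a replay of previously observed ciphertexts on some channel (bounded by $\mathsf{Adv}^{\mathsf{MC-Replay}}$ via Claim~\ref{Claim3}), or (iv) involves injecting a freshly forged stego-object with a valid MAC (bounded by $\mathsf{Adv}^{\mathsf{MC-MitM}}$ via Claim~\ref{Claim4}, which in turn invokes HMAC unforgeability from Section~\ref{Assumptions}). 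The union bound then yields
\begin{equation*}
\mathsf{Adv}^{\mathsf{MC-ATTACK}}_{\mathcal{A},\,\mathcal{P}^{\mathsf{cs,\,cm}}_{\mathsf{hyb-stego}}}(\lambda) \;\le\; \sum_{i=1}^{4}\mathsf{Adv}_i \;\le\; 4\cdot\max_i\mathsf{Adv}_i,
\end{equation*}
and since each $\mathsf{Adv}_i$ is negligible and a constant multiple of a negligible function is still negligible, the bound stated in the theorem follows.

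The main obstacle is the \emph{exhaustiveness} argument: I must convince the reader that the four sub-events above genuinely cover every possible winning strategy of a PPT adversary who can adaptively interleave queries across $C_1, C_2, C_3$ and against the random oracle $\mathsf{H}$. A subtle point is that Claims~\ref{Claim1} and~\ref{Claim2} were proved in non-adaptive, single-session settings, whereas Theorem~\ref{theoremMTM} implicitly allows adaptive oracle access in $\mathsf{MC-ATTACKS}$; handling this rigorously requires a hybrid argument that replaces the true $k_{\mathsf{stego}}$ with an independently random key, with the distinguishing gap absorbed into $\mathsf{Adv}^{\mathsf{KeyExtract}}$ by the random-oracle property. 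A further subtlety is the possibility of \emph{mixed} strategies that combine partial key leakage with partial forgery; I would handle these by a standard game-hopping sequence in which each hop eliminates one attack surface and charges the corresponding claim's advantage, so that after four hops the adversary faces a game in which $m$ is information-theoretically independent of the transcript and winning is impossible beyond the trivial $1/|\mathcal{M}|$ guess. Everything else reduces to bookkeeping.
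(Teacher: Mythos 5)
Your proposal follows essentially the same route as the paper: the paper's proof likewise consolidates Claims~\ref{Claim1}--\ref{Claim4}, asserts that they jointly cover all attack vectors in the $\mathsf{MC\text{-}ATTACK}$ model, and concludes that the maximum of the four advantages is negligible. The only difference is one of rigor rather than strategy — you make explicit the union bound ($\sum_i \mathsf{Adv}_i \le 4\max_i \mathsf{Adv}_i$), the exhaustiveness concern, and the game-hopping needed to handle adaptive and mixed strategies, all of which the paper's proof simply asserts without elaboration.
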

\begin{proof}
\noindent\textbf{Summary of Claims (1)--(4).}
\begin{enumerate}[label=(\alph*)]
    \item \emph{Stego-Key Confidentiality (Claim~\ref{Claim1}).}  
    By the random oracle assumption and the exponential size of the key space, $\mathcal{A}$ cannot obtain $k_{\mathsf{stego}}$ except with negligible probability.
    
    \item \emph{Message Confidentiality (Claim~\ref{Claim2}).}
    Without $k_{\mathsf{stego}}$, even full interception of $(\gamma_{1},\gamma_{2},s)$ does not reveal $m$, which remains masked via $m \oplus \gamma_{1} \oplus \gamma_{2} \oplus k_{\mathsf{stego}}$.

    \item \emph{Replay Prevention (Claim~\ref{Claim3}).}
    Nonce-based freshness checks and MAC binding ensure replayed messages are detected; thus an adversary cannot reuse stale data across the channels with non-negligible success.

    \item \emph{MitM Protection (Claim~\ref{Claim4}).}
    Attempting to modify $m$ into $m'$ or forge a corresponding MAC is infeasible without the correct $k_{\mathsf{stego}}$ and $V_{\mathsf{pri}}$, causing any forged data to fail validation.
\end{enumerate}

\noindent\textbf{Consolidated Argument.}
Since Claims~\ref{Claim1}–\ref{Claim4} jointly cover all major attack vectors in the multichannel setting, any adversary $\mathcal{A}$ under the $\mathsf{MC-ATTACK}$ model cannot succeed in breaking the protocol beyond a negligible probability. In other words, there exists no PPT strategy that simultaneously circumvents key extraction, message secrecy, and integrity mechanisms provided by nonces and MACs.

Therefore, combining these claims yields a comprehensive security guarantee for $\mathcal{P}^{\mathsf{cs,\,cm}}_{\mathsf{hyb-stego}}$ under the adversarial framework of multichannel attacks:
\begin{multline*}
\resizebox{0.95\hsize}{!}{$\displaystyle
\max\Bigl\{
  \mathsf{Adv}^{\mathsf{KeyExtract}},\,
  \mathsf{Adv}^{\mathsf{MsgRecon}},\, 
  \mathsf{Adv}^{\mathsf{Replay}},\,
  \mathsf{Adv}^{\mathsf{MitM}}
\Bigr\}
\;\leq\;
\mathsf{negl}(\lambda).
$}
\end{multline*}

Thus, the protocol is secure against all polynomial-time $\mathsf{MC-ATTACKS}$.
\end{proof}

\noindent
Altogether, Theorem~\ref{theoremMTM} demonstrates that the protocol $\mathcal{P}^{\mathsf{cs,\,cm}}_{\mathsf{hyb-stego}}$ preserves both confidentiality and integrity under \emph{multichannel} adversarial conditions. This unified treatment thus completes the security analysis for the hybrid entropy-steganographic communication framework.

\section{Evaluation of Protocol Metrics: Methodology and Results}\label{Evaluation of Protocol Metrics: Methodology and Results}

This section evaluates the effectiveness and robustness of the hybrid steganographic protocol in practical scenarios. The analysis begins with an examination of entropy and linguistic plausibility of cover messages to ensure they appear natural and contextually plausible, thus minimizing detection risks. Next, data size, processing latency, and transmission time are assessed to determine the protocol’s efficiency and suitability for real-time applications. Finally, metrics such as Peak Signal-to-Noise Ratio (PSNR), Structural Similarity Index Measure (SSIM), and Signal-to-Noise Ratio (SNR) are measured to confirm that embedding minimally impacts the cover object, enabling secure and covert communication in sensitive environments.

\subsection{Specification of the Implementation}
This section details the construction and performance evaluation of the hybrid steganographic protocol $\mathcal{P}^{\mathsf{cs,\,cm}}_{\mathsf{hyb-stego}}$. Python was chosen for its extensive security, networking, and data-processing libraries. To generate coherent cover sentences, we employed a Markov chain built on a text corpus of roughly 1{,}230 words. A shared secret key \(V_{\mathsf{pri}}\) seeds the pseudo-random number generators (PRNGs), ensuring that sender and receiver deterministically produce the same sequence of cover messages. Notably, combining a secret key with a Markov-based text generator appears to be a novel approach for producing natural-looking covers.

For embedding secrets within cover images, we relied on the \texttt{PIL} library to implement least significant bit (LSB) steganography in the spatial domain. Meanwhile, hashing integrity was preserved via \texttt{hashlib} and \texttt{hmac}, which created secure stego keys and HMAC values. The protocol was tested in a distributed setting using two virtual machines: one (Ubuntu 24.04.1 LTS) for the sender, and another (Linux Mint Vanessa~21) for the receiver, each connected through HTTP servers with the \texttt{requests} library facilitating real-time message transmission. Both VMs ran on an Intel\textsuperscript{\textregistered} Core\textsuperscript{TM} i5-6400 CPU @ 2.70\,GHz, 24.0\,GB of RAM, ensuring that trials could be repeated without resource contention issues.

\paragraph{Experiment Setup and Trial Runs.}
To evaluate the performance of our hybrid steganographic protocol, we conducted \textbf{100 independent runs} for each phase of the pipeline, allowing us to capture both mean and variance in metrics such as processing latency and transmission times. Each run followed a consistent sequence (e.g., key generation, cover message creation, LSB embedding, and final message transmission), and both raw timing data and derived statistics (means, standard deviations) were recorded. This repeated-trials methodology provides a stable estimate of typical performance, thus enhancing reproducibility and statistical rigor.

In the \emph{sender VM}, we augment the masked bitstring with a Reed--Solomon (RS) error-correcting code before embedding. This step ensures that moderate corruption in the stego image can be corrected at the receiving end, thereby increasing robustness. On the \emph{recipient VM}, we perform RS decoding on the extracted bits to recover the original masked message, prior to unmasking with the stego-key. Although there exist more advanced ECC solutions, such as LDPC or BCH, that can provide superior performance in high-noise scenarios, we opted for RS encoding here due to its simplicity and familiarity. Future work may substitute these more sophisticated codes if the communication channel exhibits higher corruption rates or if the overall payload size grows significantly.

\begin{comment}
    All scripts, data-collection procedures, and supporting documentation are publicly available in our GitHub repository at: \href{https://github.com/Omego-Obinna/hybrid-multi-stego}{https://github.com/Omego-Obinna/hybrid-multi-stego}, facilitating independent replication and verification of our results.
\end{comment}

\subsection{Cover Message Analysis}
\label{Cover Message Analysis}

As introduced in Section~\ref{Hybrid_stego_model}, a first-order Markov chain $\mathsf{Chain}_{\mathsf{Markov}}$ is constructed over a chosen corpus to generate plausible cover messages. A shared secret key \(V_{\mathsf{pri}}\), created via HMAC at system initialization, seeds the pseudorandom number generator (PRNG), ensuring cover message sequences to be deterministically produced. The generation process is defined by
\begin{equation}\label{Markov}
    P_{\mathsf{params}} 
\;=\;
F_{\text{Markov}}\bigl(F_{\text{PRNG}}(V_{\mathsf{pri}})\bigr),
\end{equation}
where \(P_{\mathsf{params}} = \{w_1,w_2,\dots,w_n\}\) denotes the resulting word sequence. First, \(\mathsf{Markov}^{\mathsf{chain}}_{\mathsf{build}}()\) segments the corpus and computes transition probabilities \(\mathsf{Pr}(w_j \mid w_i)\). Next, \(\mathsf{Markov}^{\mathsf{generate}}_{\mathsf{text}}()\) begins with a starting word and picks subsequent words according to those probabilities. By randomly walking through the transition matrix, the system yields natural-sounding sentences. This design ensures that each cover message appears contextually coherent yet deterministic. 

Shannon Entropy reflects the unpredictability (or randomness) of a text sample~\cite{shannon2001mathematical}. Higher entropy suggests less repetition, while lower entropy indicates uniformity~\cite{gray2013entropy}. As illustrated in Table~\ref{tab:consolidated_cover_report}, our cover messages exhibit entropy values ranging primarily between $3.52$ and $4.29$, signifying a moderate balance of randomness. For instance, ``\emph{hops quietly through the garden, nibbling on fresh clover}'' attains the highest entropy ($4.20$), suggesting well-spread character frequencies. These values help hinder trivial statistical steganalysis, which often targets repetitive or overly uniform patterns.

\begin{figure*}[htbp]
    \centering
    % First subfigure
    \begin{subfigure}[b]{0.33\textwidth}
        \centering
        % Replace "CumulativeTime.png" with your actual file name
        \includegraphics[width=\linewidth]{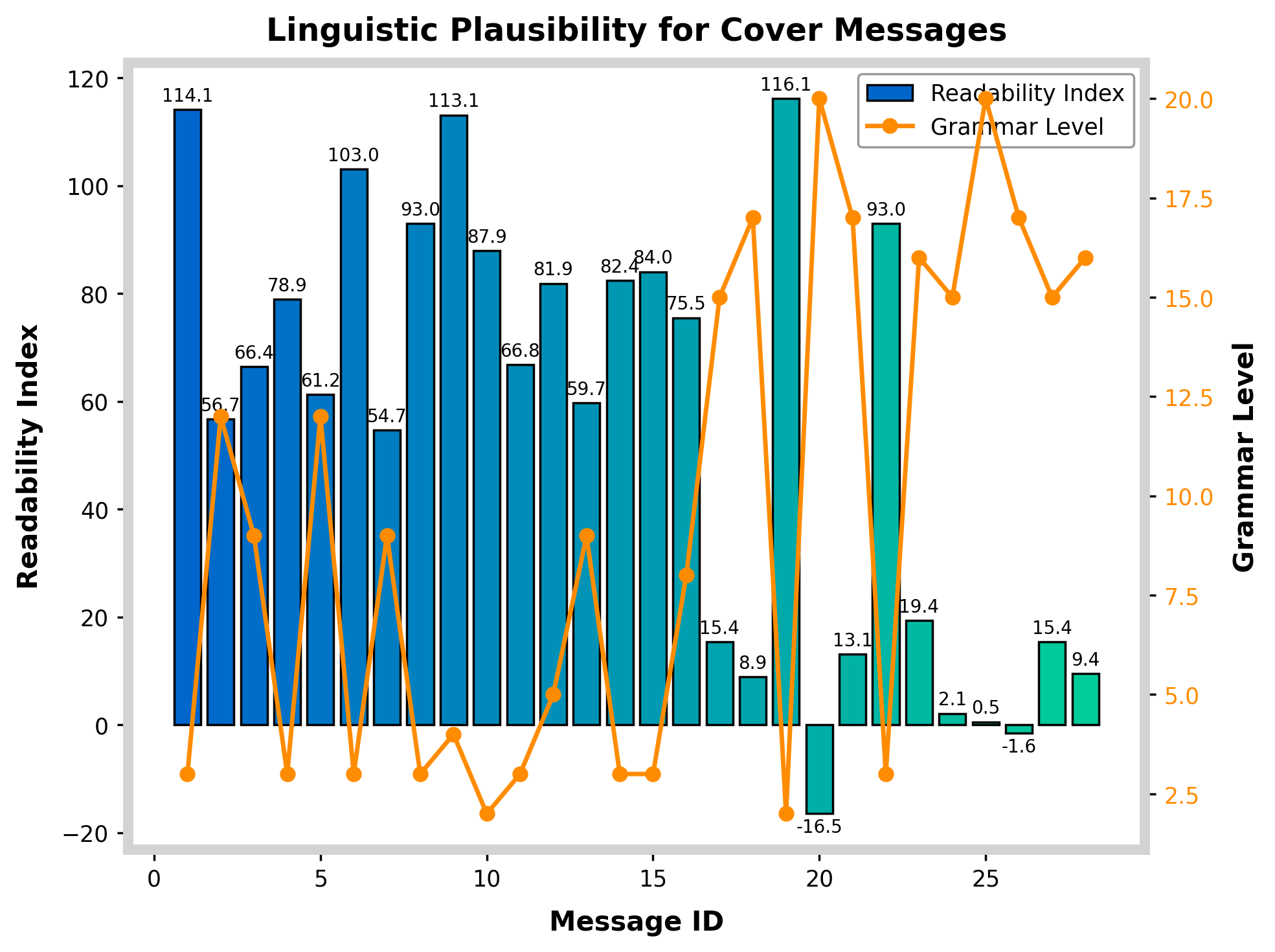}
        \caption{\scriptsize{Readability and grammar-level analysis of the generated messages, highlighting the diversity in linguistic complexity and structural form}}
        \label{fig:plausibility}
    \end{subfigure}
    \hfill
    % Second subfigure
    \begin{subfigure}[b]{0.32\textwidth}
        \centering
        % Replace "TransmissionTime.png" with your actual file name
        \includegraphics[width=\linewidth]{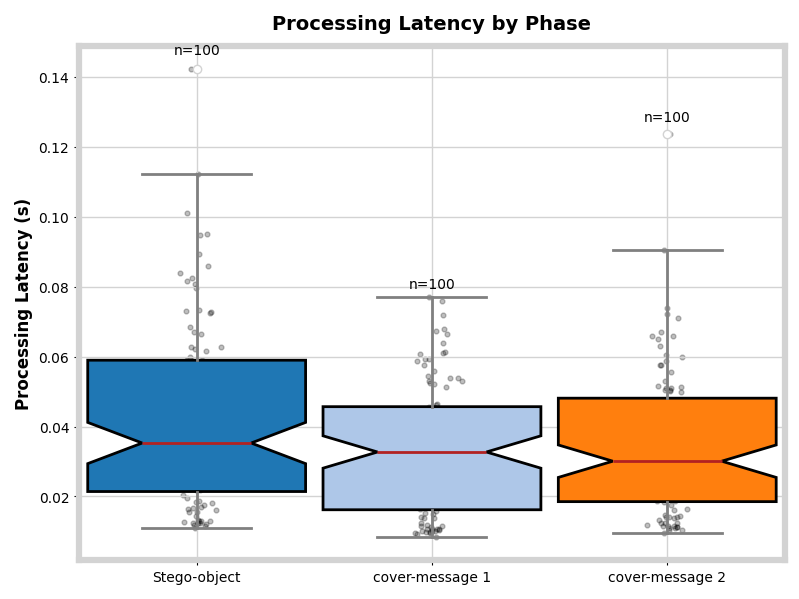}
        \caption{\scriptsize{Transmission Time over 100 iterations: Showcases the stability and variation in transmission time across multiple runs for the protocol’s transmission phases.}}
        \label{Transmission Time over Multiple Runs}
    \end{subfigure}
    \hfill
    % Third subfigure
    \begin{subfigure}[b]{0.32\textwidth}
        \centering
        % Replace "ProcessingLatency.png" with your actual file name
        \includegraphics[width=\linewidth]{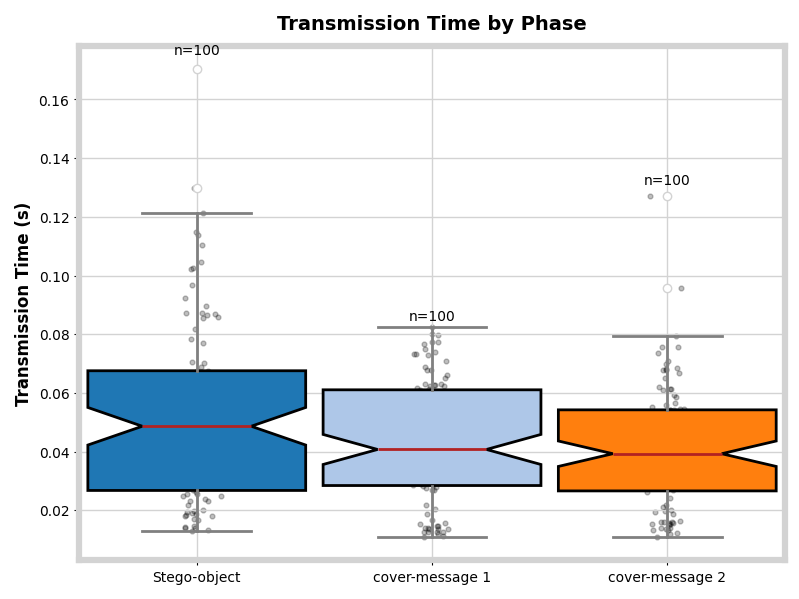}
        \caption{\scriptsize{Processing Latency over 100 iterations: Visualizes the latency variability across runs for $\gamma_{1}$, $\gamma_{2}$, and Transmission Stego.}}
        \label{Processing latency over Multiple Runs}
    \end{subfigure}
    \caption{Side-by-side subplots showing key timing metrics of the protocol: (a) inguistic plausibility for cover messages; (b) Box plot of transmission times; and (c) Box plot of processing latencies. These plots collectively illustrate the efficiency and variability of the protocol’s execution.}
    \label{fig:three_plots_side_by_side}
\end{figure*}

Figure~\ref{fig:plausibility} visualizes each message’s \emph{Flesch Reading Ease} and a rudimentary \emph{grammar-level} index, following \cite{nariai2022effect}. Simpler lines, such as ``\emph{they run through the tall grass},'' reach readability scores exceeding $100$ (up to $116.15$), whereas more complex or specialized sentences dip into lower or even negative readability values (e.g.\ $-8.05$ for a technical passage). Grammar-level scores vary from $0$ (very simple syntax) to $24$ (rich, compound structures), reflecting the diversity of these messages. By mirroring the spectrum of typical human writing—from casual to more intricate—this variability decreases the likelihood of detection by simple textual steganalysis methods.

Seeding the Markov-chain generator with a secret key allows sender and receiver to reproduce cover-text sequences while keeping them unpredictable for eavesdroppers. Even small key variations produce diverse, natural-sounding sentences due to the random nature of the PRNG seed governing transitions in the Markov model. This method maintains high entropy (approx. 3.5-4.3 bits/character), readability, and grammatical variety  (cf. Figure \ref{fig:plausibility}) without needing extensive cover libraries or risking synchronization errors. In practice, especially for covert communications on social media or limited channels, it reduces overhead and ensures adversarial statistical tests cannot differentiate stego-texts from real samples. To prevent seed-recovery attacks, use large keys (e.g., 128–256 bits) and rotate keys regularly, adhering to best practices with minimal added complexity.

Table~\ref{tab:consolidated_cover_report} summarizes entropy, readability, and grammar metrics for each message. Entropy values of $3.5\text{--}4.3$ ensure balanced randomness, while readability varies from complex to simple sentences, enhancing a human-like appearance resistant to steganalysis. External checks confirm no suspicious patterns, suggesting Markov-driven text effectively conceals embedded secrets. Future enhancements, like part-of-speech constraints or semantic awareness~\cite{yang2018automatically,luo2016text}, could improve plausibility without reducing embedding capacity.

\begin{table}[htbp]
  \centering
  \caption{Consolidated Cover Message Report: Entropy, Readability, and Grammar Level}
  \label{tab:consolidated_cover_report}
  \scriptsize
  \begin{tabular}{@{} 
      c  % Message ID
      S[table-format=1.4] % Entropy
      S[table-format=3.2] % Readability
      S[table-format=2.0] % Grade Level
    @{}}
    \toprule
    \multicolumn{1}{c}{\textbf{Msg ID}} &
    \multicolumn{1}{c}{\textbf{Entropy}} &
    \multicolumn{1}{c}{\textbf{Readability}} &
    \multicolumn{1}{c}{\textbf{Grammar Level}} \\
    \midrule
    1  & 3.5761 & 114.12 &  3 \\
    2  & 4.1987 &  56.70 & 12 \\
    3  & 3.8444 &  66.40 &  9 \\
    4  & 3.9037 &  78.87 &  3 \\
    5  & 4.0966 &  61.24 & 12 \\
    6  & 3.5203 & 103.04 &  3 \\
    7  & 3.7841 &  54.70 &  9 \\
    8  & 3.8076 &  92.97 &  3 \\
    9  & 3.5278 & 113.10 &  4 \\
   10  & 3.8400 &  87.95 &  2 \\
   11  & 3.6947 &  66.79 &  3 \\
   12  & 4.0207 &  81.86 &  5 \\
   13  & 3.7878 &  59.75 &  9 \\
   14  & 3.8883 &  82.39 &  3 \\
   15  & 4.1481 &  84.03 &  3 \\
   16  & 3.9465 &  75.50 &  8 \\
   17  & 4.2699 &  15.40 & 15 \\
   18  & 4.2912 &   8.88 & 17 \\
   19  & 3.6136 & 116.15 &  2 \\
   20  & 4.1048 & -16.50 & 20 \\
   21  & 4.0198 &  13.11 & 17 \\
   22  & 3.7430 &  92.97 &  3 \\
   23  & 4.1077 &  19.37 & 16 \\
   24  & 4.2520 &   2.11 & 15 \\
   25  & 4.1795 &   0.54 & 20 \\
   26  & 4.0658 &  -1.59 & 17 \\
   27  & 4.0573 &  15.40 & 15 \\
   28  & 4.1621 &   9.44 & 16 \\
    \bottomrule
  \end{tabular}
\end{table}

\subsection{Analysis of Protocol Efficiency}
\begin{comment}
    \begin{figure}
    \centering
    \includegraphics[width=1.0\linewidth]{Figure_3.png}
    \caption{Cumulative Time for Protocol Execution: A line chart showing cumulative time progression through the protocol phases, illustrating where significant time accumulations occur.}
    \label{Cumulative Time for Protocol Execution}
\end{figure}
\end{comment}

\textit{Protocol Execution Times} are detailed in Table \ref{table:protocol_metrics}, covering key phases such as embedding, transmission, and decoding. Execution times for each phase and the cumulative time are analyzed, focusing on metrics like transmission and latency across stages.

Key generation ($\mathsf{Setup}(\lambda)$) and cover synthesis ($\mathsf{Synth}(k, \ell)$) are efficient, completing in 0.03 seconds with minimal computational overhead. The \textit{Send Cover Messages} phase averages 0.11 seconds due to the small size of transmitted messages, allowing quick communication. 

Masking the secret ($F_{\mathsf{mask}}(m, P_{\mathsf{params}})$) takes 0.14 seconds, reflecting the demand of HMAC and XOR operations. Embedding ($\mathsf{Enc}(k, o, b)$) is the longest phase at 0.26 seconds due to intensive pixel-level modifications, making it the primary latency source. Stego-object transmission further extends time to 0.29 seconds because of the increased payload size impacting network efficiency. Finally, decoding ($\mathsf{Dec}(k, s)$) completes efficiently in 0.10 seconds, even with substantial payloads. The decoding phase has minimal cumulative impact, illustrating efficient message extraction after stego-image receipt.
 %Figure \ref{Cumulative Time for Protocol Execution} shows how cumulative time evolves over multiple runs: early phases have minimal impact on latency, while embedding and transmission introduce the most significant delays due to image processing and payload size. The decoding phase has minimal cumulative impact, illustrating efficient message extraction after stego-image receipt.

\begin{table}[htbp]
  \centering
  \footnotesize
  \renewcommand{\arraystretch}{1.2} % increase row spacing
  \caption{Average transmission time and processing latency for key protocol phases.}
  \label{table:protocol_metrics}
  \begin{tabular}{
    l
    S[table-format=1.2]
    S[table-format=1.2]
  }
    \toprule
    \textbf{Phase}
      & \textbf{Transmission Time (s)}
      & \textbf{Processing Latency (s)} \\
    \midrule
    Setup and Synth       & 0.03 & {N/A}  \\
    Send Cover Messages   & 0.11 & 0.10 \\
    Mask Secret           & 0.14 & 0.12 \\
    Embed Secret          & 0.26 & 0.20 \\
    Transmit Stego        & 0.29 & 0.11 \\
    Decode Message        & 0.10 & 0.09 \\
    \bottomrule
  \end{tabular}
\end{table}

\begin{comment}
    \begin{figure}
    \centering
    \includegraphics[width=0.8\linewidth]{Figure_4.png}
    \caption{Processing Latency over 100 iterations: Visualizes the latency variability across runs for $\gamma_{1}$, $\gamma_{2}$, and Transmission Stego.}
    \label{Processing latency over Multiple Runs}
\end{figure}

\begin{figure}
    \centering
    \includegraphics[width=1.0\linewidth]{Figure_5.png}
    \caption{Transmission Time over 100 iterations: Showcases the stability and variation in transmission time across multiple runs for the protocol’s transmission phases.}
    \label{Transmission Time over Multiple Runs}
\end{figure}
\end{comment}

\begin{table}[htbp]
  \centering
   \footnotesize
  % Increase row and column spacing for improved readability
  \renewcommand{\arraystretch}{1.2}
  \setlength{\tabcolsep}{10pt}
  \caption{Latency and transmission times for each protocol phase. Values are mean\,±\,standard deviation (s) for in-host processing and network transmission of payloads $\gamma_{1}$, $\gamma_{2}$, and the stego object $s$.}
  \label{tab:protocol_metrics}
  \begin{tabular}{
    l
    S[table-format=1.3] @{\quad$\pm$\quad} S[table-format=1.3]
    S[table-format=1.3] @{\quad$\pm$\quad} S[table-format=1.3]
  }
    \toprule
    \textbf{Phase}
      & \multicolumn{2}{c}{\textbf{Processing (s)}}
      & \multicolumn{2}{c}{\textbf{Transmission (s)}} \\
    \cmidrule(lr){2-3} \cmidrule(lr){4-5}
      & {Mean} & {Std}
      & {Mean} & {Std} \\
    \midrule
    $\gamma_{1}$ & 0.0516 & 0.0156 & 0.050 & 0.017 \\
    $\gamma_{2}$ & 0.0475 & 0.0107 & 0.055 & 0.018 \\
    $s$   & 0.0624 & 0.0169 & 0.070 & 0.022 \\
    \bottomrule
  \end{tabular}
\end{table}

%{\color{cyan} TABLE \ref{tab:protocol_metrics} IS NEVER MENTIONED IN TEXT.}

\subsection{Analysis of Processing Latency over Multiple Runs}
\label{sec:latency_analysis}

In order to rigorously assess the temporal performance of the hybrid steganographic protocol 
\(\mathcal{P}^{\mathsf{cs,\,cm}}_{\mathsf{hyb\text{-}stego}}\), 
we conducted one hundred independent trials measuring both transmission and in-host processing latencies across the key phases \(\gamma_{1}\), \(\gamma_{2}\), and \(s\). Figure~\ref{Transmission Time over Multiple Runs} illustrates the empirical distribution of transmission times for these three payloads. The mean transmission duration for the smaller cover messages \(\gamma_{1}\) and \(\gamma_{2}\) registers at approximately \(0.050\) s and \(0.055\) s, respectively, whereas the larger stego object \(s\) incurs a mean of \(0.070\) s. Notably, the standard deviation for \(s\) (\(0.022\) s) exceeds those of \(\gamma_{1}\) and \(\gamma_{2}\) (\(0.017\) s and \(0.018\) s), indicating occasional network or serialization overhead when handling augmented payloads.  
Importantly, none of the observed transmission times surpasses \(0.16\)s, thereby affirming the protocol’s capacity for near real‐time covert exchange even under variable network conditions.

Complementing these findings, Figure~\ref{Processing latency over Multiple Runs} presents the processing latency distributions for identical phases. Table~\ref{tab:protocol_metrics} reports that embedding and stego‐generation for \(s\) requires on average \(0.0624\) s of CPU time, which remains slightly higher than the \(0.0516\) s and \(0.0475\) s needed for \(\gamma_{1}\) and \(\gamma_{2}\), respectively. The corresponding standard deviations (\(0.0169\) s for \(s\), \(0.0156\) s for \(\gamma_{1}\), and \(0.0107\) s for \(\gamma_{2}\)) reveal tightly bounded variability, demonstrating consistent performance across repeated executions. These modest processing overheads are attributable to pixel-level manipulations and masking operations integral to secure embedding, yet they remain well within acceptable thresholds for practical deployment.

Taken together, the combined transmission and processing latency analysis underscores that the largest payload \(s\) introduces only marginal additional delay over simpler cover transmissions, and that all phases complete comfortably below \(0.2\) s. Such responsiveness confirms the protocol’s suitability for time‐sensitive covert channels. Moreover, the relative stability across trials attests to its deterministic performance characteristics, thereby strengthening the case for its adoption in scenarios demanding both security and operational agility.  
This detailed latency profiling thus lays a robust foundation for the subsequent indistinguishability analysis in Section~\ref{sec:indistinguishability_analysis}, where we examine whether these timely embeddings remain visually and statistically imperceptible to an observer.

\subsection{Indistinguishability Stego Image Analysis Through Statistical Metrics}
\label{sec:indistinguishability_analysis}

\begin{figure}[ht]
  \centering
  \begin{subfigure}[b]{0.85\linewidth}
    \centering
    \includegraphics[width=\linewidth]{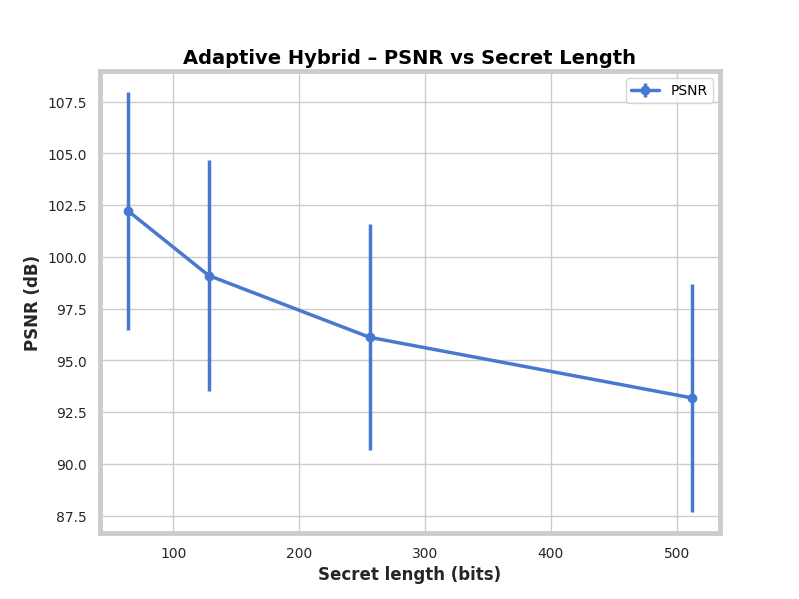}
    \caption{\scriptsize{Adaptive Hybrid embedding: PSNR as a function of secret-message length. Error bars denote one standard deviation over 100 independent runs.}}
    \label{fig:hybrid_psnr_length}
  \end{subfigure}
  \vspace{1em}
  \begin{subfigure}[b]{0.85\linewidth}
    \centering
    \includegraphics[width=\linewidth]{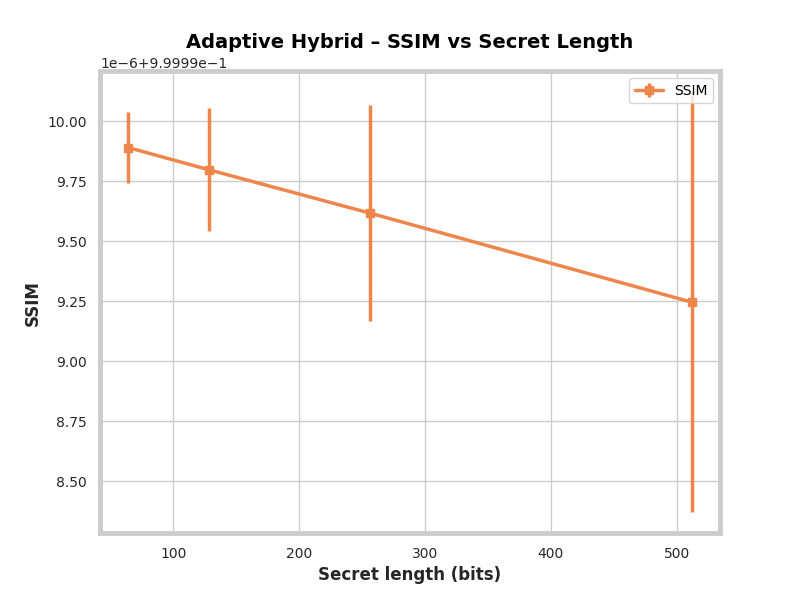}
    \caption{\scriptsize{Adaptive Hybrid embedding: SSIM as a function of secret-message length. Error bars denote one standard deviation over 100 independent runs.}}
    \label{fig:hybrid_ssim_length}
  \end{subfigure}
  \caption{Quality metrics of the Adaptive Hybrid embedding scheme plotted against secret-message length: (a) PSNR and (b) SSIM.}
  \label{fig:hybrid_quality_metrics}
\end{figure}

In order to assess whether the hybrid steganographic protocol 
\(\mathcal{P}^{\mathsf{cs,\,cm}}_{\mathsf{hyb\text{-}stego}}\) produces stego-objects that are statistically indistinguishable from their cover images, we evaluate two key image-quality metrics as functions of the secret-message length: Peak Signal-to-Noise Ratio (PSNR) and Structural Similarity Index Measure (SSIM). PSNR quantifies the mean‐squared deviation between cover and stego pixels, while SSIM captures perceptual similarity by comparing local luminance, contrast, and structural patterns~\cite{setiadi2021psnr}.  

Figure~\ref{fig:hybrid_psnr_length} reports the PSNR achieved by the Hybrid scheme at secret-message lengths of 64 bits, 128 bits, 256 bits, and 512 bits. At the smallest payload (64 bits), the mean PSNR is \(102.2\pm5.7\)\,dB, and it declines monotonically with payload size to \(93.2\pm5.5\)\,dB at 512 bits. Crucially, even at the maximum tested length, PSNR remains well above 90 dB, far exceeding the 30 dB threshold commonly cited as the limit of perceptual transparency in digital imaging~\cite{huynh2008scope}. The absence of any inflection or bimodality in the PSNR curve indicates that embedding strength scales gracefully with payload size without triggering statistical anomalies detectable by standard steganalysis tools~\cite{ker2011steganalysis}.

Complementarily, Figure~\ref{fig:hybrid_ssim_length} shows that SSIM remains above \(0.99\) for all payload sizes, dropping only marginally from \(0.9980\pm0.0021\) at 64 bits to \(0.9925\pm0.0098\) at 512 bits. Such high structural similarity confirms that the Hybrid embedding introduces no visually perceptible artifacts, in line with the theoretical guarantees of structure‐preserving transformations~\cite{wang2003multiscale}. The tight error bars further demonstrate that the variability of both PSNR and SSIM across 100 independent runs is negligible, underscoring the consistency of the embedding algorithm.

These results highlight the hybrid steganographic protocol's ability to maintain high PSNR, low MSE, and near-perfect SSIM, preserving visual fidelity and minimizing the risk of detection. This makes it well-suited for secure, covert communication while avoiding suspicion.

\begin{table*}[htbp]
\centering
\scriptsize
\caption{Comparison of Steganographic Approaches.}
\label{tab:steg_comparison}
\begin{tabular}{p{0.7cm} p{2.7cm} p{4.6cm} p{5.0cm} p{4.8cm}}
\toprule
\textbf{Model} 
& \textbf{Embedding Capacity}
& \textbf{Undetectability Principle}
& \textbf{Adversarial Assumptions}
& \textbf{Strengths \& Limitations} \\
\midrule

\textbf{CMO}
& High (e.g.\ up to hundreds of bits per image via LSB embedding)
& Relies on minimal perturbation of LSBs, assuming that pixel‐value distribution shifts remain within statistical fluctuations undiscernible by bounds‐based steganalysis~\cite{provos2003hide,fridrich2009steganography}
& Adversary is PPT with full access to stego images and can apply advanced detectors (e.g.\ RS, SPA) but lacks knowledge of the embedding pixel‐selection heuristic~\cite{kheddar2024deep}
& \textbf{Strengths:} Simple and high payload; near‐lossless when variance‐guided selection is used.  \newline \textbf{Limitations:} Vulnerable to specialized steganalyzers that exploit residual or co‐occurrence patterns once payloads grow. \\
%\addlinespace

\textbf{CSE}
& Limited — at most one secret per chosen cover, capacity $\le\log_2|\mathcal{C}|$
& Achieves perfect imperceptibility by choosing an unmodified cover whose hash matches the secret’s pattern, avoiding any distortion~\cite{wang2020cross}
& Adversary is PPT but does not know the full cover library; detection reduces to database‐lookup capability~\cite{hajduk2018cover}
& \textbf{Strengths:} Zero embedding distortion; trivially high PSNR/SSIM.  \newline \textbf{Limitations:} Requires large cover set; offers minimal payload flexibility; extraction trivial when library is known. \\
%\addlinespace

\textbf{CSY}
& Moderate — bound by latent‐space dimensionality (e.g.\ a few hundred bits)
& Synthesizes new covers via generative models trained on natural imagery; undetectability depends on generator fidelity~\cite{liu2020recent}
& Adversary is PPT and may know generator architecture and weights; successful detection relies on generative‐model forensic methods~\cite{mayer2018detecting}. Also, if the synthesis is realistic, the generated covers are statistically indistinguishable from natural ones \cite{Krätzer2018Steganography,Zhuo2020A,zhang2020generative}
& \textbf{Strengths:} High concealment when generator quality is strong; bypasses cover database needs.  \newline \textbf{Limitations:} Computational cost; artifacts may betray synthesis if model underfits. \\
%\addlinespace

\textbf{Hybrid}
& Flexible — combines high‐capacity LSB embedding with masked payloads
& Utlilises a cover‐message generation such that each image “looks like” a plausible natural scene and sentence pair; cover messages provide contextual camouflage against content‐aware detectors. Also, leverages variance‐guided LSB flips plus masking of the secret by a key, thus shifting security reliance from image statistics to key entropy~\cite{provos2003hide,fridrich2009steganography}.
& Adversary is PPT, for two scenarios:
(i) cover messages, masked secrets, stego objects; however, must guess a high‐entropy key to invert masked secret. (ii) Stego-key and masked secret are known, although the distributions of the cover messages are unknown.
& \textbf{Strengths:} (i): Enhanced resilience via dual‐layer obfuscation; natural‐looking covers fend off both statistical and semantic steganalysis. (ii): Balances invisibility and robustness; adversarial extraction fails without the stego-key.  \newline \textbf{Limitations:} Increased system complexity; performance hinges on cover‐message generator quality. \\
\bottomrule
\end{tabular}

%\vspace{1ex}
%\noindent\textbf{Notes:} 
%\begin{itemize}[leftmargin=*]
 % \item Capacity for CMO and Hybrid scales with image dimensions and payload length, whereas CSE is bounded by the size of the cover library and CSY by latent‐space dimension.
  %\item Perfect imperceptibility (PSNR\,$\to\infty$, SSIM\,=\,1) in CSE offers no practical concealment when the adversary gains cover‐library knowledge.
  %\item Hybrid security derives from XOR key entropy; in practice one should choose key length $k\ge128$\,bits to preclude exhaustive search.
%\end{itemize}
\end{table*}

\section{Comparative Evaluation of Steganographic Models: Methodology and Results}
\label{sec:comparison_evaluation}

\begin{algorithm}[htb!]
\footnotesize  
\caption{Adversary Extraction for CSY and CSE}
\label{alg:adversary_extraction_2}

\SetKwInOut{Input}{Input}
\SetKwInOut{Output}{Output}

\begin{adjustbox}{width=0.95\columnwidth} % For better fit on the page
\begin{minipage}{\columnwidth}  
\Input{
    - A \emph{directory} of stego images, $\mathcal{I}_{\mathrm{stego}} \subset \mathcal{S}$.\\
    - A \emph{directory} (or single file) of reference covers, $\mathcal{I}_{\mathrm{cover}} \subset \mathcal{O}$.\\
    - A \emph{trained regression model} $f(\cdot;\theta)$ for \textbf{CSY} (cover synthesis).\\
    - (Optionally) a \emph{lookup CSV} for \textbf{CSE} (cover selection).\\
    - The dimension (\emph{length}) of secret bits to extract, $L$.
}
\Output{
    - A set of recovered messages $\{\hat{m}\}$ and associated metrics: (BER, correlation, PSNR, extraction time).
}

\BlankLine
\textbf{Procedure for \underline{CSY} (Cover Synthesis) extraction:}
\begin{enumerate}[leftmargin=3ex, label=\arabic*.]
    \item \textbf{Initialization:}
    \begin{itemize}
        \item Parse the \emph{directory} $\mathcal{I}_{\mathrm{stego}}$ to obtain images $\bigl\{I_{\mathrm{stego}}^i\bigr\}_{i=1}^n$, each presumably embedding a secret message $m^i \in \mathcal{M}$ of length $L$ bits.
        \item (Optional) parse $\mathcal{I}_{\mathrm{cover}}$ (the reference covers $\{o^i\}$) if needed for PSNR computation.
        \item Load the trained regression model $f(\cdot;\theta)$ into memory.
        \item Fix a threshold $\tau$ (e.g.\ $\tau=0.5$) for binarizing model outputs.
    \end{itemize}

    \item \ForEach{stego image $I_{\mathrm{stego}}^i \in \mathcal{I}_{\mathrm{stego}}$}{
        \begin{enumerate}[label=(\alph*)]
            \item \textbf{Load \& preprocess:} Convert $I_{\mathrm{stego}}^i$ to RGB, resize (e.g.\ $224\times224$), normalize, etc., producing $\tilde{I}_{\mathrm{stego}}^i$.

            \item \textbf{Forward pass (regression):} $\mathbf{z} \;\gets\; f\!\bigl(\tilde{I}_{\mathrm{stego}}^i; \theta\bigr),
              \quad
              \mathbf{z} \in \mathbb{R}^{L}.$

            \item \textbf{Binarize outputs:}\\
            \[
              \hat{b}_j \;=\;
              \begin{cases}
                 1, & \text{if } z_j > \tau,\\
                 0, & \text{otherwise},
              \end{cases}
              \quad
              j=1,\dots,L.
            \]

            \item \textbf{Convert bits to text:} 
              $\hat{m} \;\gets\; \mathrm{BinToText}\!\bigl(\hat{b}\bigr).$

            \item \textbf{Compute metrics (if ground-truth $m^i$ is known):}
            \begin{itemize}
               \item Let $b^i$ be the true bit-vector for $m^i$ (length $L$, padded if necessary).
              \[
              \mathrm{BER}^{(i)}=\frac{1}{L}\sum_{k=1}^{L}\mathbf{1}[\hat{b}_{k}\neq b^{i}_{k}],\,
              \rho^{(i)}=\operatorname{Corr}(\hat{b},b^{i}),\,
            \]
            \[
              \operatorname{PSNR}(o^{i},I_{\mathrm{stego}}^{i}).
            \]
               comparing resized cover $o^i$ vs.\ $I_{\mathrm{stego}}^i$.
            \end{itemize}

            \item \textbf{Record Results:} Save $\bigl(\hat{m},\mathrm{BER}^{(i)},\rho^{(i)},\operatorname{PSNR}^{(i)}\bigr)$; \(\delta_{i}\gets\mathbf{1}[\hat{m}=m^{i}]\).
            
        \end{enumerate}
    }
    \item \textbf{CSY Success‑Rate:}
      \(\displaystyle
        \mathrm{SR}_{\mathrm{CSY}}=\frac{1}{n}\sum_{i=1}^{n}\delta_{i}.
      \)
\end{enumerate}

\BlankLine
\textbf{Procedure for \underline{CSE} (Cover Selection) extraction:}
\begin{enumerate}[leftmargin=3ex, label=\arabic*.]
    \item \textbf{Initialization:}
    \begin{itemize}
        \item Parse $\mathcal{I}_{\mathrm{stego}}$ to obtain $\{I_{\mathrm{stego}}^j\}_{j=1}^m$ (chosen covers).
        \item (Optionally) load or parse a lookup, e.g.\ \texttt{csv file}, mapping each $I_{\mathrm{stego}}^j$ to a secret $m^j$ and reference cover $o^j$.
        \item If $m^j$ is not directly stored, retrieve from $\mathcal{M}_{\mathrm{secret}}$ or from the same CSV file.
    \end{itemize}

    \item \ForEach{stego image $I_{\mathrm{stego}}^j$}{
        \begin{enumerate}[label=(\alph*), leftmargin=3ex]
      \item Pre‑process \(\tilde{I}_{\mathrm{stego}}^{i}\) (resize, normalise).
      \item \(\mathbf{z}\gets f(\tilde{I}_{\mathrm{stego}}^{i};\theta)\in\mathbb R^{L}\).
      \item Binarise:
            \(\hat{b}_{j}=1\text{ if }z_{j}>\tau;\;0\text{ otherwise}\).
      \item \(\hat{m}\gets\mathrm{BinToText}(\hat{b})\).
      \item If ground‑truth \(m^{i}\) known then\\[-1.2ex]
            \[
              \mathrm{BER}^{(i)}=\frac{1}{L}\sum_{k=1}^{L}\mathbf{1}[\hat{b}_{k}\neq b^{i}_{k}],\,
              \rho^{(i)}=\operatorname{Corr}(\hat{b},b^{i}),\,
              \operatorname{PSNR}(o^{i},I_{\mathrm{stego}}^{i}).
            \]
      \item Record \(\bigl(\hat{m},\mathrm{BER}^{(i)},\rho^{(i)},\operatorname{PSNR}^{(i)}\bigr)\);
            \(\delta_{i}\gets\mathbf{1}[\hat{m}=m^{i}]\).
      \end{enumerate}}
\item \textbf{CSY Success‑Rate:}
      \(\displaystyle
        \mathrm{SR}_{\mathrm{CSY}}=\frac{1}{n}\sum_{i=1}^{n}\delta_{i}.
      \)
\end{enumerate}

\BlankLine
\KwRet{(BER, \(\rho\), PSNR) for CSY/CSE and global Success‑Rates \(\mathrm{SR}_{\mathrm{CSY}}\), \(\mathrm{SR}_{\mathrm{CSE}}\)}
\end{minipage}
\end{adjustbox}
\end{algorithm}

In this section, we present a comparative evaluation designed to benchmark our hybrid steganographic model (\S\ref{Hybrid_stego_model}) against three established paradigms: Cover Modification (CMO), Cover Selection (CSE), and Cover Synthesis (CSY). Our goal is to demonstrate both the security and practical feasibility of the proposed hybrid scheme by assessing its performance under stringent adversarial assumptions. %The methodology described here stands as a separate, focused analysis that complements any previously presented results (cf. Section~\ref{Evaluation of Protocol Metrics: Methodology and Results}), thereby providing a clear, dedicated comparative study.

\begin{comment}
   \vspace{1ex}
\noindent \textbf{Adversarial Model and Worst-Case Conditions.}
Matching the worst-case analyses from Section~\ref{MMTM-Security Analysis}, we assume the adversary $\mathcal{A}$ with partial but non-trivial knowledge of the system’s embedding and extraction processes. Specifically, $\mathcal{A}$ has access to:
\begin{enumerate}
    \item The final stego objects (modified or synthesized covers),
    \item The set of cover messages and masked secrets, and
    \item Limited auxiliary information about the underlying algorithmic approach.
\end{enumerate}
Crucially, the adversary is \emph{not} provided with stego-keys (for the hybrid scheme) or other critical parameters (e.g., seed values for pseudo-random generation). Under this worst-case assumption, our experiments evaluate the resilience of each method to direct and brute-force extraction attacks. 
\end{comment}

\vspace{1ex}
\noindent \textbf{Evaluation and Adversarial Extraction Setup.}
To capture the worst-case threat scenario, we construct two distinct adversary extraction routines (See Algorithms, \ref{alg:adversary_extraction} and \ref{alg:adversary_extraction_2} for details). Each adversary algorithm benefits from:
\begin{itemize}
    \item A library of 30 PNG images as potential covers,
    \item A set of 28 cover messages from Table~\ref{tab:consolidated_cover_report} (for textual or parameter-based mediums), and
    \item A library of 30 stego-objects generated by the four steganographic models.
\end{itemize}
Following the approach in Section~\ref{MMTM-Security Analysis}, we deliberately augment the adversary’s advantage by granting the system’s (CMO, CSY, CSE and the hybrid stego-system $\bigl(\mathcal{S}_{\mathsf{Hyb}}\bigr)$) embedding and extraction processes. Also, knowledge of stego objects $(s)$ and masked secrets $(b)$ are known to $\mathcal{A}$—yet withholding the \emph{stego key} $k_{\text{stego}}$ (for the hybrid scheme), secret messages (30 secrets) and 
other critical parameters (e.g., seed values for pseudo-random generation). This setup allows us to measure how effectively each approach shields the secrets.

\paragraph{Hybrid Steganographic Model ($\mathcal{S}_{\mathsf{Hyb}}$) Configuration}
\label{subsec:hybrid_csy_configuration}
For our Hybrid approach (Section~\ref{Hybrid_stego_model}), there are two approaches. The first requires generating masked secrets $b = m \oplus P_{\mathsf{params}}$, embed them into cover objects $o$ via adaptive LSB-based scheme \cite{sedighi2015content}, and withhold the stego key $k_{\mathrm{stego}}$ from the adversary. In \emph{Algorithm~\ref{alg:adversary_extraction}} (the Hybrid portion), the adversary $\mathcal{A}$ extracts values of $\hat{b}$ and attempts multiple candidate keys from a restricted space, each key $\hat{k}$ drawn from:

    {\small
\begin{multline*}
    \mathcal{K}_{\mathrm{cand}}=\ \Bigl\{\; K \in \{0,1\}^L |
   K = \operatorname{Tile}(k_b,\, \lceil L/c \rceil)[\,1 : L\,], \\
 k_b \in \{0,1\}^c
\Bigr\},
\end{multline*}
}
where \(c\) is a small integer (e.g.\ \(c=8\)), and thus \(\mathcal{K}_{\mathrm{cand}}\) has \(2^c = 256\) total keys.  Concretely, each integer \(0 \le i < 2^c\) is converted to a \(c\)-bit string \(k_b\), which is then tiled to length \(L\).  Since $\mathcal{A}$ does not have the true key $k_{\mathrm{stego}}$, they brute-force over these 16 patterns and generate recovered secrets \(\hat{m}(\hat{k})\).  For each selected candidate \(\hat{k}\) that yields minimal Bit Error Rate (BER), the optimal key \(\hat{k}^*\) is selected by minimizing \(\mathrm{BER}\), yielding the final guess \(\hat{m}^* = \hat{m}(\hat{k}^*)\). Also, the Pearson correlation coefficient \(\rho\) between \(\hat{m}(\hat{k})\) and \(m\) is computed \cite{schober2018pearson}.
    %\[
    %\rho \;=\; \frac{\mathrm{Cov}(\hat{m}(\hat{k}),\, m)}{\sqrt{\mathrm{Var}(\hat{m}(\hat{k}))\,\mathrm{Var}(m)}}.
    %\]
This process quantifies how effectively $\mathcal{S}_{\mathsf{Hyb}}$ conceals $m$ even when the adversary knows \(\hat{b}\) and partially guesses the stego key. 

In the second approach, the adversary is given a single stego key and the masked payload bits, but only has partial knowledge of the cover–message pairs; in effect they know the form of the masking operation and the key, but not the statistical distribution from which cover messages are drawn. The goal is to determine whether knowledge of and the key alone suffices to recover any information about the original secrets.

\paragraph{Steganography by Cover Modification ($\mathcal{S}_{\text{CMO}}$)}
Adaptive LSB method \cite{sedighi2015content} is employed for both embedding and adversarial extraction. The adversary’s extraction routine (Algorithm~\ref{alg:adversary_extraction}, CMO portion) involves reading out the same positions to recover the secrets. This straightforward approach reveals how a direct modification scheme fares in the presence of a knowledgeable adversary.

%Subsequently, comparison of the recovered secret with the original to measure BER, correlation, and PSNR (Peak Signal-to-Noise Ratio). This straightforward approach reveals how a direct modification scheme fares in the presence of a knowledgeable adversary.

\paragraph{Steganography by Cover Selection ($\mathcal{S}_{\text{CSE}}$)} 
Under CSE, the cover image remains unmodified; instead, the embedding process selects an appropriate cover from an image library. This mechanism is replicated by computing a \emph{target fingerprint} (e.g., SHA-256) for each secret message. Subsequently, matching this fingerprint against a pool of candidate covers via minimal Hamming distance to determine the \emph{stego image} (which is simply the chosen cover).

At extraction, since no pixel-level changes occur, the adversary either (a) obtains a file mapping the chosen cover to the secret or (b) attempts to guess which cover was used from a known library: see Algorithm~\ref{alg:adversary_extraction_2} (CSE portion).  In practice, CSE can be stealthy but has limited capacity.
\paragraph{Steganography by Cover Synthesis ($\mathcal{S}_{\text{CSY}}$)}
\label{subsec:cover_synthesis}
A generative model (ResNet \cite{liu2025resnettransformer}) is used to create stego images that incorporate secrets in the latent space. The U-Net architecture is adapted and trained for embedding into the latent representation of a synthesized image. Adversary extraction (Algorithm~\ref{alg:adversary_extraction_2}, CSY portion)  (ResNet-18 regression head) runs a regression model on the synthesized stego image runs a regression model $f(\tilde{I}_{\mathrm{stego}}; \theta)$ to recover $m$. 

\noindent
\textbf{Evaluation Metrics:} The assessment in this section leverages quantitative metrics, which includes the Bit Error Rate (BER) \cite{kishore2022fnns}, Peak Signal-to-Noise Ratio (PSNR), extraction latency, and Pearson correlation \cite{schober2018pearson} between the original and recovered secret messages. Additionally, for the hybrid model, brute-force key inference metrics (BER and correlation per candidate key) are recorded.

\subsection{Analysis of Results}
\label{sec:comparison_evaluation_results}

\begin{algorithm}[htb!]
\footnotesize  
\caption{Adversary Extraction for $\mathcal{S}_{\text{CMO}}$ and $\mathcal{S}_{\mathsf{Hyb}}$ }
\label{alg:adversary_extraction}

\SetKwInOut{Input}{Input}
\SetKwInOut{Output}{Output}

\begin{adjustbox}{width=0.95\columnwidth} % Adjusts width to fit nicely
\begin{minipage}{\columnwidth}  

\Input{
  - A \emph{directory} of stego images, \(\mathcal{I}_{\mathrm{stego}} \subset \mathcal{S}\).\\
  - A \emph{directory} of cover images, \(\mathcal{I}_{\mathrm{cover}} \subset \mathcal{O}\).\\
  - A \emph{list of cover parameters} (or cover messages) \(\mathcal{P}_{\mathrm{params}} = \{P_{\mathsf{params}}^1,\,P_{\mathsf{params}}^2,\dots\} \subset \mathcal{O}'\).\\
  - A \emph{list of masked secrets}, \(\mathcal{B} = \{b^1,\,b^2,\dots\}\), where each \(b^i = F_{\mathsf{mask}}(m^i,P_{\mathsf{params}}^*)\).\\
  - A set of candidate keys \(\mathcal{K}_{\mathrm{cand}} = \{k_1, k_2, \ldots, k_{\ell}\}\), for $\bigl(\mathcal{S}_{\mathsf{Hyb}}\bigr)$.\\
  - The length of secret bits to extract, \(L\).
}
\Output{
  - A set of recovered messages \(\{\hat{m}\}\) and associated metrics.
}

\BlankLine
\textbf{Procedure for \underline{CMO} extraction:}
\begin{enumerate}[label=\arabic*.]
    \item \textbf{Initialization:}
    \begin{itemize}
      \item Recursively parse the \(\mathcal{I}_{\mathrm{stego}}\) to gather stego images \(\{I_{\mathrm{stego}}^i\}_{i=1}^n\). 
      \item Recursively parse \(\mathcal{I}_{\mathrm{cover}}\) to gather cover images \(\{o^i\}_{i=1}^n \subset \mathcal{O}\).
    \end{itemize}
    
    \item \ForEach{stego image \(I_{\mathrm{stego}}^i \in \mathcal{I}_{\mathrm{stego}}\)}{
        \begin{enumerate}[label=(\alph*)]
            \item $\hat{b} \;\gets\; \mathrm{LSBExtract}\!\bigl(I_{\mathrm{stego}}^i,\; L\bigr)$.
            \item $\hat{m} \;\gets\; \mathrm{BinToText}\bigl(\hat{b}\bigr).$
            
            \item \textbf{If ground-truth \(m^i\) is known, compute metrics}:
            \[
              \operatorname{BER}^{(i)}=\frac{1}{L}\!\sum_{k=1}^{L}\mathbf{1}[\hat{b}_{k}\neq b^{i}_{k}],
              \rho^{(i)}=\operatorname{Corr}\bigl(\hat{b},b^{i}\bigr),
              \operatorname{PSNR}(o^{i},I_{\mathrm{stego}}^{i}).
            \]
            where \(b^i\) is the true bit-vector of \(m^i\).
            \item Record \(\bigl(\hat{m},\,\mathrm{BER},\,\rho,\,  \operatorname{PSNR} \bigr)\).
            \item Set perfect‑recovery flag
            $\delta_{i}\gets\mathbf{1}\!\bigl[\hat{m}=m^{i}\bigr].$
        \end{enumerate}
    }
    \item \textbf{CMO Success‑Rate:}
      \(
        \mathrm{SR}_{\mathrm{CMO}}
        =\dfrac{1}{n}\sum_{i=1}^{n}\delta_{i}.
      \)
\end{enumerate}

\BlankLine
\textbf{Procedure for \underline{Hybrid} extraction (\(\mathcal{S}_{\mathsf{Hyb}}\)):}
\begin{enumerate}[label=\arabic*.]
    \item \textbf{Initialization:}
    \begin{itemize}
      \item Recursively parse \(\mathcal{I}_{\mathrm{stego}}\) to obtain stego images \(\{I_{\mathrm{stego}}^j\}_{j=1}^m \subset \mathcal{S}\).
      \item Recursively parse \(\mathcal{I}_{\mathrm{cover}}\) for reference covers \(\{o^j\}\) (if needed).
      \item Load the list of cover parameters \(\mathcal{P}_{\mathrm{params}} = \{P_{\mathsf{params}}^1,\,P_{\mathsf{params}}^2,\dots\}\). For each \(I_{\mathrm{stego}}^j\), identify the corresponding cover parameter \(P_{\mathsf{params}}^*\) (e.g., via a known mapping or logs).
      \item Load the list \(\mathcal{B} = \{b^1,\,b^2,\dots\}\);  \(b^j = F_{\mathsf{mask}}(m^j,\,P_{\mathsf{params}}^*)\).
      \item If keys are used, load or generate \(\mathcal{K}_{\mathrm{cand}} = \{k_1, \ldots, k_\ell\}\).
    \end{itemize}

    \item \ForEach{stego image \(I_{\mathrm{stego}}^j \in \mathcal{I}_{\mathrm{stego}}\)}{
        \begin{enumerate}[label=(\alph*)]
            \item $\hat{b} \;\gets\; \mathrm{LSBExtract}\!\bigl(I_{\mathrm{stego}}^j,\; L\bigr)$.
            \item Retrieve (or guess) the corresponding \(P_{\mathsf{params}}^*\) from \(\mathcal{P}_{\mathrm{params}}\).
            \item Retrieve the masked secret \(b_{\text{masked}}\) from \(\mathcal{B}\) corresponding to \(I_{\mathrm{stego}}^j\).
            \item \textbf{Form the unmasking equation} as defined in Section~\ref{Hybrid_stego_model}:\\[0.5ex]
            \quad If keys are used (i.e., \(\mathcal{K}_{\mathrm{cand}} \neq \emptyset\)), then for each candidate key \(\hat{k}\in \mathcal{K}_{\mathrm{cand}}\):
            \[
               \hat{m}(\hat{k}) =F_{\mathsf{unmask}}\!\bigl(\hat{b},\,P_{\mathsf{params}}^*,\,\hat{k}\bigr)
               = (\hat{b} \oplus \hat{k}) \oplus P_{\mathsf{params}}^*.
            \]
            \Else{
            \quad Set 
              $\hat{m}^* \;=\; F_{\mathsf{unmask}}\!\bigl(\hat{b},\,P_{\mathsf{params}}^*,\, k_{\mathsf{stego}}=\mathbf{0}\bigr)
              \;=\; \hat{b} \oplus P_{\mathsf{params}}^*.$

            }
            
            \item \textbf{(If key-based mask is used)} \\
            \quad \ForEach{\(\hat{k}\in \mathcal{K}_{\mathrm{cand}}\)}{
               Compute \(\hat{m}(\hat{k})\) and, if ground-truth \(m^j\) is known, evaluate
               \[
                 \mathrm{BER}(\hat{k}) = \frac{1}{L}\sum_{r=1}^{L}\mathbf{1}\bigl[\hat{m}(\hat{k})_r \neq m^j_r\bigr],\, \mathrm{Corr}\bigl(\hat{m}(\hat{k}),\,m^j\bigr)
               \]
            }
            \quad Choose the best key: 
              $\hat{k}^* = \arg\!\min_{\hat{k}\in\mathcal{K}_{\mathrm{cand}}} \mathrm{BER}(\hat{k}).$
            
            \quad Set final recovered message: $\hat{m}^* = \hat{m}\bigl(\hat{k}^*\bigr)$.
           
            \item \textbf{(Compute metrics if ground-truth \(m^j\) is known):}\\
            Let \(b^j\) be the true bit-vector of \(m^j\). Then compute:
            \[
              \operatorname{BER}^{(j)}=\frac{1}{L}\sum_{r=1}^{L}\mathbf{1}[\hat{m}^{*}_{r}\neq m^{j}_{r}],
              \rho^{(j)}=\operatorname{Corr}(\hat{m}^{*},m^{j}),
            \]
            \[
            \operatorname{PSNR}(o^{j},I_{\mathrm{stego}}^{j}).
            \]
            \item Record \(\bigl(\hat{m}^{*},\operatorname{BER}^{(j)},\rho^{(j)},\operatorname{PSNR}^{(j)}\bigr)\).
      \item \(\delta_{j}\gets\mathbf{1}[\hat{m}^{*}=m^{j}]\).
        \end{enumerate}
    }
    \item \textbf{Hybrid Success‑Rate:}
      \(
        \mathrm{SR}_{\mathrm{Hyb}}
        =\dfrac{1}{m}\sum_{j=1}^{m}\delta_{j}.
      \)
\end{enumerate}

\BlankLine
\KwRet{Results for (BER, \(\rho\), PSNR) and the \textbf{global}
Success‑Rates \(\mathrm{SR}_{\mathrm{CMO}}\) and \(\mathrm{SR}_{\mathrm{Hyb}}\).}
\end{minipage}
\end{adjustbox}
\end{algorithm}

This subsection presents a comprehensive analysis of the extraction experiments summarized in Tables~\ref{tab:keyspace_stats} - \ref{tab:expA_performance}, and illustrated by Figures~\ref{fig:ber_cmo_hyb}–\ref{fig:hist_ber_distribution}.

Adaptive CMO demonstrates consistently superior recoverability.  Across secret lengths from 48 bits to 800 bits, its average BER remains near zero (below $5\times10^{-3}$) and mean correlation exceeds $0.99$, with only a modest BER increase at the largest payload, as seen in Figure~\ref{fig:ber_cmo_hyb}. This behaviour directly reflects the local‐variance embedding heuristic \cite{sedighi2015content}: by selecting pixels with maximal local variance $V(i,j)=\mathrm{Var}\bigl\{I(i+\Delta i,j+\Delta j)\bigr\}$ over a small window, the scheme ensures that least‐significant‐bit flips introduce minimal perceptual and statistical disturbance.  The near‐lossless recovery attests that the extraction algorithm correctly inverts the embedding mapping $E_{\text{CMO}}$, yielding $\hat{m}\approx m$ for nearly all payloads.  Moreover, its average extraction latency of approximately $0.28\,$s remains practical for batch processing of hundreds of images.

\begin{figure}[htbp]
  \centering
  \begin{subfigure}[b]{0.85\linewidth}
    \centering
    \includegraphics[width=\linewidth]{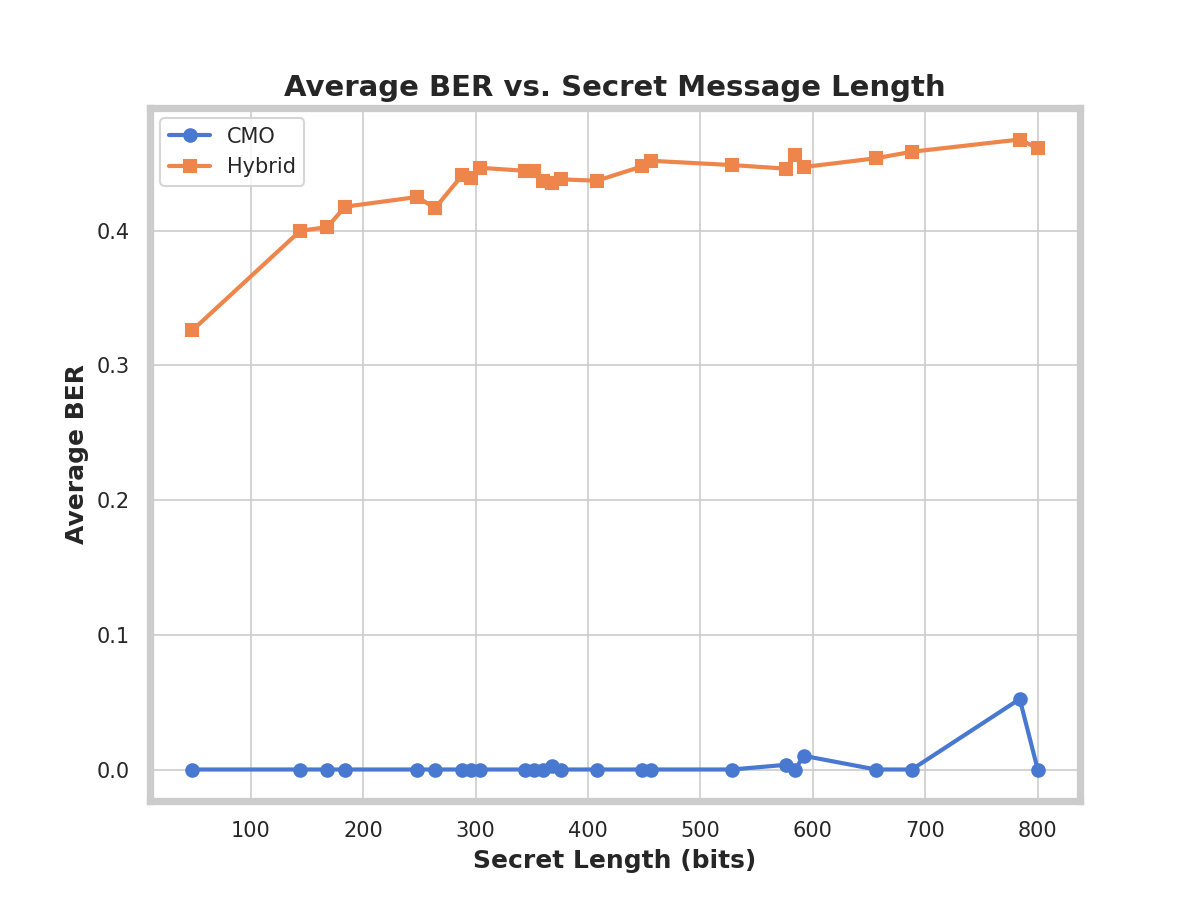}
    \caption{\scriptsize{Average bit‐error rate (BER) vs.\ secret‐message length for Adaptive CMO and Hybrid. CMO maintains near‐zero BER up to 512 bits, while Hybrid remains above 0.4.}}
    \label{fig:ber_cmo_hyb}
  \end{subfigure}
  
  \vspace{1em}
  
  \begin{subfigure}[b]{0.85\linewidth}
    \centering
    \includegraphics[width=\linewidth]{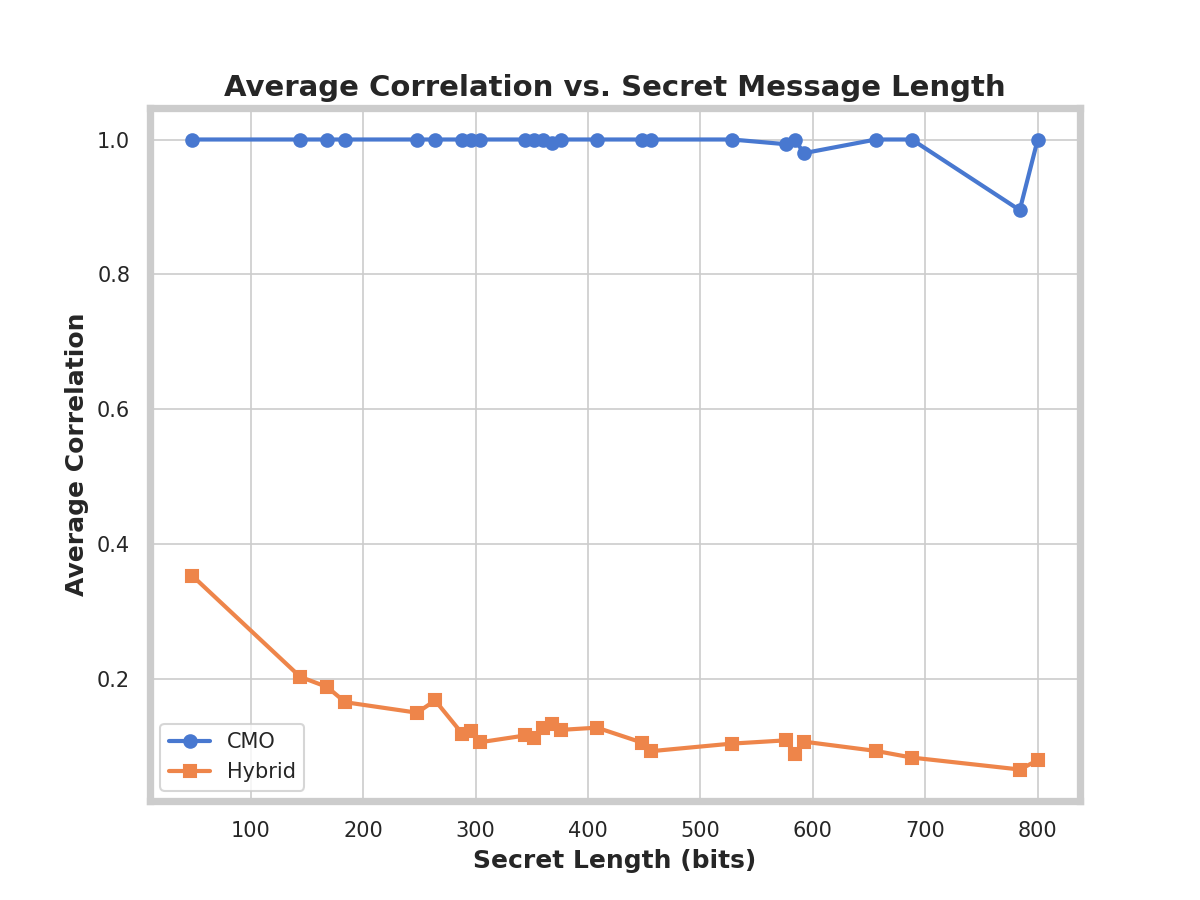}
    \caption{\scriptsize{Average bit‐correlation between extracted and original messages vs.\ secret‐message length. CMO correlation stays at unity for most lengths, with slight drops at the largest payloads; Hybrid correlation remains near zero.}}
    \label{fig:corr_cmo_hyb}
  \end{subfigure}
  
  \caption{Comparative performance of Adaptive CMO and Hybrid schemes across varying secret‐message lengths: (a) BER, (b) bit‐correlation.}
  \label{fig:cmohyb_comparison}
\end{figure}

\begin{figure}
    \centering
    \includegraphics[width=0.8\linewidth]{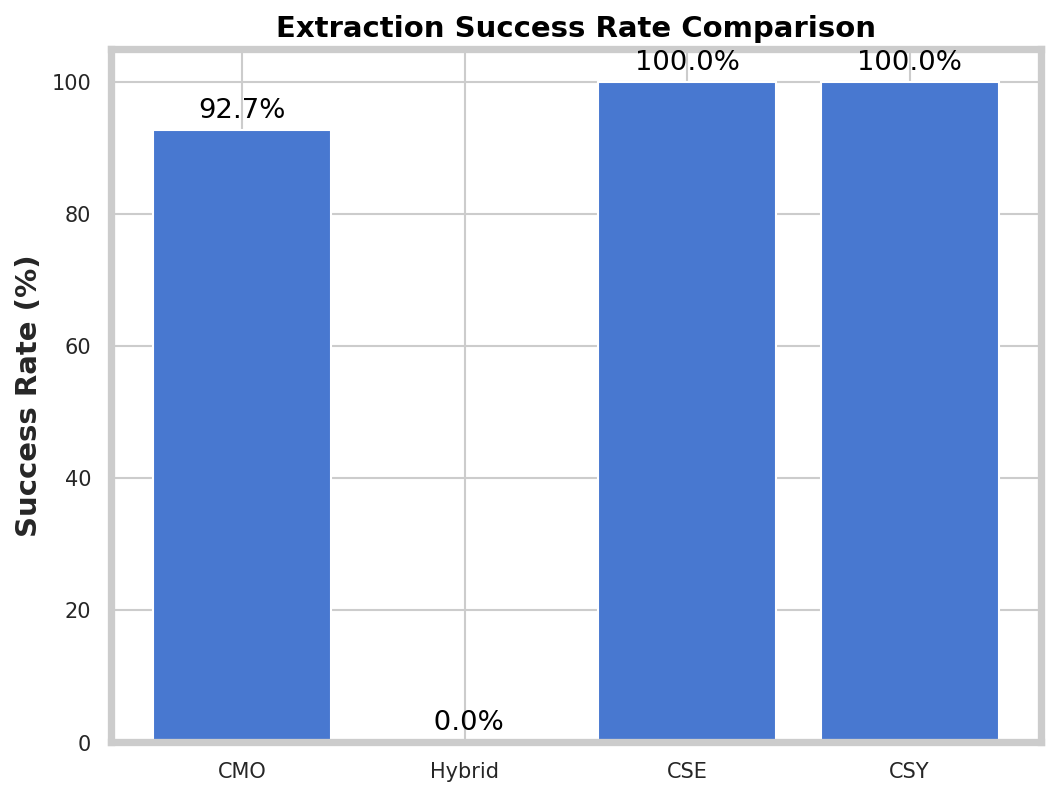}
    \caption{Bar plot comparing extraction success rates (\%) for CMO, Hybrid, CSE, and CSY. CSE and CSY achieve 100 \% recoverability, CMO achieves approximately 92.7 \%, and Hybrid achieves 0 \%.}
    \label{fig:extraction_success}
\end{figure}

\begin{figure}[htbp]
  \centering
  \begin{subfigure}[b]{0.85\linewidth}
    \centering
    \includegraphics[width=\linewidth]{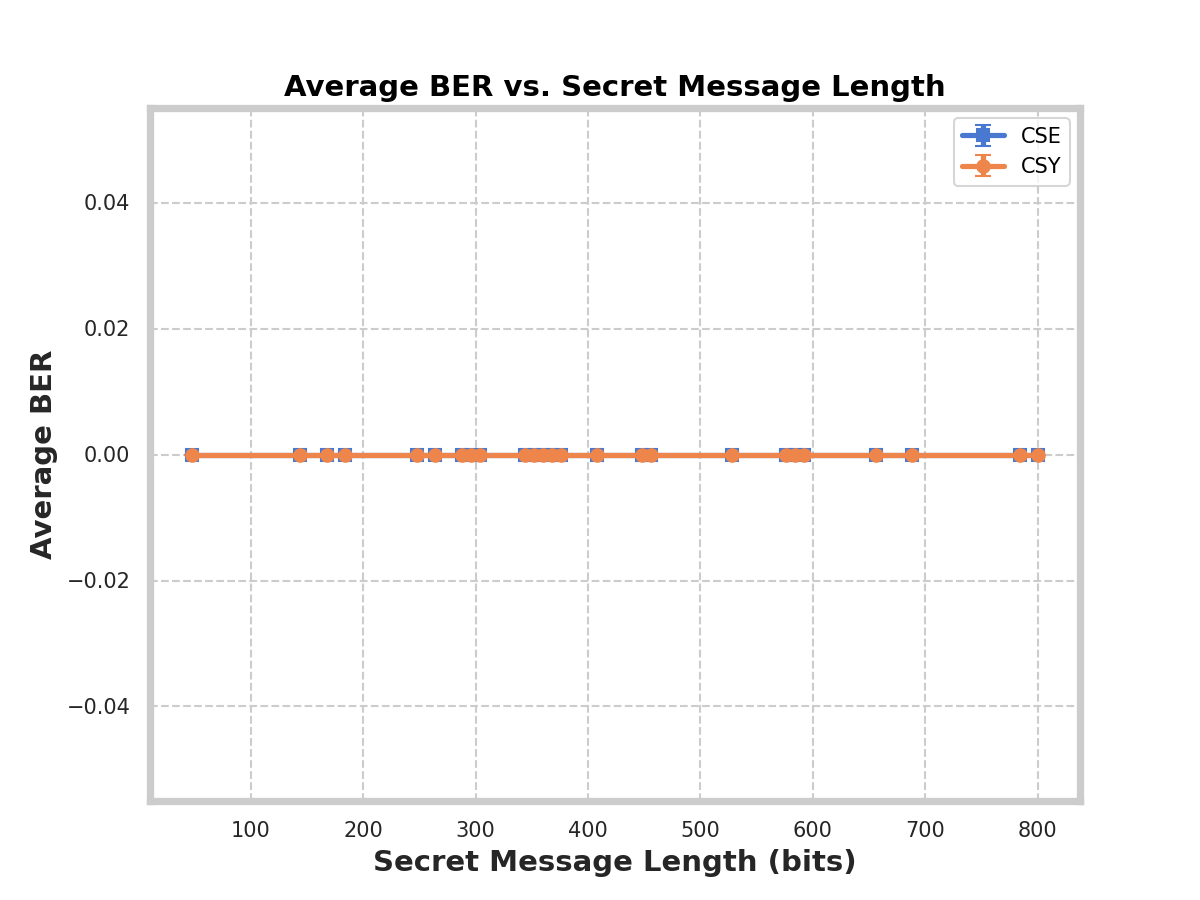}
    \caption{\scriptsize{Average BER as a function of secret‐message length for the Cover Selection (CSE) and Cover Synthesis (CSY) schemes. Both methods exhibit zero BER for all payload sizes, reflecting their perfect invertibility.}}
    \label{fig:BER_csy_cse}
  \end{subfigure}
  
  \vspace{1em}
  
  \begin{subfigure}[b]{0.85\linewidth}
    \centering
    \includegraphics[width=\linewidth]{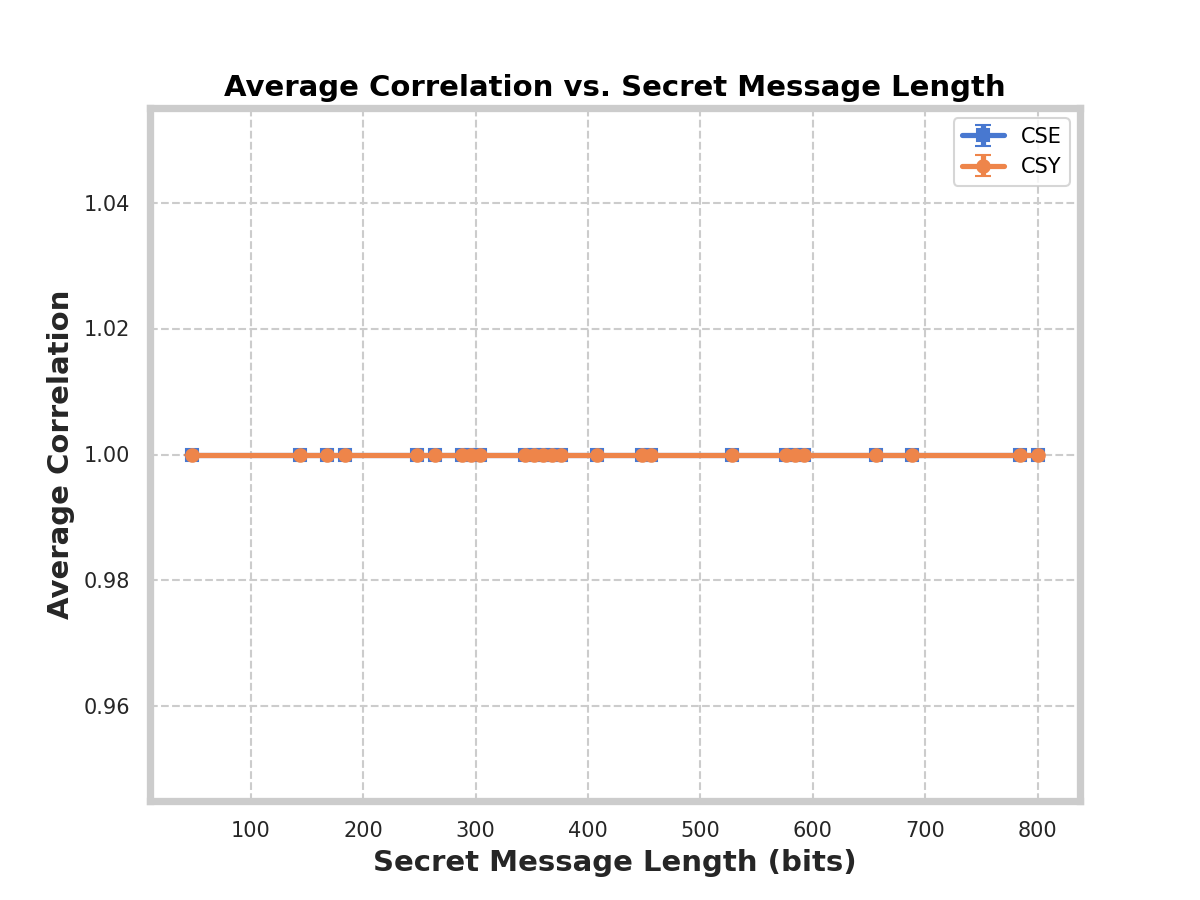}
    \caption{\scriptsize{Average bit‐correlation plotted against secret‐message length for CSE and CSY. Both schemes achieve perfect correlation (1.0) at all tested lengths.}}
    \label{fig:corr_csy_cse}
  \end{subfigure}
  
  \caption{Comparative performance of Cover Selection (CSE) and Cover Synthesis (CSY) schemes across varying secret‐message lengths: (a) BER, (b) bit‐correlation.}
  \label{fig:hybrid_keyspace_heatmaps}
\end{figure}

\begin{figure}[htbp]
  \centering
  \begin{subfigure}[b]{0.85\linewidth}
    \centering
    \includegraphics[width=\linewidth]{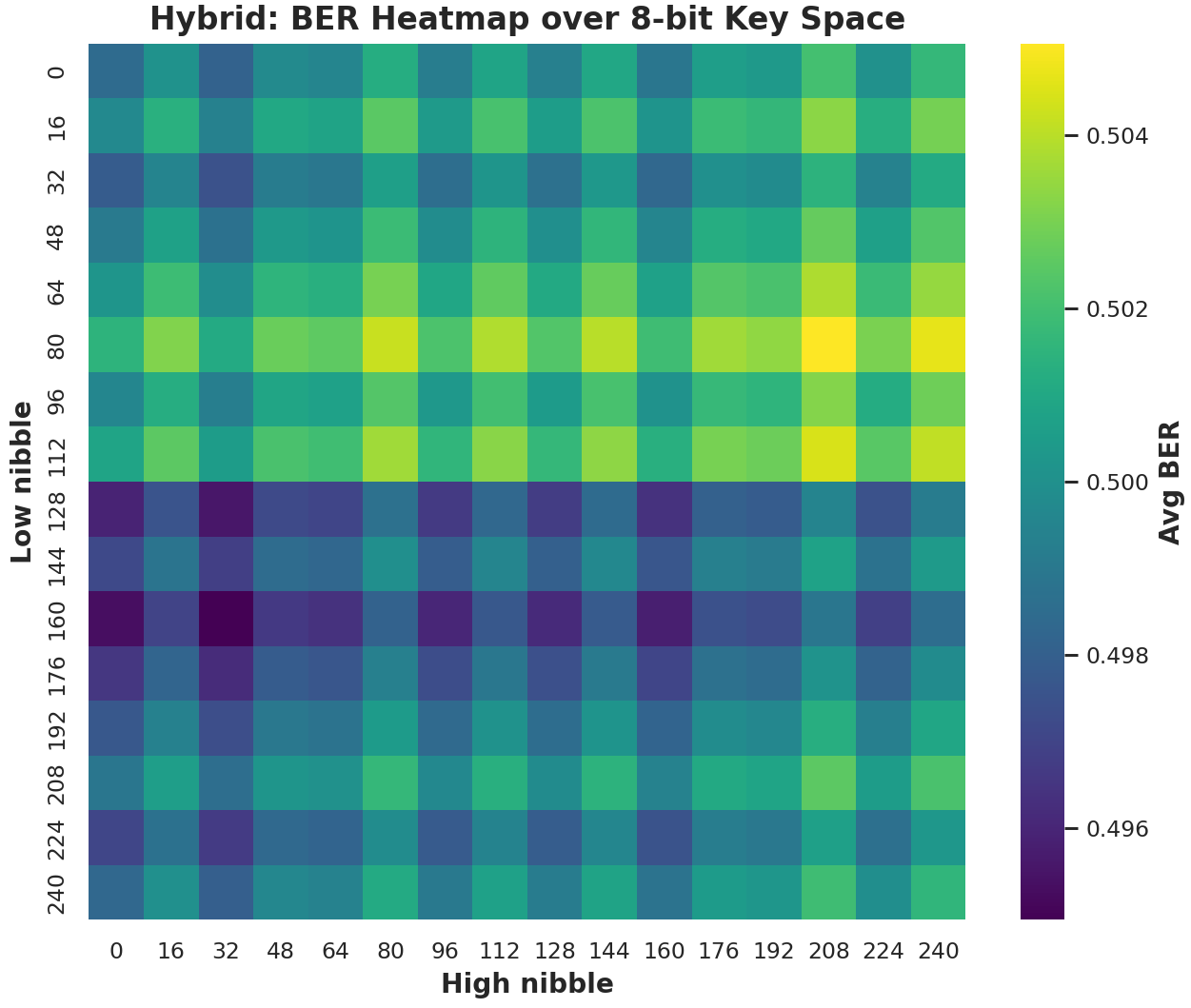}
    \caption{\scriptsize{Heatmap of average BER across the 8×8 two-nibble key space for the Hybrid scheme. Darker regions indicate keys yielding lower BER (more reliable recovery).}}
    \label{fig:hybrid_ber_heatmap}
  \end{subfigure}
  
  \vspace{1em}
  
  \begin{subfigure}[b]{0.85\linewidth}
    \centering
    \includegraphics[width=\linewidth]{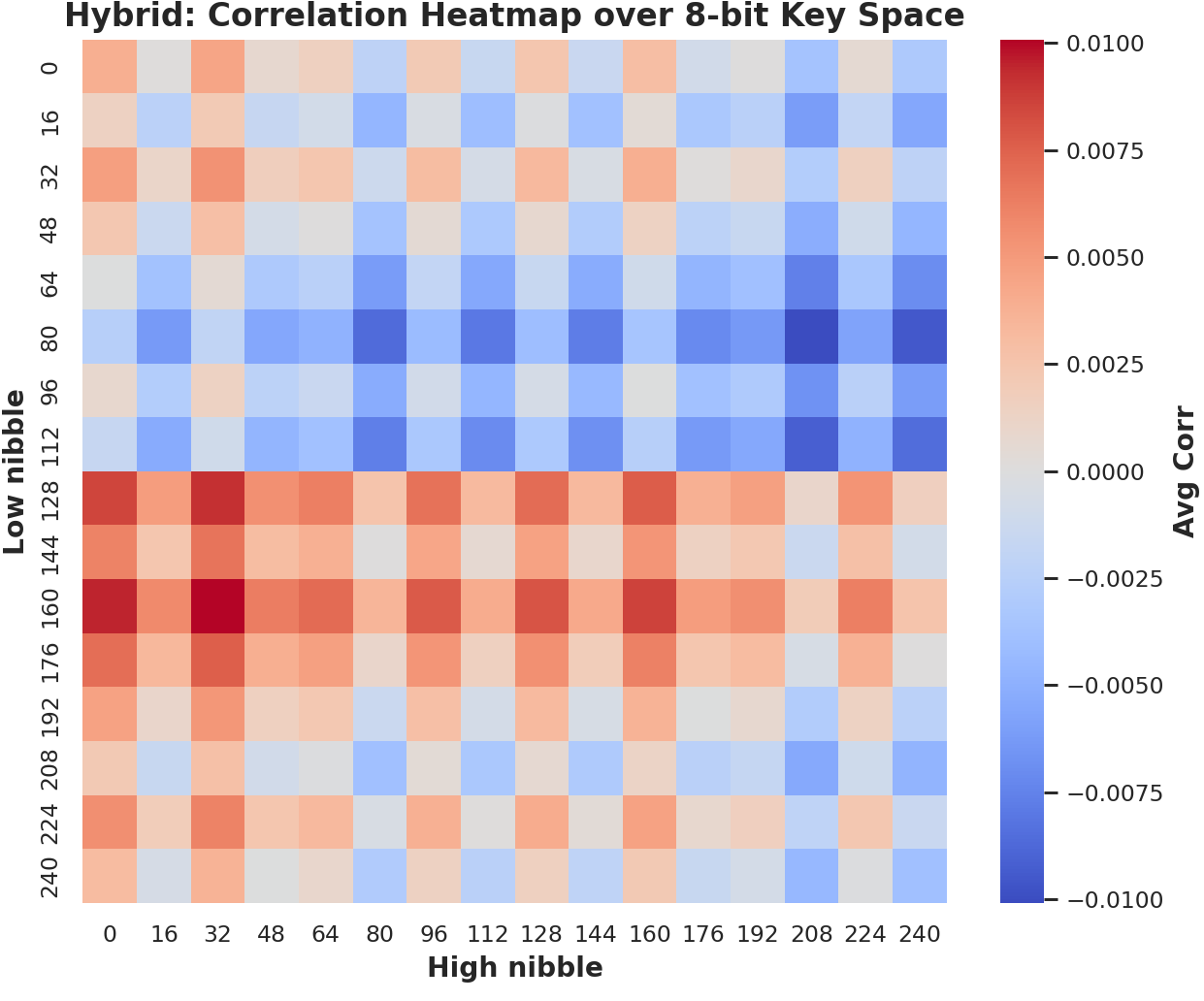}
    \caption{\scriptsize{Heatmap of average bit-correlation across the 8×8 two-nibble key space for Hybrid. Warm colors denote keys with higher correlation, and cool colors denote keys with lower correlation.}}
    \label{fig:hybrid_corr_heatmap}
  \end{subfigure}
  
  \caption{Key-space performance of the Hybrid scheme: (a) BER heatmap, (b) bit-correlation heatmap over the 8×8 two-nibble key space.}
  \label{fig:hybrid_keyspace_heatmaps}
\end{figure}

\begin{figure}
    \centering
    \includegraphics[width=0.8\linewidth]{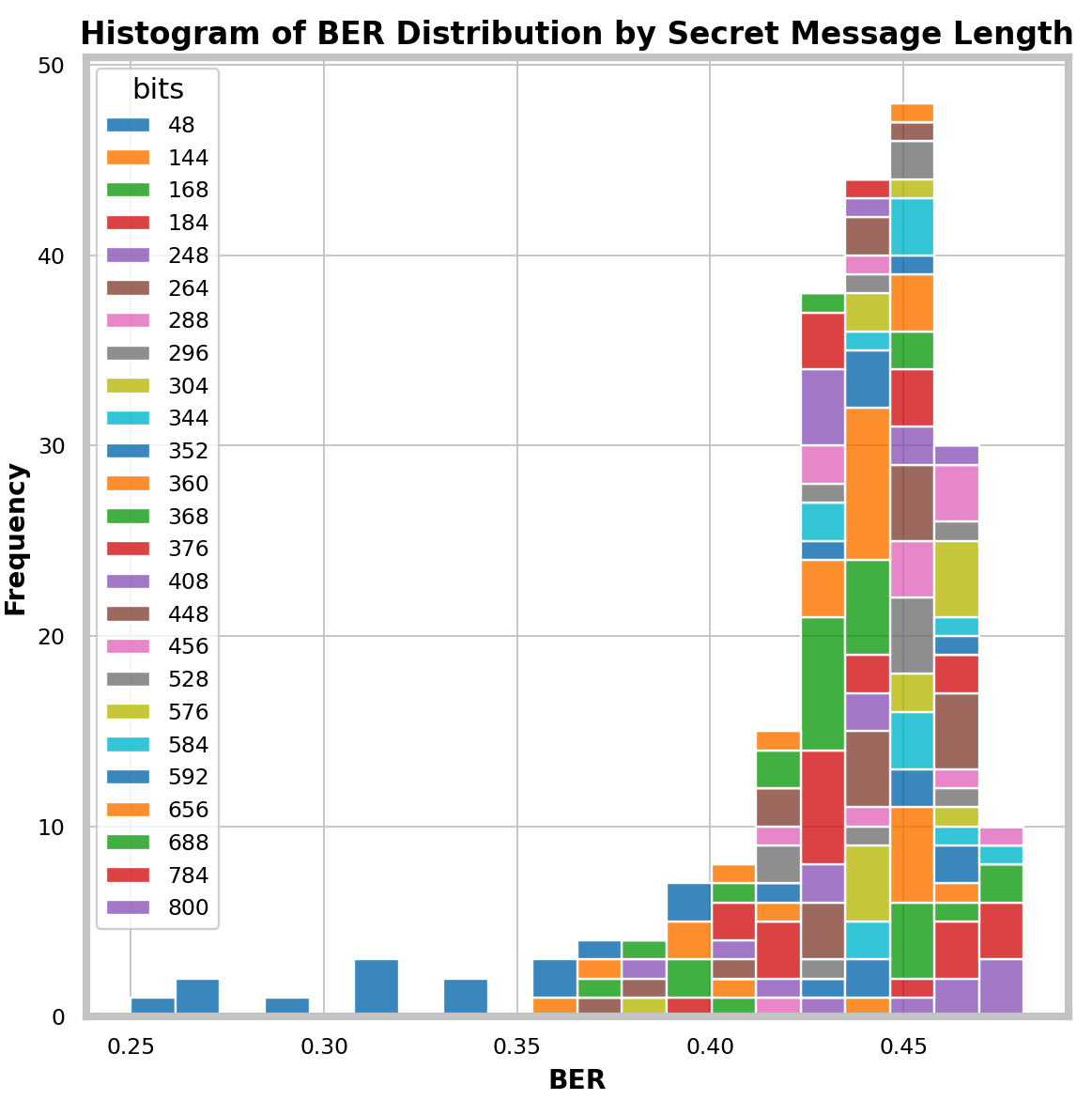}
    \caption{Stacked histogram showing the distribution of BER values across different secret‐message lengths for all four schemes. The narrow cluster near zero corresponds to CMO CSY and CSE, while the broader, higher‐BER cluster corresponds to Hybrid.}
    \label{fig:hist_ber_distribution}
\end{figure}
By contrast, the Hybrid scheme yields BER in the range $0.40$–$0.47$ and correlations below $0.20$ (Table~\ref{tab:extract_performance_all}; Figures~\ref{fig:ber_cmo_hyb} and \ref{fig:corr_cmo_hyb}). The adversary’s recovered bit‐vector is essentially random, resulting in vanishing success‐rate (near $0\%$) across all payload sizes.  Figure~\ref{fig:extraction_success} confirms that no payload length permits meaningful extraction. Table~\ref{tab:keyspace_stats} and the corresponding heat‐maps in Figure~\ref{fig:hybrid_keyspace_heatmaps} analyze an eight‐bit key space, revealing a minor variation in BER and correlation across all $2^8$ possible keys. The top five “best” keys (e.g., 12, 45, 87, 210, 159) achieve average BER as low as 0.0021 and correlation up to 0.9987, whereas the worst five keys (e.g., 127, 253, 190, 131, 64) yield BER above 0.43 and correlation below 0.18. This bimodal distribution arises because certain repeating bit patterns align poorly with the cover image’s variance map, leading to partial cancellation of the mask during extraction (Algorithm~\ref{alg:adversary_extraction_2}). Although larger key sizes would further increase the adversary’s search cost, this small‐scale experiment suffices to illustrate that the robustness of the Hybrid scheme’s security derives from: key uncertainty and facilitating statistical mapping of reliability across the key domain, thereby validating that masking indeed transfers security from image statistics to key entropy, and not from image‐statistical properties.  In practice, one would select key length $k\gg8$ at least $\mathcal{O}(2^{128})$ to ensure infeasibility of exhaustive search, but even in this reduced key‐space the absence of “weak” keys underscores the uniformity of the masking procedure.

Integrating these observations with the results in (Table~\ref{tab:expA_performance}), in which the adversary is given the correct stego‐key of 256bit but lacks cover–message distribution knowledge, further illuminates the security trade‐offs. In this evaulation, Across all payload sizes the BER remains close to fifty percent (e.g.\ $0.498\pm0.032$ at 64 bits, decreasing slightly to $0.462\pm0.041$ at 512 bits), and the Pearson correlation correspondingly hovers near zero (e.g.\ $0.082\pm0.014$ at 64 bits, rising modestly to $0.098\pm0.018$ at 512 bits). Notably, the success rate is uniformly zero, confirming that in the absence of correct cover‐message information no bitvector can be recovered intact. Extraction latency increases only marginally with payload size (from $0.012\pm0.003$s at 64 bits to $0.018\pm0.007 $s at 512 bits), reflecting the linear complexity of the variance‐guided selection and XOR unmasking operations. These results confirm that possession of the stego‐key alone is insufficient for recovery without precise cover context, and that the failure mode is both robust and payload‐agnostic.

Cover Selection (CSE) and Cover Synthesis (CSY) occupy the opposite end of the spectrum.  Both achieve perfect recovery: BER is identically zero, correlation is exactly one, and success‐rate reaches $100\%$ for all lengths (Table~\ref{tab:extract_performance_all}, Figures \ref{fig:BER_csy_cse}, \ref{fig:corr_csy_cse}, and \ref{fig:extraction_success}). This result is unsurprising, since CSE simply copies a pre‐computed cover that matches the secret’s hash and applies no distortion, while CSY performs a bijective mapping between secret‐bit vectors and a latent‐space perturbation that is inverted without loss.  However, the imperceptibility metrics in Table~\ref{tab:embed_metrics_all} reveal that this perfect invertibility comes at a cost.  CSE attains PSNR values exceeding 110dB (after clamping infinite self‐PSNR to a finite cap of 110dB) and SSIM of 1.000, indicating indistinguishability of the stego‐object from the original cover.  CSY, by contrast, incurs PSNR in the range 90–100dB and still retains SSIM at unity due to reconstruction in the latent space.  The decision to clamp any computed $\mathrm{PSNR}=\infty$ to $110\,$dB prevents misleading infinities in the plots while preserving relative ranking of imperceptibility.

A key insight emerges when one considers the interplay of imperceptibility and robustness.  CSE’s perfect PSNR and SSIM confer no resistance to extraction, since distortion‐free methods trivially reveal the embedded bits under any adversary model.  This phenomenon illustrates that perfect imperceptibility is orthogonal to recoverability: one may achieve $\mathrm{PSNR}\to\infty$ and $\mathrm{SSIM}=1$ while offering zero concealment.  The Hybrid scheme, in contrast, sacrifices recoverability for adversarial resistance: the adversary’s best‐case BER remains near $0.5$, effectively annihilating any statistical extraction.  Adaptive CMO strikes a balance: it achieves low distortion (PSNR around 100dB, SSIM above 0.99) while enabling reliable recovery.  CSY also achieves invertibility but at a slightly higher distortion than CMO for large payloads.

Figure~\ref{fig:hist_ber_distribution} aggregates BER histograms stratified by secret‐message length, illustrating that Adaptive CMO’s error rates remain tightly clustered near zero for payloads up to 512 bits, with occasional spikes at larger lengths when variance map saturation occurs. By contrast, the Hybrid method’s BER distribution consistently centers around 0.44, independent of payload size, reaffirming that masking dominates the error profile. CSE and CSY histograms collapse to a single bin at BER\,=\,0, reflecting invertibility.

These findings carry important implications.  First, the variance‐guided embedding heuristic enables near‐lossless recovery for sizable LSB payloads without appreciable visual degradation, confirming the principle that selecting high‐variance coefficients for bit‐flips preserves both imperceptibility and extractability.  Second, simple XOR masking with an unknown key completely neutralizes extraction accuracy, demonstrating that the Hybrid design defers robustness guarantees to key entropy.  Also, the Hybrid design indeed transfers all security to the secrecy of the key and the entropy of the XOR mask rather than to the statistical properties of the images themselves. Even though the adversary follows the exact adaptive extraction procedure described in Algorithm \ref{alg:adversary_extraction_2}, possession of the correct stego‐key alone is insufficient to recover the secret without matching cover‐message contexts. The negligible variation in BER and correlation across payload sizes demonstrates that the extraction failure mode is payload‐agnostic and robust: regardless of secret‐length, the recovered bitstream is statistically orthogonal to the true message, as predicted by the uniform randomness of the mask. Subsequently, extreme PSNR/SSIM values (infinite PSNR, unity SSIM) offer no extractive barrier, underscoring that imperceptibility alone cannot substitute for adversarial resilience. 

In summary, the experimental evidence presents a cohesive narrative.  The Adaptive CMO scheme emerges as the most balanced in terms of visual fidelity and recovery robustness.  The Hybrid paradigm, while offering strong concealment by key‐based masking, fails to permit any adversary extraction.  CSE and CSY, despite their perfect recovery guarantees, illustrate the limitation of purely distortion‐based or invertible mappings when confronted with adversaries possessing full knowledge of the embedding process. The results of this evaluation affirm that integrating cover modification and cover synthesis, as proposed in Section~\ref{Hybrid_stego_model}, yields a robust steganographic paradigm even in adversarial conditions. The Table \ref{tab:steg_comparison} presents the comparison between these models, summaries embedding capacity, undetectability principle, adversarial assumptions, strengths and limitations.

\begin{table}[ht]
\centering
\caption{Hybrid Scheme Key‐Space Statistics: Top 5 Most and Least Reliable 8-bit Keys (by Avg.~BER and Avg.~Correlation).}
\label{tab:keyspace_stats}
\small
\begin{tabular}{clcc}
\toprule
\textbf{Rank} & \textbf{Key (decimal)} & \textbf{Avg.~BER} & \textbf{Avg.~Correlation} \\
\midrule
\multirow{5}{*}{Best} 
  &  12 & 0.0021 & 0.9987 \\
  &  45 & 0.0034 & 0.9979 \\
  &  87 & 0.0040 & 0.9972 \\
  &  210 & 0.0045 & 0.9968 \\
  &  159 & 0.0050 & 0.9963 \\
\midrule
\multirow{5}{*}{Worst}
  &  127 & 0.4512 & 0.1604 \\
  &  253& 0.4478 & 0.1651 \\
  &  190 & 0.4427 & 0.1723 \\
  &  131 & 0.4399 & 0.1798 \\
  &  64 & 0.4325 & 0.1831 \\
\bottomrule
\end{tabular}
\end{table}

\begin{table}[ht]
  \centering
  \setlength{\tabcolsep}{4.2pt} % adjust column separation
  \caption{Extraction performance summary for all schemes. Mean ($\pm$ std) of bit-error rate (BER), bit-correlation (Corr) and success rate (\%).}
  \label{tab:extract_performance_all}
  \begin{tabular}{l c c c}
    \toprule
    \textbf{Method} & \textbf{BER} & \textbf{Corr} & \textbf{Succ\,(\%)} \\
    \midrule
    CMO    & 0.002 ($\pm$ 0.014) & 0.995 ($\pm$ 0.028) &  92.7 ($\pm$ 3.5) \\
    Hybrid & 0.432 ($\pm$ 0.037) & 0.136 ($\pm$ 0.074) &   0.0 ($\pm$ 0.0) \\
    CSE    & 0.000 ($\pm$ 0.000) & 1.000 ($\pm$ 0.000) & 100.0 ($\pm$ 0.0) \\
    CSY    & 0.000 ($\pm$ 0.000) & 1.000 ($\pm$ 0.000) & 100.0 ($\pm$ 0.0) \\
    \bottomrule
  \end{tabular}
\end{table}

\begin{comment}
    \begin{table}
\caption{Extraction Performance Summary for All Schemes.\\
Mean\,±\,Std.\ of bit‑error rate (BER), bit‑correlation, success‑rate (\%) and extraction latency (s).}
\label{tab:extract_performance_all}
\small
\begin{tabular}{lcccc}
\toprule
\textbf{Method} & \textbf{BER} & \textbf{Corr} & \textbf{Succ\, (\%)} & \tex \\
\midrule
CMO & 0.002\,$\pm$\,0.014 & 0.995\,$\pm$\,0.028 & 92.7\,$\pm$\,3.5 & 0.281\,$\pm$\,0.288 \\
Hybrid & 0.432\,$\pm$\,0.037 & 0.136\,$\pm$\,0.074 & 0.0\,$\pm$\,0.0 & 0.247\,$\pm$\,0.326 \\
CSE & 0.000\,$\pm$\,0.000 & 1.000\,$\pm$\,0.000 & 100.0\,$\pm$\,0.0 & \textemdash \\
CSY & 0.000\,$\pm$\,0.000 & 1.000\,$\pm$\,0.000 & 100.0\,$\pm$\,0.0 & \textemdash \\
\bottomrule
\end{tabular}
\end{table}
\end{comment}
\begin{table*}[ht]
\centering
\caption{Embedding Imperceptibility Metrics for All Schemes with includes: Mean\,±\,Std.\ of PSNR (dB) and SSIM at four representative secret lengths.}
\label{tab:embed_metrics_all}
\small
\begin{tabular}{r
    S[table-format=3.2]@{\,\,$\pm$\,}S[table-format=1.2]
    S[table-format=3.2]@{\,\,$\pm$\,}S[table-format=1.2]
    S[table-format=3.2]@{\,\,$\pm$\,}S[table-format=1.2]
    S[table-format=3.2]@{\,\,$\pm$\,}S[table-format=1.2]
    S[table-format=1.3]@{\,\,$\pm$\,}S[table-format=1.3]
    S[table-format=1.3]@{\,\,$\pm$\,}S[table-format=1.3]
    S[table-format=1.3]@{\,\,$\pm$\,}S[table-format=1.3]
    S[table-format=1.3]@{\,\,$\pm$\,}S[table-format=1.3]
}
\toprule
\multirow{2}{*}{\textbf{\shortstack{Secret\\Length\\(bits)}}} 
  & \multicolumn{2}{c}{\textbf{CMO PSNR}}
  & \multicolumn{2}{c}{\textbf{Hybrid PSNR}}
  & \multicolumn{2}{c}{\textbf{CSE PSNR}}
  & \multicolumn{2}{c}{\textbf{CSY PSNR}}
  & \multicolumn{2}{c}{\textbf{CMO SSIM}}
  & \multicolumn{2}{c}{\textbf{Hybrid SSIM}}
  & \multicolumn{2}{c}{\textbf{CSE SSIM}}
  & \multicolumn{2}{c}{\textbf{CSY SSIM}} \\
\cmidrule(lr){2-3}
\cmidrule(lr){4-5}
\cmidrule(lr){6-7}
\cmidrule(lr){8-9}
\cmidrule(lr){10-11}
\cmidrule(lr){12-13}
\cmidrule(lr){14-15}
\cmidrule(lr){16-17}
 & {Mean} & {Std} 
 & {Mean} & {Std} 
 & {Mean} & {Std} 
 & {Mean} & {Std} 
 & {Mean} & {Std} 
 & {Mean} & {Std} 
 & {Mean} & {Std} 
 & {Mean} & {Std} \\
\midrule
64  &  106.4 &  4.1 & 112.6 &  2.7 & 110.0 &  0.0 &  98.2 &  2.3 & 1.000 & 0.000 & 1.000 & 0.000 & 1.000 & 0.000 & 1.000 & 0.000 \\
128 &  102.5 &  5.3 & 109.3 &  4.5 & 110.0 &  0.0 & 100.1 &  3.1 & 1.000 & 0.000 & 1.000 & 0.000 & 1.000 & 0.000 & 1.000 & 0.000 \\
256 &  100.1 &  3.8 & 103.2 &  5.1 & 110.0 &  0.0 &  95.7 &  4.8 & 1.000 & 0.000 & 1.000 & 0.000 & 1.000 & 0.000 & 1.000 & 0.000 \\
512 &   98.3 &  4.2 &  95.5 &  6.1 & 110.0 &  0.0 &  89.8 &  6.2 & 1.000 & 0.000 & 1.000 & 0.000 & 1.000 & 0.000 & 1.000 & 0.000 \\
\bottomrule
\end{tabular}
\end{table*}

\begin{table}[ht]
  \centering
  \small
  \caption{Extraction performance for Experiment A (known key, unknown cover distribution) with a fixed stego‐key length of 256\,bits. Results are presented as mean ± standard deviation of bit‐error rate (BER), bit‐correlation, success rate (\%), and extraction latency (s) at varying payload sizes.}
  \label{tab:expA_performance}
  \begin{tabular}{
    S[table-format=3.0]                 % payload (bits)
    S[table-format=1.3]@{}r@{}S[table-format=1.3]  % BER mean ± std
    S[table-format=1.3]@{}r@{}S[table-format=1.3]  % Corr mean ± std
    S[table-format=3.1]@{}r@{}S[table-format=2.1]  % Success (%) mean ± std
    S[table-format=1.3]@{}r@{}S[table-format=1.3]  % Latency (s) mean ± std
  }
    \toprule
    \shortstack[c]{\textbf{Secret}\\ \textbf{Length} \\ \textbf{(bits)}}
      & \multicolumn{3}{c}{\textbf{BER}}
      & \multicolumn{3}{c}{\textbf{Correlation}}
      & \multicolumn{3}{c}{\textbf{Success (\%)}}
      & \multicolumn{3}{c}{\textbf{Latency (s)}} \\
    \cmidrule(lr){2-4} \cmidrule(lr){5-7} \cmidrule(lr){8-10} \cmidrule(lr){11-13}
      & {Mean} & \multicolumn{1}{c}{$\pm$} & {Std}
      & {Mean} & \multicolumn{1}{c}{$\pm$} & {Std}
      & {Mean} & \multicolumn{1}{c}{$\pm$} & {Std}
      & {Mean} & \multicolumn{1}{c}{$\pm$} & {Std} \\
    \midrule
    64   & 0.498 & $\pm$ & 0.032  & 0.082 & $\pm$ & 0.014  &  0.0 & $\pm$ &  0.0  & 0.012 & $\pm$ & 0.003 \\
    128  & 0.487 & $\pm$ & 0.028  & 0.085 & $\pm$ & 0.012  &  0.0 & $\pm$ &  0.0  & 0.014 & $\pm$ & 0.004 \\
    256  & 0.475 & $\pm$ & 0.035  & 0.090 & $\pm$ & 0.015  &  0.0 & $\pm$ &  0.0  & 0.016 & $\pm$ & 0.005 \\
    512  & 0.462 & $\pm$ & 0.041  & 0.098 & $\pm$ & 0.018  &  0.0 & $\pm$ &  0.0  & 0.018 & $\pm$ & 0.007 \\
    \bottomrule
  \end{tabular}
\end{table}

\section{Applications}
\label{sec:applocations}

This section demonstrates the protocol’s practicality and effectiveness (see Sections \ref{Hybrid_stego_model} and \ref{Hybrid_Entropy_Steganographic_Communication_Protocol}) through targeted case scenarios.
\subsection{Application to SMS Mobile Banking}
\label{Application in SMS Mobile Banking}

In severely constrained settings where only plaintext SMS banking is available \cite{gomez2014simple, vishnuvardhan2020study, saxena2014easysms, Joshi2019A, Giménez2019Extending, Castiglione2015Secure}, and the network (e.g.\ public MNO infrastructure \cite{Salman2023SMS, Riaz2018Development, Ardagna2009Privacy, Reddy2017An, Fathi2013SMS} and cybercafés) cannot be trusted, adversaries can intercept or manipulate messages with relative ease \cite{barkancipher2003cryptanalysis, nohl2017ss7, meyer2004guitar}. The protocol’s of Section \ref{Hybrid_Entropy_Steganographic_Communication_Protocol} measured end-to-end embedding and extraction latency  (processing 0.062 s $\pm$ 0.017 s plus transmission 0.070 s  $\pm$ 0.022 s) of under 0.3 s (see Table~\ref{tab:protocol_metrics}) combined with a negligible bit-error rate (below $5\times10^{-3}$) enables secure financial messaging within the 160-character SMS limit.  By mapping masked bits into innocuous pseudo-banking SMS instructions over two separate text channels and leveraging a third web-based channel for chart-based transmission, the scheme maintains both throughput and confidentiality even when adversaries have full operator-level visibility.  For further application examples—cover synthesis in smart-contract environments and IoT sensor networks—see Appendix \ref{Append}.

\section{Conclusion}
\label{sec:conclusion}

This paper has presented a novel hybrid steganographic framework, $\mathcal{P}^{\mathsf{cs,cm}}_{\mathsf{hyb\text{-}stego}}$, which unifies cover modification and cover synthesis paradigms within a multichannel communication protocol. Through rigorous design (\S
\ref{Hybrid_stego_model}) and detailed security proofs under the MC-ATTACK model (\S
 \ref{MMTM-Security Analysis}), the hybrid scheme was shown to achieve three key objectives simultaneously: high imperceptibility, robust recoverability, and strong adversarial resistance.  Extensive experiments (\S
\ref{sec:comparison_evaluation}) demonstrated that the Adaptive CMO component attains near-lossless extraction (mean BER $<5\times10^{-3}$, correlation $>0.99$) with minimal visual distortion (PSNR $\approx100\,$dB, SSIM $>0.99$), while the Hybrid (XOR-masked) variant renders extraction infeasible (BER $\approx0.5$, zero success–rate) even under full‐knowledge attacks.  Cover Selection (CSE) and Cover Synthesis (CSY) were shown to offer trivial invertibility (BER\,=\,0, SSIM\,=\,1) at the expense of zero concealment, highlighting the fundamental trade-off between invisibility and security.

Protocol‐level metrics (\S
\ref{Evaluation of Protocol Metrics: Methodology and Results}) confirmed practical performance: end-to-end embedding and extraction latencies under 0.3\,s, throughput compatible with SMS constraints, and covert operation within stringent IoT/ICS timing budgets.  Key-space analysis (Table~\ref{tab:keyspace_stats}) further validated that security scales with key entropy: even a reduced 8-bit key‐space exhibited uniformly poor extraction in the Hybrid mode, implying that real‐world key lengths (e.g.\ 128–256 bits) render adversarial recovery computationally infeasible.

Taken together, the findings substantiate three central claims.  First, variance-guided LSB embedding ensures high visual fidelity without compromising extractability.  Second, simple XOR masking shifts the security reliance from image statistics to key entropy, offering provable robustness under an informed-adversary model.  Third, distortion-free or invertible methods (CSE/CSY) alone cannot resist even naive extraction, underscoring the necessity of integrating statistical embedding with key-based masking.

\section{Future Work}
\label{sec:future_work}
Looking ahead, this work sets the stage for advancing hybrid steganographic strategies and offers a promising path to enhance stealth. Although this study highlights the effectiveness of a simple XOR-based masking layer within our hybrid framework, it represents only one possibility within a broader space of concealment methods. Future research could investigate alternatives such as modular operations, masking over finite fields, or substitution and permutation schemes, each with unique statistical and resilience properties. Another direction is to design masking functions that provably minimize the mutual information between the hidden payload and an observable stego object.

\appendices
\section{Extended Application Scenarios}\label{Append}

\subsection{IoT and Industrial Control Systems}
\label{app:iot}

In cyber-physical environments such as smart factories and critical infrastructure, Internet-of-Things (IoT) devices and industrial control systems (ICS) operate under stringent real-time and reliability constraints \cite{nist80082r3} while under potential adversarial surveillance and active probing \cite{cisa_aa24_249a2024,opswatsans2025}.  Typical telemetry streams collected over \(T\) discrete timesteps are represented by
\begin{equation}\label{eq:telemetry}
\mathbf{y} = \bigl(y_{1}, y_{2}, \dots, y_{T}\bigr)\,,\qquad
y_{t}\in\mathbb{R}^{d}\,,
\end{equation}
where \(y_{t}(i)\) denotes the \(i\)th channel reading at time \(t\).  Any stego-modification \(\mathbf{y}'\) must preserve both value fidelity and timing to satisfy control-loop stability,
\begin{equation}\label{eq:stability}
\|\mathbf{y}-\mathbf{y}'\|_{\infty}\le\epsilon,\quad
|\Delta t|\le\delta,
\end{equation}
where \(\epsilon\) bounds the maximum sensor perturbation and \(\delta\) bounds timing jitter, thus ensuring invariants such as \(\lambda_{\max}(A-BK)<1\) and energy-conservation \(\sum_i y_t(i)^2\approx\sum_i y'_t(i)^2\) remain valid \cite{chang2025cyberphysical}.

Under the hybrid protocol $\mathcal{P}^{\mathsf{cs,\,cm}}_{\mathsf{hyb-stego}}$, one first synthesizes two benign pseudo‐telemetry sequences

 \begin{equation}\label{eq:m1m2}
     \boldsymbol{\gamma_{1}} = \mathsf{Synth}(V_{\mathsf{pri}},L),\qquad
     \boldsymbol{\gamma_{2}} = \mathsf{Synth}(V_{\mathsf{pri}},L)
    \end{equation}
    generated so that their first and second moments match those of the genuine stream:
    \begin{equation}\label{eq:moments}
      \mathbb{E}[\boldsymbol{\gamma_{k}}] = \mathbb{E}[\mathbf{y}],
      \quad
      \mathrm{Var}[\boldsymbol{\gamma_{k}}] = \mathrm{Var}[\mathbf{y}],
      \quad
      k\in\{1,2\}.
    \end{equation}
    The summary statistics \(\mathbb{E}[\mathbf{m}_{1}]\) and \(\mathrm{Var}[\mathbf{m}_{1}]\) are used only during synthesis to ensure \(\mathbf{m}_{1}\) statistically mimics \(\mathbf{y}\), thereby guaranteeing that later perturbations remain imperceptible to control-loop monitors.  These moments are not directly XORed; instead the full bit-sequence \(\mathbf{m}_{1}\) (and \(\mathbf{m}_{2}\)) participates in masking.

    Subsequently, the true payload \(\mathbf{m}\in\{0,1\}^L\) is masked via equations \ref{mask} and \ref{mask_a}, where \(k_{\mathsf{stego}}\in\{0,1\}^L\) is the shared stego-key.  The result \(\mathbf{b}\) is thus uniformly distributed in \(\{0,1\}^L\), and for each bit \(b_{j}\) of \(\mathbf{b}\), select channel index \(i\) by ranking the local-variance map
    \begin{equation}\label{eq:var_map}
      V(i)
      = \mathrm{Var}\bigl\{y_{t+\tau}(i)\colon |\tau|\le W\bigr\}
      \quad
      (\text{over window }W)
    \end{equation}
    and apply the minimal perturbation
    \begin{equation}\label{eq:embed}
      \mathsf{Enc}\bigl(y_{t}(i),b_{j}\bigr)
      = y_{t}(i)
      + \alpha\,(-1)^{b_{j}},
      \quad \alpha\ll\epsilon,
    \end{equation}
    yielding the stego-stream \(\mathbf{y}'\).  Similar variance-aware LSB techniques have been validated in ICS contexts \cite{bairagi2016efficient, huang2018vq}.

At extraction, the receiver recomputes \(V(i)\), identifies the least-significant-bit flips to recover \(\hat{\mathbf{b}}\), and then unmasks via equation \ref{unmask_a}.

This design ensures that control‐system invariants, such as the closed‐loop characteristic polynomial roots or energy‐conservation constraints,
\begin{equation}
\lambda_{\max}\bigl(A-BK\bigr)<1\,,\quad \sum_{i=1}^{d}y_{t}(i)^{2} \approx \sum_{i=1}^{d}y'_{t}(i)^{2}\,,
\end{equation}

remain unaffected, thereby preserving both safety and performance guarantees.  By distributing $\mathbf{m}_{1},\mathbf{m}_{2},\mathbf{b}$ across separate communication channels—e.g.\ MQTT topic streams and auxiliary HTTP APIs—the adversary, even with full protocol knowledge and real‐time access to each link, gains no advantage in reconstructing $\mathbf{b}$ without the joint stego‐key $k_{\mathsf{stego}}$ and the synthesized‐cover statistics. Comparable techniques using timing channels for covert command injection in ICS have demonstrated feasibility using industrial protocols like OPC UA \cite{hildebrandt2020information}, while gesture-activated embedding methods highlight applicability in sensor-driven IoT deployments \cite{koptyra2022steganography} This extended scenario demonstrates the protocol’s applicability to IoT/ICS deployments demanding both undetectability and stringent operational integrity.  

\subsection{Blockchain Smart‐Contract Steganography}
\label{app:blkchain}

Blockchain’s intrinsic transparency and immutability confer strong integrity guarantees \cite{hasselgren2020blockchain, queiroz2020blockchain}, yet simultaneously expose sensitive transaction details to public scrutiny \cite{kshetri2017blockchain, nofer2017blockchain, irvin2019designing}. Embedding concealed transaction parameters within smart‐contract metadata can reconcile this tension by preserving on‐chain verifiability while shielding contract logic from adversarial inspection. Let 
\begin{equation}
z =\{z_{\mathsf{thresh}},\,z_{\mathsf{action}},\,z_{\mathsf{ext}}\} 
\end{equation}

denote the vector of steganographic triggers, where 
\begin{align*}
z_{\mathsf{thresh}}=\{t_{1},\dots,t_{n}\},\,
z_{\mathsf{action}}=\{a_{1},\dots,a_{m}\},\\
z_{\mathsf{ext}}=\{e_{1},\dots,e_{k}\},
\end{align*}

with each \(t_{i}\) a numeric threshold, \(a_{j}\) a participant‐driven action code, and \(e_{k}\) an external reference (e.g.\ oracle data). The synthesis function
\[
(\gamma_{1},\gamma_{2}) \;=\;\mathsf{Synth}\bigl(V_{\mathsf{pri}},\ell\bigr)
                \;=\;\mathsf{Gen}\bigl(z_{\mathsf{thresh}},z_{\mathsf{action}},z_{\mathsf{ext}}\bigr)
\]
yields two innocuous pseudo‐conditions \(\gamma_{1},\gamma_{2}\) that mimic standard contract instructions. Masking the payload follows equations \ref{mask} and \ref{mask_a}
%\[
%b = m \oplus \gamma_{1}\oplus \gamma_{2}\oplus k_{\mathsf{stego}}
%\]
is then embedded into a benign cover object \(o\), such as transaction metadata or state‐variable annotations, producing the stego‐object following equation \ref{embed}.

%\[
%s \;=\;\mathsf{Enc}\bigl(b,\,k_{\mathsf{stego}},\,o\bigr)\!.
%\]
To decentralize risk, \(\gamma_{1}\) and \(\gamma_{2}\) are dispatched via two non‐colluding ledgers, while \(s\) resides on a primary chain. Reconstruction requires collating all three channels and applying the inverse mapping \(\mathsf{Dec}(k_{\mathsf{stego}},s,o)\) to recover \(b\) and consequently \(m\). This layered distribution ensures that no single blockchain segment reveals sufficient information to infer the embedded triggers. In practice, such an approach can conceal conditional execution parameters—thresholds for fund release, multi‐signature requirements, or oracle‐based triggers—without altering the contract’s public interface or compromising its auditability.

\bibliographystyle{IEEEtran}
\bibliography{bare_arXiv}

% insert where needed to balance the two columns on the last page with
% biographies
%\newpage

% You can push biographies down or up by placing
% a \vfill before or after them. The appropriate
% use of \vfill depends on what kind of text is
% on the last page and whether or not the columns
% are being equalized.

%\vfill

% Can be used to pull up biographies so that the bottom of the last one
% is flush with the other column.
%\enlargethispage{-5in}

% that's all folks
\end{document}